\pdfoutput=1
\documentclass[11pt,a4paper]{article}

\usepackage[margin=1in]{geometry}
\usepackage{amsfonts}
\usepackage{amsmath,amssymb,amsthm,mathtools,booktabs}
\usepackage{authblk}
\usepackage{enumitem}
\usepackage{etoolbox}
\usepackage{microtype}
\usepackage{hyperref}

\setenumerate[1]{label={(\alph{enumi})}}
\AtBeginEnvironment{thebibliography}{\footnotesize\setlength{\itemsep}{0.1em}}

\theoremstyle{plain}
\newtheorem{theorem}{Theorem}
\newtheorem{proposition}[theorem]{Proposition}
\newtheorem{lemma}[theorem]{Lemma}
\newtheorem{corollary}[theorem]{Corollary}
\newtheorem{conjecture}[theorem]{Conjecture}
\theoremstyle{definition}
\newtheorem{remark}[theorem]{Remark}
\newtheorem{observation}[theorem]{Observation}
\newtheorem{definition}{Definition}

\newcommand{\OPT}{\mathrm{OPT}}

\newcommand{\cost}{\mathrm{cost}}
\newcommand{\Off}{\mathrm{Off}}
\newcommand{\FtP}{\textnormal{\textsc{FtP}}}
\newcommand{\Sd}{\mathcal{A}}
\newcommand{\eps}{\varepsilon}
\newcommand{\R}{\mathbb{R}}

\newif\ifanon
\anonfalse

\title{The Price of Near-Perfect Consistency in\\ Online Metric Matching with Predictions}
\ifanon
  \author{\normalsize\itshape Author name withheld for double-blind review}
  \affil{}
  \newcommand{\pdfauthorstring}{}
\else
  \author{Zaahir Ali}
  \affil{University of Warwick}
  \newcommand{\pdfauthorstring}{Zaahir Ali}
\fi
\date{}
\hypersetup{
  colorlinks=true,
  linkcolor=blue,
  citecolor=blue,
  urlcolor=blue,
  pdftitle={The Price of Near-Perfect Consistency in Online Metric Matching with Predictions},
  pdfauthor={\pdfauthorstring},
  pdfsubject={Online algorithms with predictions},
  pdfkeywords={online metric matching, learning-augmented algorithms, consistency, robustness}
}

\begin{document}

\maketitle

\begin{abstract}
We study online metric matching with per-request action predictions. On the real line, every
deterministic $(1+\eps)$-consistent algorithm has robustness at least
$1+\sum_{j=1}^{k-1}2^{j+1}/\eps^j$, and we give a deterministic algorithm for arbitrary metrics
with the same leading term. Thus, for every fixed $k$,
\[
\lim_{\eps\downarrow0}\eps^{k-1}R_k^{\R}(1+\eps)
=\lim_{\eps\downarrow0}\eps^{k-1}R_k(1+\eps)=2^k.
\]
The comparison is uniform up to an absolute constant for $0<\eps\le1/(k-1)$. We determine the
two-server trade-off in both settings and the real-line three-server value
$1+4/\eps+8/\eps^2$ for $0<\eps\le\sqrt{13}-3$. When the predicted labels are distinct, the
algorithm pays at most $(1+\eps)$ times the cost of the predicted matching.

For randomised algorithms, the fixed-$k$ dependence remains
$\Theta_k(1/\eps^{k-1})$. Uniformly in $k$, robustness is at most
$M_0(\eps)\rho_k^0$, where $\rho_k^0$ is the optimal strict prediction-free randomised ratio on the
real line and $M_0(\eps)=(2e+o(1))e^{2/\eps}$. For every $\eta>0$, a lower bound
$\exp((2-\eta)/\eps)$ holds once $k\ge C_\eta/\eps$.

The randomised upper bound follows from a comparison theorem for two online algorithms whose states
can be coupled at a cost bounded by their cumulative costs. For every fixed $c>1$, the least
comparison factor $M^\star(c,\eps)$ under these assumptions satisfies
\[
\lim_{\eps\downarrow0}\eps\log M^\star(c,\eps)=2.
\]
The guarantee is strictly multiplicative and has no diameter-dependent additive term. Irrevocable
metric matching and metrical task systems satisfy the assumptions, and the exponent $2$ is optimal
under them.
\end{abstract}

\section{Introduction}

In online metric matching, a collection
$S=\{s_1,\dots,s_k\}$ of labelled \emph{server} copies in a metric space $(X,d)$ is known in advance.
Requests $r_1,\dots,r_k\in X$ arrive one by one. Each request must be matched immediately and
irrevocably to a currently free server, at cost $d(r_t,\,\cdot\,)$. The benchmark $\OPT$ is the
offline minimum-cost perfect matching. Deterministic algorithms cannot beat competitive ratio
$2k-1$ in general metrics~\cite{KP93,KMV94}. On the real line, the best known deterministic algorithm
achieves $O(\log k)$~\cite{Rag18}, and every randomised algorithm has ratio
$\Omega(\sqrt{\log k})$~\cite{PS23}.

In the \emph{action-prediction}
model of Antoniadis, Coester, Eli\'a\v{s}, Polak and Simon~\cite{ACEPS23}, each request $r_t$
arrives with a predicted server $p_t\in S$, the action an offline optimum would take. An
algorithm is \emph{$c$-consistent} if its cost is at most $c\cdot\OPT$ whenever the predictions
are perfect (they agree with an offline optimum), and \emph{$r$-robust} if its cost is at most
$r\cdot\OPT$ for arbitrary predictions. Following the prediction blindly ($\FtP$) is
$1$-consistent and has unbounded robustness. The combiner of~\cite{ACEPS23} gives consistency $9$
with robustness $O(\log k)$ on the real line. To our
knowledge, no previous lower bound addresses the intermediate consistency band in this
per-request action-prediction model. Canonne, Chen and Mestre~\cite{CCM25} give a related lower
bound at exact consistency in a broader advice setting. Their endpoint construction does not yield
a nontrivial bound once the consistency factor exceeds $1$.

We address the question at two levels. For metric matching, we determine how robustness grows as a
deterministic or randomised algorithm approaches perfect consistency. For the randomised upper
bound, we isolate the comparison problem that remains after the matching states can be coupled: how
closely can one follow a prediction-based trajectory while retaining a bounded comparison with a
competitive reference? The comparison theorem answers this question for any online problem that
satisfies the same switching condition. Metric matching and metrical task systems both satisfy it.

We determine the near-perfect-consistency exponent for every fixed $k$. The bounds are exact for
two servers in both settings and for three servers on the real line when
$0<\eps\le\sqrt{13}-3$. For fixed $k$, every $(1+\eps)$-consistent deterministic algorithm on the
real line needs robustness of order $1/\eps^{k-1}$, with leading coefficient $2^k$. Each additional
server therefore contributes a factor asymptotic to $2/\eps$ to the final term. In particular, an
additive $O(1/\eps)+O(\log k)$ bound is impossible.

For randomised algorithms the fixed-$k$ exponent remains $k-1$ in expectation. When $\eps$ is fixed
and $k$ grows, robustness is at most
$M_0(\eps)\rho_k^0$, where $\rho_k^0$ is the optimal strict prediction-free ratio and
$M_0(\eps)=(2e+o(1))e^{2/\eps}$. A lower bound $e^{(2-o(1))/\eps}$ holds once
$k=\Theta(1/\eps)$. The upper bound follows by coupling the states of two online algorithms and
has no diameter-dependent additive loss. Metrical task systems and irrevocable metric matching
satisfy the required assumptions. Let $R_k(c)$ be the infimum robustness over deterministic
algorithms that are defined, $c$-consistent, and robust on every metric space. Let $R_k^{\R}(c)$ be the
corresponding infimum when the metric is restricted to the real line. Since every general-metric
algorithm applies on the real line,
\[
R_k^{\R}(c)\le R_k(c).
\]

\subsection{Results}

\paragraph{Two servers.}
\begin{theorem}[two-server trade-off; Theorem~\ref{thm:k2}]
For every $\eps\in(0,2]$,
$R_2^{\R}(1+\eps)=R_2(1+\eps)=(4+\eps)/\eps$. On the real line, the optimum is achieved by a
threshold algorithm that selects the nearer server when the request is within $\eps/(2+\eps)$ of
it, and selects the predicted server otherwise.
\end{theorem}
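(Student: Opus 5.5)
The two-server game has only one real decision: at the first request $r_1$ the algorithm picks one of $s_1,s_2$, and the second request $r_2$ is then forced onto the remaining server. Throughout, $a\le b$ denote the distances from $r_1$ to its nearer server $s_n$ and its farther server $s_f$, and $D=d(s_1,s_2)\le a+b$. We have $\OPT=\min\{a+d(r_2,s_f),\,b+d(r_2,s_n)\}$. I would prove the lower bound on the real line and the matching upper bound on arbitrary metrics. Together with $R_2^{\R}\le R_2$, this gives both equalities.

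\emph{Lower bound on $\R$.} Place $s_1=0$ and $s_2=1$, and put $r_1$ at $a\in(0,1/2)$ with prediction $p_1=s_2$.
\begin{itemize}
\item If the algorithm takes the nearer server $s_1$ against the prediction, the adversary sends $r_2=0$. Then matching $r_1\to s_2$, $r_2\to s_1$ costs $1-a$, whereas the alternative costs $1+a$. So the prediction is perfect, $\OPT=1-a$, and the algorithm pays $1+a$. Consistency $1+\eps$ therefore forces $(1+a)/(1-a)\le1+\eps$, i.e.\ $a\le a^*:=\eps/(2+\eps)$.
\item If the algorithm follows the prediction to $s_2$, the adversary sends $r_2=1$. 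Then $\OPT=a$ while the algorithm pays $(1-a)+1$, a ratio of $(2-a)/a$.
\end{itemize}
Hence any $(1+\eps)$-consistent algorithm must follow the prediction for every $a>a^*$. Letting $a\downarrow a^*$ gives robustness at least $2/a^*-1=(4+\eps)/\eps$.

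\emph{Upper bound on an arbitrary metric.} Guided by the line instance, where $b/a=2/\eps$ exactly at $a=a^*$, I would use the following rule. Take the nearer server if $p_1=s_n$, or if $p_1=s_f$ and $b\ge 2a/\eps$; otherwise follow the prediction. On the line with $D=1$ and $r_1$ between the servers this is exactly the threshold $a\le\eps/(2+\eps)$ of the statement. The analysis uses only the triangle inequality $d(r_2,s)\le d(r_2,s')+D$ and $D\le a+b$. Write $y=d(r_2,s_f)$ and $z=d(r_2,s_n)$.
\begin{itemize}
\item \emph{Consistency.} The only case that matters is when the algorithm deviates: $p_1=s_f$ is optimal, so $\OPT=b+z$, and $b\ge 2a/\eps$. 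The algorithm pays $a+y\le 2a+b+z$, so the ratio is at most $1+2a/b\le1+\eps$. When the algorithm agrees with the prediction, consistency is $1$.
\item \emph{Robustness when the algorithm takes $s_n$.} It pays $a+y$. If $\OPT=a+y$ the ratio is $1$; otherwise $\OPT=b+z$ and the same bound gives ratio at most $1+2a/b\le3\le(4+\eps)/\eps$, using $\eps\le2$.
\item \emph{Robustness when the algorithm follows $p_1=s_f$ with $b<2a/\eps$.} It pays $b+z\le a+2b+y$. If $\OPT=a+y$ the ratio is at most $1+2b/a<1+4/\eps$; if $\OPT=b+z$ the ratio is $1$.
\end{itemize}
This gives $(4+\eps)/\eps$ exactly.

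The main obstacle is pinning down the right general-metric threshold. It must be phrased as $b/a\ge2/\eps$ rather than via positions on a line, and one must check that the two crude triangle-inequality bounds, $1+2a/b$ and $1+2b/a$, meet exactly at the lower-bound constant. The restriction $\eps\le2$ enters only to absorb the case ratio $3$. I would also double-check that the lower-bound instance is legitimate when $r_1$ lies outside the segment, though it is not needed there since the adversary chooses $r_1$.
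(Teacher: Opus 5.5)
Your proof is correct. The lower bound is the paper's construction. One small repair: restricting to $a\in(0,1/2)$ leaves nothing above $a^*=1/2$ when $\eps=2$. Allow $a\in(a^*,1)$ as the paper does, since neither completion uses that $s_1$ is the nearer server.

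The upper bound takes a different route. The paper checks the real-line threshold algorithm $P_\eps$ directly against the two suprema in \eqref{eq:phipsi}. For general metrics it specialises Theorem~\ref{thm:robustness} to $k=2$, that is, it runs the $k$-server algorithm $\Sd_\eps$ with its first-step rejection and terminal alternating-path estimates. The test $\Sd_\eps$ uses for deviating to the nearer server, $a+D-b\le\eps b$, involves the true server distance $D$. You instead build a dedicated two-server rule whose test $2a\le\eps b$ replaces $D$ by its triangle bound $a+b$. A single triangle-inequality computation then covers the line and arbitrary metrics at once. It also explains the constant: the case bounds $1+2a/b$ and $1+2b/a$ balance exactly at $b/a=2/\eps$.

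The three rules coincide for $r_1$ between the servers on the line. Off the segment, yours sometimes follows the prediction where $P_\eps$ and $\Sd_\eps$ take the zero-excess nearer server, and your analysis shows this is harmless. Your ratio formulation is also the safe reading of the statement's informal threshold. Applying ``nearer server iff within distance $\eps/(2+\eps)$'' literally off the segment fails: take $r_1=-t$ with $t\downarrow\eps/(2+\eps)$, $p_1=s_2$, and $r_2=1$, and the ratio tends to $3+4/\eps$.

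The paper's route buys uniformity, since the same algorithm and potential argument give bounds for every $k$. Yours is shorter and fully self-contained for $k=2$.
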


\paragraph{Lower bound for general $k$.}
The $k=2$ dependence $\Theta(1/\eps)$ is the first term of a longer sum.

\begin{theorem}[lower bound for general $k$; Theorem~\ref{thm:lower-general}]\label{intro:lower-general}
For every $k\ge2$ and every $\eps>0$, every $(1+\eps)$-consistent deterministic algorithm for $k$
servers on the real line has robustness at least
\[
L_k(\eps)\;=\;1+\sum_{j=1}^{k-1}\frac{2^{\,j+1}}{\eps^{\,j}}
\;=\;1+\frac4\eps+\frac8{\eps^2}+\dots+\frac{2^k}{\eps^{k-1}}.
\]
\end{theorem}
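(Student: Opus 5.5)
The plan is to induct on $k$ using an adaptive adversary on the line. The target bound satisfies the recurrence
\[
L_2(\eps)=1+\frac4\eps,\qquad L_k(\eps)=\frac2\eps\,L_{k-1}(\eps)+1+\frac2\eps ,
\]
since $\tfrac2\eps(L_{k-1}-1)$ shifts the sum $\sum_{j=1}^{k-2}2^{j+1}/\eps^j$ to $\sum_{j=2}^{k-1}$. So it suffices to build, from a $(k-1)$-server instance forcing ratio $L_{k-1}$, a $k$-server instance that amplifies it by $2/\eps$ and adds $1+2/\eps$. The base case $k=2$ is the lower half of the two-server theorem. I would reprove it directly, since the same gadget drives the induction.

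\textbf{Gadget.} Place two servers $a$ at $0$ and $b$ at $1$. A request arrives at a point $x$ close to $a$, with prediction $b$. The adversary keeps two continuations available.
\begin{enumerate}
\item In the first continuation the prediction is correct. The next request is at $0$, so the optimum matches $x\to b$ and $0\to a$ at cost about $1-x$. Serving $x$ from $a$ then costs about $1+x$ overall, so $(1+\eps)$-consistency forbids deviating when $x\gtrsim \eps/(2+\eps)$.
\item In the second continuation the prediction was wrong. The remaining request lands far to the right, so the optimum uses $a$ for $x$ at cost $x$, while the algorithm that followed the prediction pays about $1-x$ and then a further $\approx 1$.
\end{enumerate}
The adversary chooses $x$ just above the threshold that consistency forces. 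This gives ratio $\approx(1+2/\eps)+\text{const}$, which should match $(4+\eps)/\eps$ after optimising $x$. I expect the threshold to come out as $\eps/(2+\eps)$, consistent with the two-server theorem.

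\textbf{Induction.} For $k$ servers I would put one extra server $a$ at $0$ and embed a scaled copy of the $(k-1)$-server hard instance far to one side. The first request comes near $a$, predicted to a server of the cluster.
\begin{itemize}
\item If the algorithm deviates and takes $a$, the adversary completes consistently with the predictions. The consistency violation contradicts $(1+\eps)$-consistency, exactly as in the gadget.
\item If the algorithm follows the prediction, it has paid a small, forced amount $\approx 1$ on an instance where $\OPT$ is $\approx\eps/2$. The remaining $k-1$ free servers (including $a$) now face the inductive instance.
\end{itemize}
The subinstance is scaled so that its own optimum is comparable to the $\eps/2$ already incurred. Then the inherited ratio $L_{k-1}$ on it is multiplied by $2/\eps$ relative to the global $\OPT$. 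The additive $1+2/\eps$ comes from the first step.

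\textbf{Main obstacle.} The delicate point is that the residual instance must still be \emph{consistent-constrained}. While predictions remain perfect so far, the algorithm must be $(1+\eps)$-consistent on the whole sequence, not on the suffix. The slack consumed in the first step changes the effective consistency available later. I would handle this by stating the inductive hypothesis in a stronger, additive form: for any prefix cost budget, an algorithm whose extra cost over $\OPT$ so far is $\delta$ can be driven to cost at least $L_{k-1}$ times the suffix optimum, minus $O(\delta)$. Choosing scales geometrically separated (ratio tending to infinity) makes these cross-terms vanish in the limit. It also lets the optimum of the combined instance decompose as the sum of the prefix and suffix optima, because a far-away cluster cannot profitably be used to serve near requests.
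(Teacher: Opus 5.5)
\textbf{There is a genuine gap in the amplification step.} Your recurrence $L_k=\frac2\eps L_{k-1}+1+\frac2\eps$ is right, but the inductive step you describe cannot realise it.

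You explicitly want the optimum of the combined instance to split as prefix optimum plus suffix optimum, with the $(k-1)$-server copy at a far-separated scale. Under any such split, if the algorithm pays $C_1+C_2$ against $O_1+O_2$, then $(C_1+C_2)/(O_1+O_2)\le\max\{C_1/O_1,C_2/O_2\}$. So no product with $L_{k-1}$ can arise. Your own numbers show this: the first step costs $\approx1$ against $\approx\eps/2$, and the copy is scaled to optimum $\approx\eps/2$ and forced to ratio $L_{k-1}$. The total ratio is then about $(1+L_{k-1}\eps/2)/\eps=1/\eps+L_{k-1}/2$. Getting ``$L_{k-1}$ multiplied by $2/\eps$ relative to the global $\OPT$'' would need the global optimum to be about $\frac\eps2$ times the suffix optimum, which a decomposing optimum forbids.

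Scale ratios tending to infinity are also unavailable. In a chain of this kind, sending the current request to the left server costs excess about twice the current scale. $(1+\eps)$-consistency forbids this only while the next predicted server lies within roughly $\frac2\eps$ times that scale. In the paper this is $a_j<x+\frac2\eps a_{j-1}$, where $a_j-x$ is the accumulated predicted cost. That cap is exactly where the factor $\frac2\eps$ per server comes from. Your base case needs the same care: the second request must land exactly on the consumed server $b$, since a request far to the right of $b$ drives the ratio to $1$.

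\textbf{The missing idea} is to exploit the difference between two residual server sets in the robustness witness. The algorithm's residual set still contains the left server. The optimum's residual set, having served the first request from the left server, still contains the predicted one.

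The paper places servers at $0=a_0<a_1=1<\dots<a_{k-1}$ and issues the chain $(x,a_1),(a_1,a_2),\dots,(a_{k-2},a_{k-1})$, each request sitting on the server predicted one step earlier. At stage $j$, the perfect completion by $(0,a_0)$ and exact hits has optimum $a_j-x$. Using $a_0$ at that stage costs at least $2x$ more (stage $1$) or $2a_{j-1}$ more (stage $j\ge2$), so $a_0$ is never used in the prefix. Whatever else the algorithm does, Lemma~\ref{lem:residual} bounds its prefix cost below by $a_{k-1}-x$. The final request $(a_{k-1},a_0)$ is then forced to $a_0$ at cost $a_{k-1}$. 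The optimum instead matches $x\to a_0$ and each later request to the server at its own location, paying only $x$.

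Read inductively, this \emph{appends} a top stage instead of prepending a gadget. Replace the final request $(a_{k-2},a_0)$ of the $(k-1)$-server instance by $(a_{k-2},a_{k-1}),(a_{k-1},a_0)$ with $a_{k-1}\approx x+\frac2\eps a_{k-2}$. This multiplies the forced terminal cost by about $\frac2\eps$ while $\OPT=x$ is unchanged, giving exactly your recurrence.

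The construction also handles a case your plan leaves open: deviations to servers other than $a$ and the predicted one. These are harmless, because only the first use of $a_0$ matters.
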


Each additional server contributes a factor $2/\eps$ to the final term. The bound is tight at
$k=2$.

\begin{corollary}[no additive trade-off]\label{intro:additive}
There is no deterministic algorithm that is $(1+\eps)$-consistent with robustness
$O(1/\eps)+O(\log k)$, and none with robustness $f(\eps)+g(k)$ for any
$f(\eps)=o(1/\eps^2)$. Consistency $9$ with robustness $O(\log k)$ is
achievable~\cite{ACEPS23}.
\end{corollary}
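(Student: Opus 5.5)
The corollary follows from the general-$k$ lower bound (Theorem~\ref{intro:lower-general}) by comparing growth rates in $\eps$ at a fixed $k$; no new construction is needed. We argue by contradiction. Suppose some deterministic algorithm family (possibly depending on $\eps$ and $k$) is $(1+\eps)$-consistent with robustness at most $f(\eps)+g(k)$ for all $k\ge2$ and all small $\eps>0$, where $f(\eps)=o(1/\eps^2)$ as $\eps\downarrow0$. The $O(1/\eps)+O(\log k)$ case is included: it has the form $f(\eps)+g(k)$ with $f(\eps)\le C/\eps=o(1/\eps^2)$ and $g(k)=C\log k$.

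First, fix $k=3$; any fixed $k\ge3$ would do. Every metric-matching algorithm, in particular one defined on the real line, is subject to Theorem~\ref{intro:lower-general}. Hence
\[
f(\eps)+g(3)\;\ge\;L_3(\eps)=1+\frac4\eps+\frac8{\eps^2}\qquad\text{for all }\eps>0 .
\]
Multiplying by $\eps^2$ and letting $\eps\downarrow0$ gives the following. The left side satisfies $\eps^2 f(\eps)\to0$ because $f=o(1/\eps^2)$, and $\eps^2 g(3)\to0$ because $g(3)$ is a fixed constant. The right side tends to $8$. This is a contradiction, so no such algorithm exists.

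The only point needing care is the quantifier order in the phrase ``robustness $f(\eps)+g(k)$''. The claim is about a single pair of functions valid for all $k$ and $\eps$. It is therefore legitimate to freeze $k$, which makes $g(k)$ a constant, and then send $\eps\to0$. Also, $O(\cdot)$ and $o(\cdot)$ are read as $\eps\downarrow0$ (and $k\to\infty$ for $g$), with constants independent of the other parameter. Using larger fixed $k$ even shows that $f=o(1/\eps^{k-1})$ is impossible for every $k$, because $\eps^{k-1}L_k(\eps)\to2^k$.

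The final sentence of the corollary is just a citation of the combiner from~\cite{ACEPS23}. It serves as a contrast: additive $O(\log k)$ robustness is possible at constant consistency, but not as consistency tends to $1$.
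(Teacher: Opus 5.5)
Your argument is correct and matches the paper's proof. Both fix a single $k\ge3$, invoke the $8/\eps^2$ term of the general-$k$ lower bound (Corollary~\ref{cor:values}), and observe that $f(\eps)+g(k)$ with $f=o(1/\eps^2)$ and $g(k)$ frozen falls below this bound as $\eps\downarrow0$; the $O(1/\eps)+O(\log k)$ case is subsumed exactly as you describe.
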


For every fixed $c<3$, Theorem~\ref{intro:lower-general} gives exponential growth in $k$. At
$c=3$ it gives $R_k^{\R}(3)\ge2k-1$. The endpoint behaviour beyond this lower bound remains open.

\paragraph{General metrics.}
A single deterministic algorithm attains the lower-bound exponent for every fixed $k$ on arbitrary
metrics.

\begin{theorem}[general-metric upper bound; Theorems~\ref{thm:consistency},
\ref{thm:fixedkleading}, and Lemma~\ref{lem:closedform}]
For every $k$ and every metric space, the deterministic algorithm $\Sd_\eps$ is
$(1+\eps)$-consistent. For $\eps\in(0,1]$ it satisfies
\[
\cost(\Sd_\eps)
\le\left(1+2\left(\frac2\eps+3\right)^{k-1}\right)\OPT
\]
on every input. If the predicted server labels are pairwise distinct, then
\[
\cost(\Sd_\eps)
\le(1+\eps)\sum_{t=1}^k d(r_t,p_t).
\]
Consequently, for every fixed $k$,
\[
\lim_{\eps\downarrow0}\eps^{k-1}R_k^{\R}(1+\eps)
=\lim_{\eps\downarrow0}\eps^{k-1}R_k(1+\eps)=2^k.
\]
\end{theorem}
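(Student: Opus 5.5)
The statement combines three claims about the algorithm $\Sd_\eps$: consistency, a robustness bound, and a $(1+\eps)$ bound against the predicted matching. The limit identity then follows from these together with Theorem~\ref{intro:lower-general}. I would take them in that order. The algorithm I have in mind is a recursive threshold rule generalising the two-server threshold algorithm. When request $r_t$ arrives with prediction $p_t$, $\Sd_\eps$ keeps a cumulative budget of cost incurred versus the cost of following the prediction. It deviates to a nearer free server $s$ only when $d(r_t,s)$ is at most roughly $\tfrac{\eps}{2+\eps}$ times the relevant predicted cost, charged against a potential. If the predicted label is already used, it reroutes along the chain of substitutions, as in the residual matching of the predicted assignment.

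\textbf{Consistency and the distinct-label bound.} When the predicted labels are pairwise distinct, the predictions form a perfect matching $\{(r_t,p_t)\}$. I would prove $\cost(\Sd_\eps)\le(1+\eps)\sum_t d(r_t,p_t)$ by a charging argument. Each deviation from $p_t$ to $s$ creates a displaced obligation: a later request predicted to go to $s$ must be served elsewhere. By the triangle inequality, the extra cost of resolving such a chain is at most twice the deviation cost, and the threshold is chosen so that this extra cost is at most $\eps$ times the predicted costs that triggered the deviations. Consistency then follows, because perfect predictions are in particular distinct labels forming an optimal matching, so $\sum_t d(r_t,p_t)=\OPT$.

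\textbf{Robustness.} For arbitrary predictions I would argue by induction on $k$, proving
\[
\cost(\Sd_\eps)\le\Bigl(1+2\bigl(\tfrac2\eps+3\bigr)^{k-1}\Bigr)\OPT .
\]
The base case $k=1$ is trivial. For the inductive step, I would split the sequence at the first request where $\Sd_\eps$ follows a prediction that disagrees with $\OPT$'s server. Before that point the threshold rule gives a ratio of $1+O(1/\eps)$ on the served pair. After it, the remaining instance has $k-1$ servers, and $\OPT$ on it is at most $\OPT$ plus the distance between the two mismatched servers. That distance is bounded, again by the threshold, by $(2/\eps+O(1))$ times the cost already paid. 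This yields a recursion of the form $R_k\le(2/\eps+3)R_{k-1}+O(1)$, and unfolding it gives the closed form, presumably via Lemma~\ref{lem:closedform}. I expect this step to be the main obstacle: making the reduction to $k-1$ servers legitimate when prediction errors interleave, and keeping the per-level multiplier at $2/\eps+3$ rather than something larger. To handle it, I would first try a potential function equal to the weight of a min-cost matching between $\Sd_\eps$'s free servers and $\OPT$'s free servers, with level-dependent weights.

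\textbf{Limit identity.} Multiplying by $\eps^{k-1}$ and letting $\eps\downarrow0$, the upper bound gives $\limsup_{\eps\downarrow0}\eps^{k-1}R_k(1+\eps)\le 2\cdot 2^{k-1}=2^k$. Theorem~\ref{intro:lower-general} gives $\liminf_{\eps\downarrow0}\eps^{k-1}R^{\R}_k(1+\eps)\ge2^k$, since $\eps^{k-1}L_k(\eps)\to2^k$. Combined with $R_k^{\R}\le R_k$, both limits exist and equal $2^k$.
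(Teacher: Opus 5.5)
Your limit step matches the paper: $\eps^{k-1}\bigl(1+2(2/\eps+3)^{k-1}\bigr)\to2^k$, combined with Theorem~\ref{thm:lower-general} and $R_k^{\R}\le R_k$. Your consistency sketch has the right shape, but it has to be carried out for $\Sd_\eps$ as defined, not for a loosely described threshold rule. $\Sd_\eps$ tests the assignment excess $\Delta_t(w)=d(r_t,w)+d(w,p_t)-d(r_t,p_t)$ against the cumulative budget $B_t=\eps H_t-\sigma_{t-1}$, where $h_t=d(r_t,p_t)$ and $H_t=\sum_{j\le t}h_j$. The ``chains of displaced obligations'' become precise through the potential $\Phi_t$, the minimum matching cost between the algorithm's free set and $S\setminus\{p_1,\dots,p_t\}$. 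A four-case triangle-inequality check gives $(g_t-h_t)+(\Phi_t-\Phi_{t-1})\le\Delta_t(w_t)\mathbf 1[w_t\ne b_t]$. Telescoping with $\Phi_0=\Phi_k=0$ then yields $\cost-H_k\le\sigma_k\le\eps H_k$.

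The genuine gap is robustness. Induction on $k$ is not available, for two reasons.
\begin{itemize}
\item After your split point, $\Sd_\eps$ is not a fresh run on a $(k-1)$-server instance. Its later choices depend on the carried-over budget (both $\eps H_t$ and $\sigma_{t-1}$ include the prefix), on the residual matching $\Pi_{t-1}$ that reroutes an already-consumed prediction to $c_t$, and on whether a label has repeated. Later predictions may even name consumed servers. So the inductive hypothesis cannot be applied to the suffix, and the level-weighted potential you mention is never supplied.
\item The recursion itself cannot be right. From $R_1=1$, $R_k\le(2/\eps+3)R_{k-1}+O(1)$ gives $R_2\le2/\eps+O(1)$ and leading coefficient $2^{k-1}$. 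This contradicts the lower bound $(4+\eps)/\eps$ of Theorem~\ref{thm:k2} and the coefficient $2^k$ of Theorem~\ref{thm:lower-general}, both of which hold for every $(1+\eps)$-consistent algorithm, including $\Sd_\eps$.
\end{itemize}

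The paper instead analyses a single run request by request, with $q=2/\eps$.
\begin{itemize}
\item An alternating path in the union of the offline matching and the algorithm's first $t-1$ edges yields a free server $v_t$ with $d(r_t,v_t)\le D_t\le\OPT+T_{t-1}$.
\item If $v_t$ is $b_t$ or admissible, then $a_t\le D_t$.
\item Otherwise, rejecting $v_t$ against the \emph{cumulative} budget bounds the entire predicted cost: $H_t<qD_t+(q/2)\sigma_{t-1}$.
\item In the rerouted case, the potential inequality $T_t-H_t+\Phi_t\le\sigma_t$ controls $d(r_t,c_t)\le h_t+\Phi_{t-1}$, giving $T_t\le H_t+\sigma_{t-1}$.
\end{itemize}
This produces the coupled recurrence $T_t\le(q+1)T_{t-1}+(1+q/2)\sigma_{t-1}+q\OPT$ and $\sigma_t\le\sigma_{t-1}+2T_{t-1}+2\OPT$. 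Its solution satisfies $T_t\le(q+3)^t\OPT$ (Lemma~\ref{lem:closedform}).

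Finally, the last request has a single free server, which the alternating path reaches. So $a_k\le\OPT+T_{k-1}$ and $\cost\le\OPT+2T_{k-1}$. This terminal doubling is absent from your scheme, and it is the source of the factor $2$ in $1+2(2/\eps+3)^{k-1}$, hence of the constant $2^k$.
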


If the predicted labels are distinct, they define a perfect matching of cost
$\sum_t d(r_t,p_t)=\OPT+\eta$, where $\eta\ge0$, and the second comparison becomes
$\cost(\Sd_\eps)\le(1+\eps)(\OPT+\eta)$, giving linear degradation in this error measure.

\begin{corollary}[uniform range]\label{intro:uniform}
For every $k\ge2$ and every $0<\eps\le1/(k-1)$,
\[
\frac{2^k}{\eps^{k-1}}
\le R_k^{\R}(1+\eps)
\le R_k(1+\eps)
\le1+e^{3/2}\frac{2^k}{\eps^{k-1}}.
\]
Thus both optimal robustness values are of order $2^k/\eps^{k-1}$ on this range, with constants
independent of $k$.
\end{corollary}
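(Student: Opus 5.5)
The corollary is a sandwich, and I would prove its three inequalities separately from the results already stated.

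\emph{Lower bound.} This comes directly from Theorem~\ref{intro:lower-general}. The sum $L_k(\eps)=1+\sum_{j=1}^{k-1}2^{j+1}/\eps^j$ has the nonnegative term $2^k/\eps^{k-1}$ as its $j=k-1$ summand, so
\[
R_k^{\R}(1+\eps)\ge L_k(\eps)\ge 2^k/\eps^{k-1}.
\]

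\emph{Middle inequality.} The bound $R_k^{\R}(1+\eps)\le R_k(1+\eps)$ is the observation made before the results: every general-metric algorithm also applies on the real line.

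\emph{Upper bound.} I would invoke the general-metric upper bound theorem. The range $0<\eps\le1/(k-1)$ lies inside $(0,1]$, so that theorem gives
\[
R_k(1+\eps)\le 1+2\left(\frac2\eps+3\right)^{k-1}.
\]
Next, factor out the leading term:
\[
2\left(\frac2\eps+3\right)^{k-1}=\frac{2^k}{\eps^{k-1}}\left(1+\frac{3\eps}{2}\right)^{k-1}.
\]
Using $1+x\le e^x$ together with $\eps\le 1/(k-1)$, the correction factor satisfies
\[
\left(1+\frac{3\eps}{2}\right)^{k-1}\le \exp\!\left(\frac{3\eps(k-1)}{2}\right)\le e^{3/2}.
\]
This yields the stated bound $1+e^{3/2}\,2^k/\eps^{k-1}$.

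The final sentence of the corollary then follows because both $R_k^{\R}(1+\eps)$ and $R_k(1+\eps)$ lie between $2^k/\eps^{k-1}$ and $1+e^{3/2}\,2^k/\eps^{k-1}$. Since $2^k/\eps^{k-1}\ge 4$ for $k\ge 2$ and $\eps\le1$, the additive $1$ is absorbed, so the ratio of upper to lower bound is at most $\tfrac14+e^{3/2}$, a constant independent of $k$.

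The only step needing care is checking that the range condition $\eps\le1/(k-1)$ is exactly what keeps the correction factor $(1+3\eps/2)^{k-1}$ bounded. The rest is direct substitution into the earlier theorems, so I expect no real obstacle.
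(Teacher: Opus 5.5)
Your proposal is correct and follows the paper's proof: the lower bound is the final summand $2^k/\eps^{k-1}$ of $L_k(\eps)$, the middle inequality comes from the definitions, and the upper bound uses $1+2(2/\eps+3)^{k-1}$ (Lemma~\ref{lem:closedform}, applicable since $\eps\le1/(k-1)\le1$) together with $(1+3\eps/2)^{k-1}\le e^{3\eps(k-1)/2}\le e^{3/2}$. Your extra step absorbing the additive $1$, which gives a ratio of at most $\tfrac14+e^{3/2}$, is a harmless way of making the ``constants independent of $k$'' claim explicit.
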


\paragraph{Three servers.}
A separate analysis of the first request gives the leading coefficient when $k=3$.

\begin{theorem}[$k=3$ bounds; Theorems~\ref{thm:k3ub} and~\ref{thm:k3exact}]\label{intro:k3}
For every $\eps>0$ the expression $1+4/\eps+8/\eps^2$ is a lower bound on $R_3^{\R}(1+\eps)$, and
it is the exact value for every $\eps$ with $0<\eps\le\sqrt{13}-3$:
\[
R_3^{\R}(1+\eps)=1+\frac4\eps+\frac8{\eps^2}
\qquad\bigl(0<\eps\le\sqrt{13}-3=0.6055\ldots\bigr).
\]
For $\sqrt{13}-3<\eps\le1$ the value lies between $1+4/\eps+8/\eps^2$ and
$1+12/\eps+8/\eps^2$. In particular, $\lim_{\eps\downarrow0}\eps^2R_3^{\R}(1+\eps)=8$. We
conjecture that the first
expression is the value throughout $0<\eps\le1$ (Conjecture~\ref{conj:k3}).
\end{theorem}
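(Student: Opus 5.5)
The lower bound $R_3^{\R}(1+\eps)\ge 1+4/\eps+8/\eps^2$ for every $\eps>0$ is the case $k=3$ of Theorem~\ref{intro:lower-general}, so nothing new is needed there. All the work is in the matching upper bound. I would build an explicit real-line algorithm for three servers that uses the two-server threshold algorithm of Theorem~\ref{thm:k2} as a black box for requests $2$ and $3$. The new ingredient is a separately analysed rule for the first request.

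\textbf{The algorithm.} At the first request $r_1$ with prediction $p_1$, let $s$ be the nearest free server. The algorithm chooses $s$ if $d(r_1,s)\le\theta\, d(r_1,p_1)$ for a threshold $\theta=\theta(\eps)$ to be optimised, and chooses $p_1$ otherwise. The remaining two-server instance is then handled by the threshold algorithm with a parameter $\eps'$. If the first decision agrees with a perfect prediction, the residual instance is again perfectly predicted, and the full budget $\eps$ may be spent on it. If the first decision deviates, its extra cost is charged against $\OPT$ using the $\theta$-rule. The residual two-server instance then has a free-server configuration differing from the predicted one by one swap, and I would bound its cost relative to $\OPT$ plus twice the length of that swap.

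\textbf{Robustness accounting.} Enumerate the order types of the three servers and $r_1$ on the line. Write the total cost as the first-step cost plus $(1+4/\eps')$ times the residual optimum, where Theorem~\ref{thm:k2} gives the factor $(4+\eps')/\eps'$. Then bound the residual optimum by $\OPT$ plus the displacement created at step one. The target is robustness
\[
1+\frac4\eps\Bigl(1+\frac2\eps\Bigr).
\]
This suggests taking $\eps'=\eps$ and a threshold of order $\eps/(2+\eps)$, mirroring the $k=2$ rule, so that the first step contributes a factor of about $1+2/\eps$ on top of the two-server bound. Consistency needs separate care: if the first step deviates under perfect predictions, the total loss across the whole instance must stay within $\eps\cdot\OPT$. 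This forces $\theta$ and $\eps'$ to share a single budget.

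\textbf{Main obstacle.} The hardest step is the case analysis of the first-step deviation, and specifically the configuration where the deviation and a later two-server deviation compound. I expect one of the resulting inequalities to reduce to $\eps^2+6\eps\le4$, whose positive root is $\sqrt{13}-3$; this would explain why exactness holds only on that range. I would first locate this binding configuration by computing the worst case for each order type and checking which constraint is tight.

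\textbf{The range $\sqrt{13}-3<\eps\le1$.} Here I would not optimise. I would take a cruder choice of parameters for which each first-step case is bounded by the cost of a deviation of at most $8/\eps$ extra units of $\OPT$. Combined with the two-server factor, this should yield $1+12/\eps+8/\eps^2$. Finally, $\lim_{\eps\downarrow0}\eps^2R_3^{\R}(1+\eps)=8$ follows because the exact formula holds for all small $\eps$.
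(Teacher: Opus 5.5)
There is a genuine gap, and it lies in the architecture rather than in the case analysis. With a fixed parameter $\eps'$, running the two-server algorithm $P_{\eps'}$ of Theorem~\ref{thm:k2} as a black box on requests $2$ and $3$ cannot reach $1+4/\eps+8/\eps^2$, whatever $\theta$ is.

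First, $\eps'\le\eps$ is forced. If $r_1$ sits on $p_1$, the residual instance is a perfect two-server instance carrying all of $\OPT$, and $P_{\eps'}$ attains ratio $1+\eps'$ there.

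Now run your algorithm on the lower-bound family itself.
\begin{itemize}
\item Place servers at $0,1,a$ and issue $(r_1,p_1)=(x,1)$ with $x$ just above $\eps/(2+\eps)$. Any $(1+\eps)$-consistent rule must follow $p_1$ here.
\item Issue $(r_2,p_2)=(1,a)$ with $a$ just below $1+2/\eps$, then $r_3=a$.
\item On the residual servers $\{0,a\}$ the normalised request is $1/a>\eps/(2+\eps)\ge\eps'/(2+\eps')$, so $P_{\eps'}$ takes $a$. The last request is then forced to $0$.
\end{itemize}
The cost is $2a-x$ against $\OPT=x$, and the ratio tends to $2(1+2/\eps)^2-1=1+8/\eps+8/\eps^2$. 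Your own accounting shows the same loss. Put $q=2/\eps$. A first-step cost of $q\OPT$, plus $(1+2q)$ times a residual optimum of $(1+q)\OPT$, gives $(1+4q+2q^2)\OPT$. This overshoots the target $1+2q+2q^2$ by $2q=4/\eps$, for every $\eps$.

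The missing idea is that the second decision must measure its admissible excess against the \emph{cumulative} predicted cost, $\eps(h_1+h_2)-\sigma_1$, not against the residual instance alone. In the example this permits deviating to $0$ as soon as $a\ge x+2/\eps$, which is exactly the boundary the lower bound exploits. Any repair of your scheme would need $\eps'$ to depend on $h_1$, and that amounts to this budget.

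This is the paper's algorithm $\Sd_\eps$. It uses the budget $B_t$, together with the residual-matching potential $\Phi$, which handles a consumed prediction after a first-step deviation; your proposal leaves that case unspecified. The paper's proof then runs as follows:
\begin{itemize}
\item It shows $a_1\le qe_1$.
\item It uses the real-line dichotomy $\sigma_1\in\{0,2a_1\}$ to prove $a_2\le q^2e_1+qe_2$ in each branch.
\item It sums with the terminal path bound $a_3\le e_3+(a_1+e_1)+(a_2+e_2)$, giving $1+2q+2q^2$.
\end{itemize}

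Your guessed inequality $\eps^2+6\eps\le4$ is the right one; it is $q^2\ge3q+1$. However, it arises only in the branch where request one used $p_2$ as a positive-excess alternative and request two falls back to $c_2$. It does not come out of a threshold-plus-black-box scheme.

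Likewise, the bound $1+12/\eps+8/\eps^2$ on $(\sqrt{13}-3,1]$ is the general-metric estimate for the same algorithm (Theorem~\ref{thm:k3ub}), not a cruder parameter choice. The limit $\eps^2R_3^{\R}(1+\eps)\to8$ needs only that $8/\eps^2+O(1/\eps)$ upper bound, not the exact formula.
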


\paragraph{Randomised algorithms.}
Against an oblivious adversary, with consistency and robustness measured in expectation, write
$R_k^{\mathrm{rand},\R}(c)$ for the corresponding optimal robustness.

\begin{theorem}[randomised bounds; Theorems~\ref{thm:randk2}, \ref{thm:randk},
and Corollary~\ref{cor:randfixedband}]
For $\eps\in(0,1]$, $R_2^{\mathrm{rand},\R}(1+\eps)=1+1/\eps$ (and the value is $2$ for
$\eps\in[1,2]$). More generally, for every fixed $k$,
\[
R_k^{\mathrm{rand},\R}(1+\eps)=\Theta_k(1/\eps^{k-1})\qquad(\eps\downarrow0).
\]
Uniformly in $k$,
\[
R_k^{\mathrm{rand},\R}(1+\eps)
 \le \min\!\left\{\frac{6^{k-1}}{\eps^{k-1}},
 e^{(2+o(1))/\eps}\,O(\log k)\right\}.
\]
Conversely, for every $\eta>0$ and all sufficiently small $\eps$, once
$k\ge C_\eta/\eps$ one has
$R_k^{\mathrm{rand},\R}(1+\eps)\ge e^{(2-\eta)/\eps}$.
The power $k-1$ remains necessary for fixed $k$. For every fixed positive $\eps$, the uniform upper
bound is subexponential in $k$.
\end{theorem}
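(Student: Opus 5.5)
Since this theorem collects several separate claims, I would prove it in four parts: the exact two-server value, the fixed-$k$ order $\Theta_k(1/\eps^{k-1})$, the uniform upper bound, and the lower bound for $k\gtrsim 1/\eps$.

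\textbf{Two servers.} For the upper bound $1+1/\eps$ I would randomise the deterministic threshold rule of the two-server trade-off theorem. On a request at distance $x$ from the nearer server $a$ and $D-x$ from the predicted server $b$, with $x<D-x$, the algorithm takes $a$ with probability $q$ and $b$ otherwise. The second request is then forced, so the expected cost is an explicit piecewise-linear function of the request geometry. I would choose $q$ as a function of $x/D$ so that two requirements hold with equality: the consistency constraint (expected cost at most $(1+\eps)\OPT$ when $b$ is optimal) and the robustness constraint (expected cost at most $(1+1/\eps)\OPT$ when $a$ is optimal). Verifying this reduces to a one-parameter inequality check. For the matching lower bound I would apply Yao's principle to a two-point family of second requests: one continuation makes the prediction correct, the other makes it adversarial. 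This gives a linear program in the probability mass that an algorithm places on following the prediction, and its optimum should equal $1+1/\eps$ for $\eps\le1$. For $\eps\ge1$ the prediction-free bound $2$ takes over.

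\textbf{Fixed $k$.} The upper bound $6^{k-1}/\eps^{k-1}$ should come from the general-metric theorem with $\eps$ rescaled, or from a randomised variant of $\Sd_\eps$ that gains a constant factor per level. Since $1+2(2/\eps+3)^{k-1}\le (6/\eps)^{k-1}\cdot O(1)$ for $\eps\le1$, this is routine. For the lower bound $\Omega_k(1/\eps^{k-1})$ I would run the recursive construction behind Theorem~\ref{intro:lower-general}. That construction uses nested scales, each a factor $\Theta(1/\eps)$ apart. Against a randomised algorithm, at each level consistency forces the algorithm to follow the prediction with probability at least $1-O(\eps)$ times a constant. Chaining these probabilities over the $k-1$ levels, via Yao's principle on the adversary's branching choice, keeps the product bounded below by a $k$-dependent constant. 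This gives robustness $c_k/\eps^{k-1}$, which is all the $\Theta_k$ statement needs.

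\textbf{Uniform bounds.} The bound $e^{(2+o(1))/\eps}O(\log k)$ is an instance of the comparison theorem from the abstract. I would take the trajectory to follow as $\FtP$ and the competitive reference as the optimal strict randomised real-line algorithm, with ratio $\rho_k^0=O(\log k)$ by~\cite{Rag18}. Irrevocable matching satisfies the coupling assumption: the free-server sets of the two algorithms can be reconciled at a cost bounded by their cumulative costs. The theorem then gives robustness $M_0(\eps)\rho_k^0$ with $M_0=(2e+o(1))e^{2/\eps}$, while consistency $1+\eps$ is preserved. Subexponential growth in $k$ for fixed $\eps$ follows immediately.

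\textbf{Main obstacle.} I expect the lower bound $e^{(2-\eta)/\eps}$ for $k\ge C_\eta/\eps$ to be the hardest part. My plan is to reduce the comparison-factor lower bound $\lim\eps\log M^\star=2$ to matching, by embedding a two-state switching game into a long line of about $1/\eps$ servers. Each server used corresponds to one phase of the game, in which the adversary decides whether to confirm the prediction or to punish it. A randomised algorithm must keep its cumulative switching probability at $O(\eps)$ per unit of consistency slack, and this forces a multiplicative growth factor of about $(1+\eps/2)^{-1}$ per phase, in a continuous-time sense. The difficulty is to realise the abstract game with irrevocable matching costs while keeping $\OPT$ small. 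My first attempt would be geometrically spaced request clusters, with the spacing chosen so that the $o(1)$ losses are absorbed once $k\ge C_\eta/\eps$.
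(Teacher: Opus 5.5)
The two-server part, the fixed-$k$ part and the uniform upper bound follow the paper's route in outline, subject to the corrections in the second paragraph. The genuine gap is the uniform lower bound $R_k^{\mathrm{rand},\R}(1+\eps)\ge e^{(2-\eta)/\eps}$. You leave it as a plan, and the plan runs in the wrong direction. A lower bound on $M^\star(c,\eps)$ constrains combiners that see two deterministic references only through Definition~\ref{def:switching}, measured against $\cost(Y)$ for one witness pair. It says nothing about an arbitrary randomised matching algorithm measured against $\OPT$. In the paper, Corollary~\ref{cor:optimal-rate} is itself deduced from the matching family, not the other way round.

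What you must supply is a direct argument against every randomised $(1+\eps)$-consistent algorithm, and none of its three ingredients is in your proposal.
(i) The family. Place servers at $0$ and at $a_i=(1+\delta)^{i-1}$, with $\delta$ depending only on $\eta$. Issue the prefix $(\tfrac12,s_1),(a_1,s_2),\dots,(a_{n-1},s_n)$. The perfect stopping instances $P_j$ end with a request at $0$ plus exact hits, and the terminal instance ends at $a_n$ with $\OPT=\tfrac12$.
(ii) The reduction. On the line $\cost=\Delta+2N$ (Lemma~\ref{lem:linecost}), so all that matters is the stage $i$ at which $s_0$ is first consumed; this adds at least $2b_i$ on every later $P_j$. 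Because the instances share prefixes, every randomised algorithm is dominated by a mixture of first-use probabilities obeying $2\sum_{i\le j}q_ib_i\le\eps(a_j-x)$ (Theorem~\ref{thm:canon}).
(iii) The optimisation. The greedy solution of this LP (Theorem~\ref{thm:greedy}) has capacity $\eps\delta/2$ per geometric stage, so about $2/(\eps\delta)$ stages pass before unit mass is spent. This gives terminal cost at least $\eps((1+\delta)^L-1)-\tfrac12$, with $L\log(1+\delta)\ge(2-\eta/3)/\eps-O_\eta(1)$.

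Your heuristic of a factor $(1+\eps/2)^{-1}$ per phase is not this mechanism. The constraint is additive: roughly $\eps/2$ of first-use probability per $e$-fold of scale. The mass therefore runs out after about $2/\eps$ $e$-folds, and that threshold is the exponent $2$. A constant multiplicative loss per phase never exhausts the mass and cannot produce it. Reaching every $k\ge C_\eta/\eps$ also needs the padding argument: exact hits at new locations, forced by an $\OPT=0$ completion.

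Smaller points. In the fixed-$k$ lower bound you cannot reuse the deterministic scales. With ratio just below $2/\eps$ and $x$ just above $\eps/(2+\eps)$, the stopping constraint $\Pr(E_j)g_j\le\eps(a_j-x)$ is essentially vacuous: it gives at best $\Pr(E_j)<1$. The ratio must be a small $k$-dependent multiple of $1/\eps$ and $x$ must be of order $\eps$; the paper uses $\lambda=1/(4k\eps)$ and $x=2\eps$. Then each first-deviation probability is $O(1/k)$, and a union bound over these disjoint events finishes. The constraints come from one fixed stopping completion per stage that penalises every deviation simultaneously (Equation~\eqref{eq:randcharge}). Yao's principle is the wrong tool here, because it does not encode per-input consistency.

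The explicit term $6^{k-1}/\eps^{k-1}$ does not follow from $1+2(2/\eps+3)^{k-1}$: at $k=2$, $\eps=1$ the latter is $11>6$. Use the $\gamma$-form bound of Theorem~\ref{thm:robustness} instead, with no rescaling of $\eps$.

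The comparison theorem needs deterministic references. Take the deterministic algorithm of~\cite{Rag18} as $Y$, or fix a randomised baseline's tape before the input as in Corollary~\ref{cor:symbolicrho}.

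For two servers, both constraints can be tight only at $z=1/\eps$. The randomisation must also cover requests nearer the predicted server. For $\eps>1/2$, always following there gives ratio tending to $3>1+1/\eps$ near the midpoint, which is why the paper's rule $\max\{1/(1+z^2),1-\eps z/2\}$ randomises for every $z>0$.
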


\subsection{Techniques}

\paragraph{The deterministic lower bound.}
All deterministic lower bounds use the following adversarial construction. Suppose the algorithm
has followed predictions of total length $C$, and a request $y$ arrives with predicted server
$p$. If the algorithm uses another free server $w$, the adversary completes the instance with an
exact hit at $w$ and zero-cost padding. The predicted pairs form an optimal matching of
cost $C+d(y,p)$, while the alternative assignment has excess
$\Delta=d(y,w)+d(w,p)-d(y,p)$. This assignment violates consistency when
\begin{equation}\label{eq:deviationthreshold}
\Delta \;>\; \eps\,\bigl(C+d(y,p)\bigr).
\end{equation}
Here $C$ is the predicted cost accumulated before the current request, so the opening request sets
the first constraint in~\eqref{eq:deviationthreshold}. At each later stage, the request lies at the
server used in the preceding stage and its predicted server lies farther to the right, allowing
the distance to increase by a factor asymptotic to $2/\eps$. After $k-1$ such assignments the
final request is at distance $\Theta((2/\eps)^{k-2})$ from the remaining server while $\OPT=x$.
An exact-hit completion at the chosen server, together with Lemma~\ref{lem:residual}, handles
assignments to any other free server.

\paragraph{The deterministic upper bound.}
The algorithm $\Sd_\eps$ maintains
$B_t=\eps\sum_{j\le t}d(r_j,p_j)-\sigma_{t-1}$, where $\sigma_{t-1}$ is the total assignment excess charged
before request $t$. On a perfect instance, $\sum_j d(r_j,p_j)=\OPT$, so this quantity bounds the
total excess cost by $\eps\OPT$. The proof needs only the triangle inequality, applied through the
potential function $\Phi_t$ defined by the minimum matching cost between the algorithm's free
servers and the residual set of predicted servers. For robustness, rejection of the server used
by an offline optimum bounds both the current predicted distance and the accumulated predicted
cost. This rejection bound is applied to the first $k-1$ decisions, after which only one server is
free, yielding a recurrence with base $2+4/\eps$ and an upper bound
$6^{k-1}/\eps^{k-1}$. For $k=3$, a separate first-step estimate gives
$1+12/\eps+8/\eps^2$.

\paragraph{The randomised comparison theorem.}
The randomised result isolates the property of matching used to combine algorithms. Switching
between two residual matchings costs at most the minimum matching distance between their
free-server sets. A triangle-inequality argument bounds this distance by the two reference costs
accumulated so far. We encode these facts in the switching assumptions and analyse a continuous
interpolation that is implemented using at most two random switches per request. Two potentials
control the comparison with the prediction-following and competitive references. The resulting
asymmetric $(1+\eps,M)$ guarantee is strictly multiplicative, independent of the metric diameter,
and has $M=e^{2/\eps+O(\log(1/\eps))}$. The same analysis applies to metrical task systems.

\subsection{Related work}

Classical online metric matching has tight deterministic ratio $2k-1$ in general
metrics~\cite{KP93,KMV94}. Randomised bounds are $O(\log^2 k)$~\cite{BBGN14} and
$\Omega(\log k)$~\cite{MNP06}. On the real line, the deterministic upper bound is
$O(\log k)$~\cite{Rag18}, and Peserico and Scquizzato proved a randomised lower bound
$\Omega(\sqrt{\log k})$~\cite{PS23}. Closing the gap to $O(\log k)$ is a well-known open problem.

For per-request action predictions, Antoniadis, Coester, Eli\'a\v{s}, Polak and
Simon~\cite{ACEPS23} show that $\FtP$ has cost at most $\Off+2\eta$ against any offline
algorithm $\Off$, and give a deterministic $9$-consistent $O(\log k)$-robust combiner on the
line. Shin and Vajanopath~\cite{SV26} study a parsimonious variant with few queried predictions.
Their lower bounds concern the number of queries. Yang and Yu~\cite{YY26} receive the predicted
request multiset upfront, and
Azar, Panigrahi and Touitou~\cite{APT22} use an input-prediction model. Canonne, Chen and
Mestre~\cite{CCM25} prove an exponential lower bound at exact consistency in a broader advice
setting, including randomised algorithms. Their endpoint construction does not extend to a nontrivial lower bound for a factor
$1+\eps$.

On the maximisation side, trade-offs between consistency and robustness are known for two-stage bipartite
matching~\cite{JM22} and for fractional or integral matching under adversarial and random
arrivals~\cite{CGLB24,BEM25}. Those results use a different objective and arrival model.

Algorithm combination for metric online problems has also been studied under different goals.
The randomised MTS combiner of Antoniadis et al.~\cite{ACEPS23} incurs an additive
$O(D/\eps)$ term for diameter $D$, while their later work on mixing predictions competes with a
dynamic sequence of predictors~\cite{ACEPSMix23}, a different objective from the asymmetric
terminal comparison studied here. Dallot et al.~\cite{DEGPS26} give a general compiler in a probabilistic
corruption model. That model randomises the reliability of each piece of guidance and is distinct
from worst-case consistency and robustness for a fixed prediction sequence.

\paragraph{Comparison with prior work.}
The deterministic algorithm $\Sd_\eps$ is specific to metric matching and determines the power of
$1/\eps$ for every fixed $k$. The randomised upper bound compares two online trajectories under
the switching assumptions in Definition~\ref{def:switching}. For
MTS, the earlier randomised combiner of Antoniadis et al.~\cite{ACEPS23} reaches near-$1$
consistency with polynomial dependence on $1/\eps$, but pays an additive term proportional to the
state-space diameter. Our guarantee is strictly multiplicative and diameter-free, at the price
of an exponential $e^{2/\eps+O(\log(1/\eps))}$ robustness factor. The guarantees are incomparable.
DART~\cite{CSW23} already supplies the strict diameter-free MTS conclusion.

\subsection{Organisation}

Section~\ref{sec:prelim} fixes the model, and Section~\ref{sec:k2} proves the exact two-server
trade-off. Section~\ref{sec:lower} proves the lower bound for general $k$ and a complementary bound for
$c\in(1,3)$, after which Sections~\ref{sec:algorithm} and~\ref{sec:k3} analyse $\Sd_\eps$ and
give the exact real-line value for three servers when $0<\eps\le\sqrt{13}-3$. Section~\ref{sec:randomised}
contains the randomised bounds, the abstract comparison theorem, and its applications to matching
and metrical task systems. Section~\ref{sec:discussion} collects the remaining open problems.

\section{Preliminaries}\label{sec:prelim}

\paragraph{Model.}
A metric space $(X,d)$ contains $k$ labelled servers at fixed points
$s_1,\dots,s_k\in X$. Labels remain distinct when locations coincide. Requests
$r_1,\dots,r_k\in X$ arrive online, and each $r_t$ is matched irrevocably to a currently free
server at cost $d(r_t,\cdot)$. The total cost is compared with the offline minimum-cost perfect
matching $\OPT$. A prediction $p_t\in S$ arrives with each request and is visible before the
algorithm acts. A prediction sequence is \emph{perfect} if some offline optimal matching $M^*$ has
$p_t=M^*(r_t)$ for every $t$. The predictions in a perfect sequence are pairwise distinct, and
$\OPT=\sum_t d(r_t,p_t)$. A deterministic algorithm is
\emph{$c$-consistent} if $\cost\le c\cdot\OPT$ on every instance with a perfect prediction
sequence and \emph{$r$-robust} if $\cost\le r\cdot\OPT$ on every instance. (On instances with
$\OPT=0$, both guarantees require cost $0$. All our algorithms meet this condition, and every
lower-bound witness has $\OPT>0$.) Let $R_k(c)$ be the infimum robustness over deterministic algorithms
defined for every metric space that satisfy both guarantees in every metric space. Let
$R_k^{\R}(c)$ be the corresponding infimum for algorithms restricted to the real line. Thus
$R_k^{\R}(c)\le R_k(c)$. An algorithm contributing to $R_k^{\R}(c)$ must handle every placement
of the $k$ servers on $\R$, and our lower-bound trees choose the placement.

For randomised algorithms the input is fixed independently of the internal random bits
(oblivious adversary), and both guarantees concern expected cost on every fixed input. We write
$R_k^{\mathrm{rand},\R}(c)$ for the infimum robustness of randomised algorithms on the real line.

\begin{observation}[$\FtP$]\label{obs:ftp}
The algorithm that matches $r_t$ to $p_t$ whenever $p_t$ is free is $1$-consistent: on a perfect
instance the $p_t$ are distinct, so it plays $M^*$ throughout and pays $\OPT$. Its robustness is
unbounded already for $k=2$ on the real line.
\end{observation}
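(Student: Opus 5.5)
The statement has two parts: $\FtP$ is $1$-consistent, and its robustness is unbounded for $k=2$ on the real line. I will handle them in that order.

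\textbf{Consistency.} Fix a perfect instance, so some offline optimal matching $M^*$ has $p_t=M^*(r_t)$ for all $t$. Since $M^*$ is a perfect matching, the labels $p_1,\dots,p_k$ are pairwise distinct. I will show by induction on $t$ that when $r_t$ arrives the server $p_t$ is free. Before request $t$, $\FtP$ has used exactly $p_1,\dots,p_{t-1}$, and none of these equals $p_t$ by distinctness. Hence $\FtP$ always follows the prediction. Its matching is therefore $M^*$ itself, and its cost is $\sum_t d(r_t,p_t)=\OPT$. This part is routine.

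\textbf{Unbounded robustness.} I need instances with $k=2$ on $\R$ on which $\cost(\FtP)/\OPT$ is arbitrarily large. Place $s_1=0$ and $s_2=1$. The first request is $r_1=0$ with prediction $p_1=s_2$, so $\FtP$ pays $1$. The second request is $r_2=1$, and only $s_1$ remains, so $\FtP$ pays another $1$, for a total of $2$. But $\OPT=0$, since we can match $0\to s_1$ and $1\to s_2$.

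This instance already violates any multiplicative guarantee. The convention that $\OPT=0$ instances require cost $0$ makes $\FtP$ fail $r$-robustness for every finite $r$.

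For a witness with $\OPT>0$, to match the paper's style, I will perturb the requests. Take $s_1=0$, $s_2=1$, $r_1=\delta$, $p_1=s_2$, and $r_2=1-\delta$. Then $\FtP$ pays $(1-\delta)+(1-\delta)=2-2\delta$, while $\OPT=2\delta$. The ratio $(1-\delta)/\delta$ tends to infinity as $\delta\downarrow0$.

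The only point requiring care is this $\OPT>0$ convention. The perturbation handles it, so I do not expect a genuine obstacle.
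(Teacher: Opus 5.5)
Your proposal is correct and matches the paper's reasoning. Distinctness of the predicted labels on a perfect instance forces $\FtP$ to play $M^*$, and a two-server instance in which a request near one server is predicted to the far server makes the ratio unbounded. Your symmetric witness, with ratio $(1-\delta)/\delta$, is equivalent to the one implicit in the paper's two-server analysis ($r_1=x$, $r_2=1$, ratio $(2-x)/x$).
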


\begin{lemma}[uncrossing on the real line]\label{lem:uncross}
If requests $x_1\le\dots\le x_m$ and servers $y_1\le\dots\le y_m$ lie on $\R$, the
order-preserving matching $x_i\mapsto y_i$ has minimum cost.
\end{lemma}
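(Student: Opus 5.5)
The plan is the standard exchange argument. Take any minimum-cost perfect matching $\pi$ between $\{x_i\}$ and $\{y_j\}$. If $\pi$ is not order-preserving, then there exist $i<i'$ with $\pi(x_i)=y_j$, $\pi(x_{i'})=y_{j'}$ and $j>j'$; call such a pair a crossing. We will show that swapping the two partners, so that $x_i\mapsto y_{j'}$ and $x_{i'}\mapsto y_j$, does not increase the cost. We will also show that it strictly decreases a finite measure of disorder, namely the number of inversions of the permutation induced by $\pi$ (indices taken according to the fixed sorted labellings, with ties in location broken by index). Iterating then reaches the order-preserving matching with cost at most that of $\pi$, which proves it is optimal.

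The key inequality is the following. For reals $a\le a'$ and $b\le b'$,
\[
|a-b|+|a'-b'|\le|a-b'|+|a'-b|.
\]
We will prove it by noting that $f(u)=|u|$ is convex. Since $a-b'\le a-b\le a'-b$ and $a-b'\le a'-b'\le a'-b$, the pair $(a-b,\,a'-b')$ has the same sum as $(a-b',\,a'-b)$ and lies between its endpoints. By convexity (majorization, or Karamata in two variables), $f(a-b)+f(a'-b')\le f(a-b')+f(a'-b)$. As a fallback, a short case analysis on the relative order of the four points will also work. Applying the inequality with $a=x_i$, $a'=x_{i'}$, $b=y_{j'}$, $b'=y_j$ shows the uncrossed pair costs no more than the crossed pair.

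For termination, swapping the images at positions $i<i'$ when $\pi(i)>\pi(i')$ strictly reduces the inversion count of the permutation. This is the usual fact that swapping an inverted pair reduces the inversion count by at least one; choosing an adjacent inversion makes it exactly one. So the process ends after finitely many steps at the identity permutation, that is, at $x_i\mapsto y_i$.

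The only mild obstacle is handling ties in location, meaning coincident points. Because we work with the index-based orders and the inequality above uses only non-strict comparisons, ties cause no difficulty.
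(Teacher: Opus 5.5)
Your proposal is correct and follows the paper's own proof: start from an optimal matching, swap the partners of an inverted pair (which does not increase cost), and terminate because the inversion count strictly decreases. The only difference is cosmetic: you derive the two-pair inequality from convexity of $|\cdot|$ via majorization, whereas the paper uses that $t\mapsto|t-y|-|t-y'|$ is nondecreasing for $y\le y'$.
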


\begin{proof}
For $x\le x'$ and $y\le y'$ the function $t\mapsto|t-y|-|t-y'|$ is nondecreasing, so
$|x-y|+|x'-y'|\le|x-y'|+|x'-y|$. Given any optimal matching, repeatedly swapping the partners of
an inverted pair does not increase the cost and strictly decreases the number of inversions. After
finitely many swaps the matching is order-preserving.
\end{proof}

Throughout the lower bounds, every offline optimum is computed by exhibiting the
order-preserving matching. Whenever we claim that a prediction sequence is perfect, the
corresponding matching $M^*$ is the order-preserving matching for that instance, so its
optimality follows from Lemma~\ref{lem:uncross}.

\begin{proposition}[prediction-free lower bounds]\label{prop:floor}
Every $r$-robust algorithm yields an $r$-competitive prediction-free algorithm (feed it the
constant prediction $p_t\equiv s_1$). Hence on the real line $r=\Omega(\sqrt{\log k})$~\cite{PS23},
and in general metrics $r\ge2k-1$ for deterministic algorithms~\cite{KP93}, regardless of
consistency.
\end{proposition}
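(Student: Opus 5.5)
The plan is a direct reduction: turn any $r$-robust algorithm into a prediction-free algorithm with the same ratio, then invoke the known prediction-free lower bounds.

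Let $\mathcal{B}$ be an $r$-robust algorithm for $k$ servers, deterministic or randomised. Define the prediction-free algorithm $\mathcal{B}'$ as follows. On request $r_t$, $\mathcal{B}'$ forwards $(r_t,s_1)$ to a simulated copy of $\mathcal{B}$, and makes exactly the assignment $\mathcal{B}$ makes. This is well-defined online. The prediction $s_1$ is fixed in advance and uses no future information. Also, $\mathcal{B}$ must accept predictions even when the predicted server is already taken, because robustness is quantified over arbitrary prediction sequences.

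Next, compare the costs. On every request sequence, $\cost(\mathcal{B}')$ equals $\cost(\mathcal{B})$ on the augmented instance. That instance has the same requests and servers, so its $\OPT$ is unchanged. Robustness holds for every prediction sequence, and in particular for the constant one. Hence $\cost(\mathcal{B}')\le r\cdot\OPT$ for a deterministic $\mathcal{B}$, and $\mathbb{E}[\cost(\mathcal{B}')]\le r\cdot\OPT$ in the randomised case. The randomised case works because the constant prediction sequence is chosen independently of the random bits, so the oblivious-adversary model is preserved.

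Finally, apply the classical bounds. Every randomised online algorithm on the line has ratio $\Omega(\sqrt{\log k})$ \cite{PS23}, and this includes deterministic ones. So on the real line, $r=\Omega(\sqrt{\log k})$. In general metrics, every deterministic algorithm has ratio at least $2k-1$ \cite{KP93}, so a deterministic $\mathcal{B}$ defined on every metric space satisfies $r\ge2k-1$. Consistency is never used, which gives the phrase ``regardless of consistency.''

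There is no real obstacle here. The only points to check are that the cited bounds are stated in the same multiplicative sense, without an additive constant, and on the same instance class. For \cite{PS23}, an additive constant could be absorbed by scaling the instances. For \cite{KP93}, the lower-bound instances must be realisable with $k$ servers and $\OPT>0$, which they are.
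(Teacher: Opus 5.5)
Your reduction is correct and is exactly the paper's argument. The constant prediction $p_t\equiv s_1$ is an admissible prediction sequence fixed independently of the random bits, so robustness yields a strictly $r$-competitive prediction-free algorithm on the same instance with the same $\OPT$, and both cited lower bounds then apply. Your closing concern about additive constants is unnecessary: the strict guarantee is the stronger one, so either a strict or a non-strict lower bound transfers directly without any rescaling.
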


\section{Two servers}\label{sec:k2}

For two servers on the real line, normalise $s_0=0$ and $s_1=1$. Translation and scaling preserve ratios.
After $r_1$ is matched the second move is forced, so a deterministic algorithm is a map
$(r_1,p_1)\mapsto\{s_0,s_1\}$. By reflection assume $p_1=s_1$ and write $x=r_1$. Two elementary
suprema, proved by the piecewise-linear case check: for $x\in[0,1]$,
\begin{equation}\label{eq:phipsi}
\sup_{y\in\R}\frac{x+|y-1|}{(1-x)+|y|}=\frac{1+x}{1-x}\ (\text{at }y=0),
\qquad
\sup_{y\in\R}\frac{(1-x)+|y|}{x+|y-1|}=\frac{2-x}{x}\ (\text{at }y=1).
\end{equation}

\begin{theorem}[two-server trade-off]\label{thm:k2}
For every $\eps\in(0,2]$,
\[
R_2^{\R}(1+\eps)=R_2(1+\eps)=\frac{4+\eps}\eps.
\]
On the real line, the value is achieved by the algorithm $P_\eps$: on $(x,s_1)$, match $x$ to
$s_0$ if and only if $x\le x_\eps:=\eps/(2+\eps)$, and match it to $s_1$ otherwise.
\end{theorem}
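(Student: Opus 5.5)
The plan has three parts. First, a lower bound on the real line. Second, the threshold algorithm $P_\eps$ on the line. Third, a metric version of the same threshold rule for arbitrary metrics. Since $R_2^{\R}(1+\eps)\le R_2(1+\eps)$, it suffices to prove $R_2^{\R}(1+\eps)\ge(4+\eps)/\eps$ and to exhibit a general-metric algorithm with robustness at most $(4+\eps)/\eps$. Throughout I use the normalisation of this section: $s_0=0$, $s_1=1$, $p_1=s_1$, $r_1=x$, and $y=r_2$.

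\emph{Lower bound.} Take $x\in[0,1]$.
\begin{itemize}
\item Suppose the algorithm deviates, matching $x$ to $s_0$. The adversary plays $y=0$. The prediction $p_1=s_1$ is then perfect, since the matching $\{x\mapsto s_1,\,0\mapsto s_0\}$ is order-preserving and hence optimal by Lemma~\ref{lem:uncross}, with $\OPT=1-x$. The algorithm pays $1+x$. By \eqref{eq:phipsi}, consistency therefore forces $(1+x)/(1-x)\le1+\eps$, that is, $x\le x_\eps=\eps/(2+\eps)$.
\item Hence for every $x\in(x_\eps,1]$ the algorithm must follow the prediction. The adversary then plays $y=1$, giving $\OPT=x$ and cost $(1-x)+1$. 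The second supremum in \eqref{eq:phipsi} gives robustness at least $(2-x)/x$.
\end{itemize}
Letting $x\downarrow x_\eps$ yields $(2-x_\eps)/x_\eps=(4+\eps)/\eps$. The condition $\eps\le2$ guarantees $x_\eps\le1/2<1$, so this range of $x$ is nonempty.

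\emph{Upper bound on the line.} Consider $P_\eps$ with $x\in[0,1]$.
\begin{itemize}
\item When it deviates ($x\le x_\eps$), the first supremum in \eqref{eq:phipsi} bounds the ratio by $(1+x)/(1-x)\le1+\eps$. This covers consistency and robustness simultaneously, because the bound holds against the true $\OPT$ for every $y$.
\item When it follows ($x>x_\eps$), the algorithm plays the predicted matching, so consistency is automatic. Robustness is at most $(2-x)/x<(4+\eps)/\eps$.
\item For $x\notin[0,1]$, $P_\eps$ matches $x$ to the nearer server: $s_0$ if $x<0$, and the predicted $s_1$ if $x>1$. A short uncrossing check gives ratio at most $3\le(4+\eps)/\eps$, again using $\eps\le2$. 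It also gives consistency: for $x<0$ the greedy choice either is order-preserving or costs at most the optimum.
\end{itemize}

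\emph{General metrics.} I would rewrite the threshold metrically. On the line, $x\le x_\eps$ is equivalent to $x/(1-x)\le\eps/2$. So the rule is: match $r_1$ to the non-predicted server $w$ iff $d(r_1,w)\le\frac\eps2 d(r_1,p)$, and otherwise follow $p$.
\begin{itemize}
\item \emph{Consistency of a deviation.} The cost is $d(r_1,w)+d(y,p)\le d(r_1,w)+d(y,w)+d(w,p)$. Using $d(w,p)\le d(r_1,w)+d(r_1,p)$, this is at most $2d(r_1,w)+d(r_1,p)+d(y,w)\le(1+\eps)d(r_1,p)+d(y,w)$.
\item \emph{Robustness.} Compare against the other matching as well, bounding both cost expressions by triangle inequalities in the four points $r_1,y,w,p$.
\end{itemize}
The main obstacle is the robustness of the following branch. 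There one must show that $d(r_1,w)>\frac\eps2 d(r_1,p)$ gives cost at most $\frac{4+\eps}\eps\min\{d(r_1,p)+d(y,w),\,d(r_1,w)+d(y,p)\}$. I would try to reduce this to the line computation by writing $a=d(r_1,w)$, $b=d(r_1,p)$ and bounding $d(y,w)$ and $d(y,p)$ through $d(w,p)\le a+b$. The worst case should again be $y=p$, which gives $(b+a+b)/a\le1+2b/a<1+4/\eps$.
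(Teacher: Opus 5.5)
\textbf{Verdict: correct.} Your lower bound and your analysis of $P_\eps$ on the line follow the paper's proof essentially verbatim. The difference is in the general-metric upper bound.

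\emph{The paper's route.} It builds no separate two-server rule. It specialises the general algorithm $\Sd_\eps$, taking consistency from Theorem~\ref{thm:consistency} and robustness from Theorem~\ref{thm:robustness} at $k=2$. There the recursion is empty, the first-step rejection argument gives $T_1\le(2/\eps)\OPT$, and the terminal alternating path gives $\cost\le\OPT+2T_1\le(1+4/\eps)\OPT$. This shows that the single algorithm used for every $k$ is already exactly optimal at $k=2$; on the line it agrees with $P_\eps$ up to a tie at $\eps=2$, $x=1/2$.

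\emph{Your route.} Your rule deviates iff $2d(r_1,w)\le\eps\,d(r_1,p)$. This is a more conservative version of the admissibility test $\Delta_1(w)\le\eps\,d(r_1,p)$ of $\Sd_\eps$, since $\Delta_1(w)\le2d(r_1,w)$. Your four-point triangle-inequality argument is self-contained and avoids the forward reference to Section~\ref{sec:algorithm}. Your metric rule is a different algorithm from $P_\eps$ on the line: for $\eps<2$ it follows the prediction when $x<-\eps/(2-\eps)$. That is harmless, because you prove the line statement separately.

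\emph{Tightening the main obstacle.} The step you flag closes in one line and should be written out rather than guessed. We have $b+d(y,w)\le b+d(y,p)+d(p,w)\le a+2b+d(y,p)$. Since $(a+2b+t)/(a+t)=1+2b/(a+t)$ is nonincreasing in $t\ge0$, the ratio is at most $1+2b/a<1+4/\eps$.

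\emph{Two small points.} The perfect completion in the lower bound must carry the prediction $p_2=s_0$. For $x\notin[0,1]$, the nearer server is optimal for every completion, so the ratio there is $1$, not merely at most $3$.
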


\begin{proof}
\emph{Lower bound.} Let $A$ be $(1+\eps)$-consistent and $x\in(x_\eps,1)$, $p_1=s_1$. If $A$
matches $x$ to $s_0$, the adversary completes with $(r_2,p_2)=(0,s_0)$: the predicted pairs form
the matching $\{x\to s_1,\,0\to s_0\}$ of cost $1-x$, optimal by Lemma~\ref{lem:uncross}, and $A$
pays $1+x$; since $(1+x)/(1-x)>1+\eps$ for $x>x_\eps$, this contradicts consistency. Thus $A$
matches $x$ to $s_1$
at every such $x$; the adversary then plays $r_2=1$, giving cost $2-x$ against $\OPT=x$
(order-preserving), so its robustness is at least $\sup_{x>x_\eps}(2-x)/x=(4+\eps)/\eps$.

\emph{Upper bound.} $P_\eps$ is $(1+\eps)$-consistent: on a perfect instance with $p_1=s_1$,
$\OPT$ is the swap matching's cost $(1-x)+|y|$, obtained by matching $x$ to $s_1$. Matching $x$
to $s_0$ occurs only when $x\le x_\eps$; for $x<0$ this choice weakly dominates for every completion, and otherwise it has ratio at most
$(1+x)/(1-x)\le1+\eps$ by~(\ref{eq:phipsi}). $P_\eps$ is $((4+\eps)/\eps)$-robust: when it
selects $s_0$, the ratio is at most $(1+x)/(1-x)\le1+\eps\le(4+\eps)/\eps$; otherwise it is at
most $(2-x)/x<(4+\eps)/\eps$ for $x>x_\eps$ (and $\le1$ for $x\ge1$), again
by~(\ref{eq:phipsi}).

This proves the equality for $R_2^{\R}$; the definitions give $R_2^{\R}\le R_2$, while
Theorem~\ref{thm:robustness} specialises at $k=2$ to the general-metric upper bound
$1+4/\eps=(4+\eps)/\eps$, proving the same equality for $R_2$.
\end{proof}

\section{Lower bound for a general number of servers}\label{sec:lower}

\begin{lemma}[remaining matching cost]\label{lem:residual}
Fix any continuation of an execution, and at any point let $Q$ be the multiset of its requests not
yet served and $T$ the set of free servers ($|Q|=|T|$). Every online algorithm's remaining cost is
at least the minimum cost of an offline perfect matching of $Q$ to $T$.
\end{lemma}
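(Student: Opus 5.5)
The plan is to observe that whatever the algorithm does from this point on, the assignments it makes in the remainder of the execution themselves form a perfect matching of $Q$ to $T$. The bound then follows from the definition of a minimum.

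First, fix the continuation, meaning the remaining requests together with their predictions, and fix the algorithm's behaviour on it; for a randomised algorithm, fix a realisation of its random bits. Each remaining request $r\in Q$ is matched irrevocably, at the moment it arrives, to a server that is free at that moment. A server that is used at a later time was not used before the current point, so it belongs to $T$. Since assignments are irrevocable and each server is used at most once, the map sending each remaining request to its assigned server is injective from $Q$ into $T$. Because $|Q|=|T|$ and the algorithm must serve every request, this map is a bijection. Hence it is a perfect matching of $Q$ to $T$. Multiset multiplicities are handled by treating the requests as distinct occurrences, and labelled servers stay distinct even when their locations coincide.

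Second, the algorithm's remaining cost is exactly $\sum_{r\in Q} d(r,\text{assigned server})$, which is the cost of this perfect matching. It is therefore at least the minimum cost over all perfect matchings of $Q$ to $T$. For a randomised algorithm the inequality holds for every realisation, so it also holds in expectation.

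The only point needing care, and the nearest thing to an obstacle, is the bookkeeping: the set $T$ must be exactly the set of servers still available, and no later request may reuse a server already in use. Both facts follow directly from irrevocability and from the model's requirement that each request be matched to a currently free server. I expect no real difficulty, since the argument is a direct consequence of the definitions.
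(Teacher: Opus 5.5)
Your proof is correct and is the paper's argument: the algorithm's remaining assignments themselves form a perfect matching of $Q$ to $T$, so their total cost is at least the minimum over all such matchings. Your extra bookkeeping on injectivity, multiset occurrences, and fixing a realisation of the random bits just makes explicit what the paper's one-line proof leaves implicit.
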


\begin{proof}
The algorithm's remaining moves form some perfect matching of $Q$ to $T$.
\end{proof}

\begin{theorem}[lower bound for general $k$]\label{thm:lower-general}
For every $k\ge2$ and every $\eps>0$, every $(1+\eps)$-consistent deterministic algorithm for $k$
servers on the real line has robustness at least
$L_k(\eps)=1+\sum_{j=1}^{k-1}2^{\,j+1}/\eps^{\,j}$.
\end{theorem}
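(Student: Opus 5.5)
The plan is an adversarial chain on the real line that forces the algorithm, by consistency alone, to follow the prediction $k-1$ times. Each stage places the new predicted server about $2/\eps$ times farther out than the last. The bound is then checked against the recurrence
\[
L_k(\eps)=1+\frac4\eps+\frac2\eps\bigl(L_{k-1}(\eps)-1\bigr),\qquad L_1=1,
\]
which holds because multiplying $\sum_{j=1}^{k-2}2^{j+1}/\eps^j$ by $2/\eps$ gives the terms $j=2,\dots,k-1$, and $4/\eps$ supplies the term $j=1$. The case $k=2$ is the lower-bound half of Theorem~\ref{thm:k2}, and its construction is exactly the first stage below.

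\emph{Construction.} Normalise $s_0=0$, $s_1=1$, and place servers $s_2<s_3<\dots<s_{k-1}$ to the right of $1$, at positions chosen adaptively. Request $r_1=x$ with $x$ slightly above the threshold and $p_1=s_1$. Request $r_2=s_1$ with $p_2=s_2$. In general, $r_{t}$ sits at the server used in stage $t-1$ and its prediction is the next server to the right. At stage $t$ the accumulated predicted cost is $C_t$. If the algorithm uses any free server $w\ne p_t$, the adversary completes with an exact hit at $w$ and zero-cost padding at the remaining free servers. The predicted pairs then form an order-preserving, hence optimal (Lemma~\ref{lem:uncross}), matching of cost $C_t+d(r_t,p_t)$. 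The algorithm pays at least this plus the excess $\Delta$ of \eqref{eq:deviationthreshold}. For a server not adjacent to the chain, Lemma~\ref{lem:residual} lower-bounds the forced remaining cost.

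The nearest free alternative is always $s_0=0$, on the opposite side of $r_t$ from $p_t$. So for every deviation $\Delta\ge 2\,d(r_t,s_0)$, which equals $2$ once the chain has reached $s_1$. Consistency is therefore violated as soon as $d(r_t,p_t)<2/\eps-C_t$. The adversary chooses $s_{t+1}-s_t$ just below this value, which yields the recursion $C_{t+1}=C_t+d_t\approx 2/\eps$ scaled up at each stage. To obtain the exact factor $2/\eps$ per stage rather than additive growth, I would rescale so that the gap to $s_0$ grows proportionally. The natural choice is to place the chain so that each new stage is a scaled copy of the $(k-1)$-server instance, with $s_0$ playing the role of the far server. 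Getting this self-similar placement right is what makes the recurrence above appear.

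\emph{Final stage and ratio.} After $k-1$ forced prediction-following moves, only $s_0$ is free. The last request is placed at the last used server, far to the right. The algorithm then pays the whole chain plus the long return distance to $0$. Meanwhile $\OPT$ is the order-preserving shift matching, which costs only about $x$: each request is matched one step left, and the overlaps telescope. Taking $x\downarrow x_\eps$ and sending the chosen gaps to their thresholds gives a supremum, not a maximum, matching $L_k$. This is consistent with the strict inequalities in \eqref{eq:deviationthreshold}.

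\emph{Main obstacle.} The main obstacle is the exact bookkeeping. The placement must make each stage contribute precisely the factor $2/\eps$ plus the additive $4/\eps$. The exact-hit completions at arbitrary free servers (not only $s_0$) must provably violate consistency. I would first verify $k=3$ by hand against $1+4/\eps+8/\eps^2$, and then set up the induction in the scaled form.
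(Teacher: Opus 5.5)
Your skeleton is the paper's: the chain $(x,s_1),(s_1,s_2),\dots,(s_{k-2},s_{k-1})$, exact-hit completions, and a terminal request at $s_{k-1}$ forced onto $s_0$. However, two load-bearing steps are wrong.

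First, the excess of using $s_0$ at stage $t\ge2$ is $2d(r_t,s_0)=2s_{t-1}$. This equals $2$ only at $t=2$; at $t=1$ it is $2x$, which gives the threshold $x>\eps/(2+\eps)$. Complete with $(0,s_0)$ and the exact hits $(s_i,s_i)$ for $i>t$. The predicted matching is then order-preserving of cost $s_t-x$, while by Lemma~\ref{lem:uncross} every full matching containing $s_{t-1}\mapsto s_0$ costs at least $s_t-x+2s_{t-1}$. Hence consistency forbids $s_0$ whenever $s_t<x+(2/\eps)s_{t-1}$.

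So the factor $2/\eps$ per stage appears directly, and no self-similar rescaling is needed. The scaled-copy placement you sketch is never specified and is not what produces your recurrence. You also still have to carry out the computation you only assert. With $s_1=1$, $s_t\uparrow x+(2/\eps)s_{t-1}$ and $x\downarrow\eps/(2+\eps)$, the ratio $(2s_{k-1}-x)/x$ tends to $L_k(\eps)$, and it is this that satisfies your recurrence.

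Second, you plan to force all $k-1$ predictions, and you name as the obstacle showing that exact-hit completions at arbitrary free servers violate consistency. That cannot be established in general. The exact-hit completion at a server $w$ to the right of $p_t$ certifies only excess $2(w-p_t)$, which need not exceed $\eps(s_t-x)$. Two examples:
\begin{itemize}
\item For $\eps>2$, the constraint $s_t<x+(2/\eps)s_{t-1}$ traps all placements below the fixed point $x/(1-2/\eps)$, so the gaps are small.
\item Even at $\eps=2$ with extremal placement, a move to $s_{t+1}$ at a stage $2\le t\le k-2$ costs only about $2x<2(s_t-x)$ extra.
\end{itemize}

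The missing idea is that you only need to exclude $s_0$ during the prefix. That exclusion argument is insensitive to earlier rightward deviations, because the completion's predicted matching is perfect whatever the algorithm did.

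Once $s_0$ is never used, the $k-1$ prefix requests occupy $\{s_1,\dots,s_{k-1}\}$ in some order. Their cost is therefore at least the order-preserving value $s_{k-1}-x$ by Lemma~\ref{lem:uncross}. The final request then pays $s_{k-1}$, while $\OPT=x$: match $x$ to $s_0$ and every other request to its own point, rather than by a one-step shift. This gives ratio at least $(2s_{k-1}-x)/x$ however the algorithm permuted the servers to the right.
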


\begin{proof}
Let $A$ be a $(1+\eps)$-consistent deterministic algorithm. Put
\[
n=k-1,
\qquad q=\frac2\eps,
\qquad x_\eps=\frac1{1+q}=\frac\eps{2+\eps}.
\]
Fix $x\in(x_\eps,1)$. Choose server locations
\[
0=a_0<a_1=1<a_2<\cdots<a_n
\]
such that
\begin{equation}\label{eq:lower-scales}
a_j<x+qa_{j-1}
\qquad(2\le j\le n).
\end{equation}
These locations exist for every $q>0$: when $q\ge1$, the upper endpoint exceeds $a_{j-1}$, while
for $q<1$ the inequality $x>1-q$ places the fixed point $x/(1-q)$ of $u\mapsto x+qu$ above $1$.
Choosing each $a_j$ between $a_{j-1}$ and $x+qa_{j-1}$ then keeps the sequence below that fixed
point.

The adversary first issues
\begin{equation}\label{eq:lower-prefix}
(r_1,p_1)=(x,a_1),
\qquad
(r_j,p_j)=(a_{j-1},a_j)
\quad(2\le j\le n).
\end{equation}
We show that $A$ cannot assign a prefix request to $a_0$. Suppose $a_0$ is free before request
$j$ and that $A$ assigns this request to $a_0$. Stop the prefix and complete the input with
\begin{equation}\label{eq:lower-completion}
(0,a_0)
\quad\text{and}\quad
(a_i,a_i)
\quad(j<i\le n).
\end{equation}

For $j=1$, the predicted matching is order-preserving with cost $1-x$, whereas the minimum cost
among full matchings containing $x\mapsto a_0$ is $1+x$; since $x>x_\eps$ implies
$2x>\eps(1-x)$, this action contradicts consistency.

Now let $j\ge2$. The predicted matching is order-preserving with cost $a_j-x$. After constraining
the request $a_{j-1}$ to use $a_0$ and deleting this pair, Lemma~\ref{lem:uncross} shows that the
minimum matching of the remaining requests to the remaining servers is order-preserving;
restoring the constrained pair, the minimum cost of a full matching containing
$a_{j-1}\mapsto a_0$ is
\[
a_j-x+2a_{j-1}.
\]
Equation~\eqref{eq:lower-scales} gives
$2a_{j-1}>\eps(a_j-x)$, which again contradicts consistency. This argument does not depend on the
earlier assignments among $a_1,\ldots,a_n$.

After all $n$ prefix requests, the algorithm has used every server in
$\{a_1,\ldots,a_n\}$ and its prefix cost is at least the order-preserving value $a_n-x$. On the
final pair $(a_n,a_0)$, only $a_0$ is free, so the algorithm pays a further $a_n$, whereas the
order-preserving offline matching assigns $x$ to $a_0$ and each $a_i$ to itself, with $\OPT=x$.
The robustness is therefore at least
\begin{equation}\label{eq:lower-terminal-ratio}
\frac{2a_n-x}{x}.
\end{equation}

For fixed $x$, let the locations approach the recurrence
\[
b_1=1,
\qquad
b_j=x+qb_{j-1}.
\]
It gives $b_n=q^{n-1}+x\sum_{i=0}^{n-2}q^i$. Letting $x$ decrease to $x_\eps$ in
\eqref{eq:lower-terminal-ratio} yields
\[
R_k^{\R}(1+\eps)
\ge -1+2\sum_{i=0}^{n}q^i
=1+2\sum_{i=1}^{n}q^i
=1+\sum_{i=1}^{k-1}\frac{2^{\,i+1}}{\eps^{\,i}}.
\]
Because all inequalities used to choose the locations are strict, the displayed value is their
supremum and remains a lower bound on robustness.
\end{proof}

At $k=2$, Theorem~\ref{thm:lower-general} recovers the lower-bound half of Theorem~\ref{thm:k2}
exactly, and its first truncation is tight; the following three consequences will be used below.

\begin{corollary}\label{cor:values}
$R_3^{\R}(1+\eps)\ge1+4/\eps+8/\eps^2$ and $R_4^{\R}(1+\eps)\ge1+4/\eps+8/\eps^2+16/\eps^3$ for
every $\eps>0$.
\end{corollary}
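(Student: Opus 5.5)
This is a direct specialisation of Theorem~\ref{thm:lower-general}, so the plan is to substitute $k=3$ and $k=4$ into $L_k(\eps)=1+\sum_{j=1}^{k-1}2^{j+1}/\eps^{j}$. For $k=3$ the sum runs over $j=1,2$ and gives $1+4/\eps+8/\eps^2$. For $k=4$ it runs over $j=1,2,3$ and gives $1+4/\eps+8/\eps^2+16/\eps^3$. The theorem holds for every $\eps>0$, so the corollary needs no restriction on $\eps$.

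The only point to check is the link between the theorem's conclusion and the quantity $R_k^{\R}(1+\eps)$. Take any deterministic algorithm counted in the infimum defining $R_k^{\R}(1+\eps)$. It is $(1+\eps)$-consistent and must handle every placement of the $k$ servers on $\R$, in particular the placement $a_0,\dots,a_{k-1}$ chosen in the proof of Theorem~\ref{thm:lower-general}. Its robustness is therefore at least $L_k(\eps)$, and taking the infimum over such algorithms gives $R_k^{\R}(1+\eps)\ge L_k(\eps)$.

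The proof of Theorem~\ref{thm:lower-general} reaches $L_k(\eps)$ only as a supremum, letting $x\downarrow x_\eps$ and the locations approach the recurrence $b_j$. This causes no difficulty: each admissible choice gives a strict lower bound on the algorithm's robustness, so the supremum is also a lower bound, as the proof already notes. I expect no real obstacle here.
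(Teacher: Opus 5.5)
Your proposal is correct and follows the paper's route: the corollary is the substitution $k=3$ and $k=4$ into $L_k(\eps)$ from Theorem~\ref{thm:lower-general}, which the paper states without separate proof. Your remarks on the infimum defining $R_k^{\R}$ and on passing to the supremum over admissible locations match the justification at the end of that theorem's proof. One minor point of wording: each admissible choice gives an ordinary lower bound on robustness; the strict inequalities are in the choice of locations.
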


\begin{corollary}[no additive trade-off]\label{cor:additive}
No deterministic algorithm is $(1+\eps)$-consistent with robustness $f(\eps)+g(k)$ for
$f(\eps)=o(1/\eps^2)$; in particular robustness $O(1/\eps)+O(\log k)$ is impossible for
$(1+\eps)$-consistent algorithms.
\end{corollary}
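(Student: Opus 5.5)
The claim is a direct consequence of Theorem~\ref{thm:lower-general}. I will argue by contradiction, fixing $\eps$ and letting $k$ grow. Suppose some family of deterministic algorithms is $(1+\eps)$-consistent for every $k$ and has robustness at most $f(\eps)+g(k)$ with $f(\eps)=o(1/\eps^2)$. Since every such algorithm also runs on the real line, Theorem~\ref{thm:lower-general} applies to it. For every $k\ge3$ this gives
\[
f(\eps)+g(k)\;\ge\;L_k(\eps)\;\ge\;1+\frac4\eps+\frac8{\eps^2},
\]
because every summand of $L_k(\eps)$ is positive and so $L_k$ is nondecreasing in $k$.

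The natural next step is to fix $k=3$, which gives $f(\eps)\ge 8/\eps^2-g(3)$. Since $g(3)$ is a constant and $f(\eps)=o(1/\eps^2)$, this already fails as $\eps\downarrow0$. The only point needing care is the order of quantifiers: the statement quantifies over functions $f,g$ that are independent of each other, and $g(3)$ does not depend on $\eps$. With that observed, the contradiction is immediate. If $f$ and $g$ were allowed to be negative, I would simply use the lower bound $L_k\ge1$ in the same way.

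The ``in particular'' clause then follows by specialisation. Robustness $O(1/\eps)+O(\log k)$ means $f(\eps)\le C/\eps$ and $g(k)\le C'\log k$, and $C/\eps=o(1/\eps^2)$. Alternatively, one can take $k=3$ directly and note that $C/\eps+C'\log3<8/\eps^2$ for small $\eps$.

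I expect no real obstacle here. The only subtle step is making sure the bound at a single fixed $k$, namely $k=3$, already contradicts the hypothesis, so that no uniformity in $k$ is needed.
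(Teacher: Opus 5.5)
Your argument is correct and is the paper's proof: fix $k=3$, use the lower bound $L_3(\eps)=1+4/\eps+8/\eps^2$ from Theorem~\ref{thm:lower-general} (recorded in Corollary~\ref{cor:values}), and let $\eps\downarrow0$, so that $8/\eps^2-g(3)$ eventually exceeds $f(\eps)=o(1/\eps^2)$. Your opening phrase about ``fixing $\eps$ and letting $k$ grow'' does not describe what you actually do and can be dropped. The remarks about monotonicity of $L_k$ in $k$ and about possibly negative $f,g$ are likewise unnecessary, since the $k=3$ argument never uses them.
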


\begin{proof}
Fix $k\ge3$. By Corollary~\ref{cor:values}, robustness is at least $8/\eps^2$ for all
$\eps\in(0,1]$; as $\eps\downarrow0$ this exceeds $f(\eps)+g(k)$ eventually.
\end{proof}

\begin{corollary}[growth in the number of servers]\label{cor:separation}
For every fixed $c\in[1,3)$, $R_k^{\R}(c)$ grows exponentially in $k$. At the endpoint,
$R_k^{\R}(3)\ge2k-1$.
\end{corollary}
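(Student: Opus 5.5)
The plan is to obtain both claims directly from Theorem~\ref{thm:lower-general}, using $R_k^{\R}(c)\ge L_k(c-1)$ for $c>1$ and the prediction-free bound of Proposition~\ref{prop:floor} where convenient.

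\emph{Exponential growth for fixed $c\in(1,3)$.} Write $c=1+\eps$ with $\eps\in(0,2)$. Then $q=2/\eps>1$. Theorem~\ref{thm:lower-general} gives
\[
R_k^{\R}(c)\ge 1+2\sum_{i=1}^{k-1}q^i\ge 2q^{k-1}=2\Bigl(\frac{2}{c-1}\Bigr)^{k-1}.
\]
Because $2/(c-1)>1$ is a constant depending only on $c$, this lower bound is exponential in $k$.

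The case $c=1$ is not covered by the theorem as stated, since it requires $\eps>0$. I will handle it by monotonicity. A $1$-consistent algorithm is also $(1+\eps)$-consistent for every $\eps>0$, so $R_k^{\R}(1)\ge R_k^{\R}(1+\eps)$. Taking, say, $\eps=1$ gives $R_k^{\R}(1)\ge 1+2\sum_{i=1}^{k-1}2^i=2^{k+1}-3$, which is again exponential. So the whole range $[1,3)$ is covered.

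\emph{Endpoint $c=3$.} Here $\eps=2$ and $q=1$, so
\[
L_k(2)=1+\sum_{j=1}^{k-1}\frac{2^{j+1}}{2^j}=1+2(k-1)=2k-1.
\]
One point needs checking. The construction in the proof of Theorem~\ref{thm:lower-general} chooses locations $a_j<x+qa_{j-1}$, and the existence argument covers $q\ge1$. At $q=1$ the upper endpoint is $x+a_{j-1}>a_{j-1}$, so the locations exist and the limiting value is the one computed above.

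I expect no real obstacle. The only care needed is in the two boundary cases: $c=1$, which I handle by monotonicity of $R_k^{\R}$ in $c$, and $q=1$ at the endpoint, which the paper's existence argument already covers. The word ``exponentially'' should be read as bounded below by $\alpha\beta^k$ with constants $\alpha>0$ and $\beta>1$ depending only on $c$.
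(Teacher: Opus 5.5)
Your proposal is correct and follows the paper's own argument: you substitute $\eps=c-1$ into Theorem~\ref{thm:lower-general} for $c\in(1,3)$, use monotonicity in $c$ to reduce $c=1$ to a fixed $\eps\in(0,2)$, and set $\eps=2$ to get $2k-1$. Your additional check that the locations exist at $q=1$ is harmless but unnecessary, since the theorem is already stated for every $\eps>0$.
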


\begin{proof}
For $c\in(1,3)$, setting $\eps=c-1$ in Theorem~\ref{thm:lower-general} gives the final term
$2(2/(c-1))^{k-1}$, while for $c=1$ monotonicity permits any fixed value in $(1,3)$. At $c=3$,
substituting $\eps=2$ makes every term of the sum equal to $2$ and gives $2k-1$.
\end{proof}

The next two-scale construction gives a finite-$k$ lower bound throughout $c\in(1,3)$.

\begin{proposition}[uniform lower bound for $1<c<3$]\label{prop:curve}
For every $k\ge2$ and $c\in(1,3)$, every $c$-consistent deterministic algorithm for $k$ servers
on the real line has robustness at least $(3+c)/(c-1)$.
\end{proposition}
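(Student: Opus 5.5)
The bound in the statement is the two-server value in disguise: with $\eps=c-1$ one has $3+c=4+\eps$, so $(3+c)/(c-1)=(4+\eps)/\eps$. The plan is therefore to embed the lower-bound half of Theorem~\ref{thm:k2} into an instance with $k$ servers. The extra $k-2$ servers are placed at a second, much larger scale $L$, where they serve only as zero-cost padding. Lemma~\ref{lem:residual} then controls every continuation.

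\emph{Construction.} Put $s_0=0$ and $s_1=1$, and place the remaining servers $a_2<\dots<a_{k-1}$ at distinct points of $[L,L+1]$, with $L$ a large parameter. Set $x_\eps=\eps/(2+\eps)$ and fix $x\in(x_\eps,1)$. Since $c<3$ we have $\eps<2$, so $x_\eps<1/2$ and this interval is nonempty. The first request is $(x,s_1)$. Every completion issued below consists of one further request near $[0,1]$ together with the exact hits $(a_i,a_i)$ for $2\le i\le k-1$. Hence every offline optimum can be read off from Lemma~\ref{lem:uncross}, as in the proof of Theorem~\ref{thm:lower-general}.

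\emph{Case analysis on the first assignment.}
\begin{enumerate}
\item If $A$ sends $x$ to some far server $a_i$, the adversary completes with $(0,s_0)$ and the exact hits. The predicted matching is order-preserving, hence optimal, and has cost $1-x$. The algorithm already pays at least $L-x$. For $L$ large this exceeds $c(1-x)$, contradicting consistency.
\item If $A$ sends $x$ to $s_0$, the adversary uses the same completion. The predicted matching again has cost $1-x$ and is perfect. After the first step the free servers are $s_1$ and the far ones, and the pending requests are $0$ and the $a_i$. By Lemma~\ref{lem:residual} the remaining cost is at least the minimum matching cost of these sets, which is $1$ once $L$ is large. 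So $A$ pays at least $1+x$. Since $(1+x)/(1-x)>1+\eps=c$ for $x>x_\eps$, consistency is violated.
\item Hence $A$ must send $x$ to $s_1$. The adversary then plays $r_2=1$ with an arbitrary prediction, followed by the exact hits. Now $\OPT=x$, witnessed by $x\mapsto s_0$, $1\mapsto s_1$ and the hits. The residual minimum matching of $\{1\}\cup\{a_i\}$ to $\{s_0\}\cup\{a_i\}$ costs $1$. So $A$ pays at least $2-x$.
\end{enumerate}

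\emph{Conclusion.} In every admissible case the ratio is at least $(2-x)/x$. Letting $x\downarrow x_\eps$ gives the supremum $(4+\eps)/\eps=(3+c)/(c-1)$; as in Theorem~\ref{thm:lower-general}, a supremum of strict lower bounds is still a lower bound on robustness.

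The main obstacle is the residual-matching estimates in the second and third cases. One has to check that, with $L$ large, the cheapest way to finish never benefits from routing a near request to a far server or the reverse. Any such crossing costs at least $L-1$, which dominates once $L>2$, so I expect this to be a routine check. A second point to verify is that the consistency violations above are strict for each fixed $x$, so that $L$ can be chosen after $x$.
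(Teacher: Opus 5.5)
Your proof is correct, but it takes a different route from the paper's proof of this proposition.

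\textbf{The paper's route.} The paper cites Theorem~\ref{thm:k2} for $k=2$. For $k\ge3$ it places the servers at $0,1,\dots,k-1$ and works as follows.
\begin{enumerate}
\item It issues the padding requests at $3,\dots,k-1$ first. Exact hits are forced through $\OPT=0$ completions, so only finiteness of robustness is used at this stage.
\item It forces $(1,s_0)$ onto $s_1$ in the same way.
\item It plays $(w,s_1)$ with $w<4/(1+c)$. Taking $s_2$ is refuted by the perfect completion $(2,s_2)$, which gives ratio $(4-w)/w>c$. Taking $s_0$ is punished by a final request at $0$, which gives ratio $(w+2)/(2-w)\to(3+c)/(c-1)$.
\end{enumerate}
The paper's witness is therefore a unit-scale window. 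The critical decision occurs at a request whose predicted server is already consumed, after predicted cost $1$ has accumulated.

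\textbf{Your route.} You embed the two-server adversary of Theorem~\ref{thm:k2} verbatim. The padding becomes exact hits appended at the end, and consistency alone excludes moves to padding servers. This is uniform in $k$ and covers $k=2$ without a separate citation. It is in effect the first stage of the construction in Theorem~\ref{thm:lower-general}, truncated after one decision. Indeed, the proposition already follows from that theorem, since $L_k(\eps)\ge 1+4/\eps=(3+c)/(c-1)$ for every $k\ge2$.

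\textbf{The step you flagged.} It is not actually an obstacle. Lemma~\ref{lem:uncross} gives the residual minimum $1$ in your cases (b) and (c) exactly, for any placement of the $a_i$ to the right of $1$. A large $L$ is needed only in case (a), where $L>c$ already suffices, independently of $x$: then $L-x>c-x\ge c(1-x)$.
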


\begin{proof}
For $k=2$ this is Theorem~\ref{thm:k2} (at $c=1+\eps$, $(3+c)/(c-1)=(4+\eps)/\eps$). For
$k\ge3$, fix $w\in(1,2)$ with $w<4/(1+c)$, and place the servers at $0,1,2,3,\dots,k-1$.

\emph{Padding.} For $\tau=1,\dots,k-3$ the adversary requests the point $\tau+2$ with the
truthful prediction $s_{\tau+2}$. If the algorithm ever matches such a request elsewhere it pays
at least $1$, and the adversary requests every remaining unrequested server point, one per step:
the offline optimum is $0$ while the algorithm has paid $\ge1$, contradicting finite robustness.
So all padding requests are exact hits and the free servers are $\{s_0,s_1,s_2\}$ at
$\{0,1,2\}$.

\emph{Remaining requests.} Request $(1,s_0)$. If the algorithm does not select $s_1$, complete
the input by requesting every remaining server point. This gives positive algorithmic cost and
$\OPT=0$, so the algorithm must select $s_1$. Request
$(w,s_1)$. If the algorithm plays $s_2$, complete with $(2,s_2)$: the full prediction sequence
forms the matching (padding hits, $1\to s_0$, $w\to s_1$, $2\to s_2$) of cost $w$,
which is order-preserving on the sorted requests, hence optimal: perfect, while the algorithm
pays $(2-w)+2$ on the window, ratio $(4-w)/w>c$ by the choice of $w$, inconsistent. So it
plays $s_0$, paying $w$; the adversary finishes with a request at $0$, forced to $s_2$ at cost
$2$. The window requests $\{1,w,0\}$ have $\OPT=2-w$ (order-preserving:
$0\to s_0,\,1\to s_1,\,w\to s_2$), so $R\ge(w+2)/(2-w)$; letting $w\uparrow4/(1+c)$ gives
$(3+c)/(c-1)$.
\end{proof}

\begin{remark}
For $\eps\le1$ the lower bound in Theorem~\ref{thm:lower-general} strictly dominates Proposition~\ref{prop:curve} at every
$k\ge3$, already through the term $8/(c-1)^2$ at $k=3$. Proposition~\ref{prop:curve} also applies
when $\eps>1$.
\end{remark}

\section{The deterministic algorithm}\label{sec:algorithm}

Both parts of the upper bound use the following identity: if $w_t$ is the selected server and
$g_t:=d(r_t,w_t)$ its cost, then the assignment at request $t$ \emph{follows the prediction} when
$w_t=p_t$ and \emph{deviates} otherwise.

\begin{lemma}[cost decomposition]\label{lem:decomp}
On any perfect instance, any algorithm satisfies
$\cost=\OPT+\sum_{t:w_t\ne p_t}\bigl(g_t-d(r_t,p_t)\bigr)$.
\end{lemma}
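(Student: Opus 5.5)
The identity is a direct bookkeeping statement, and the plan is simply to write the algorithm's cost and the benchmark as sums over the same requests and subtract. Since every request is matched at the moment it arrives, the algorithm's total cost is $\cost=\sum_{t=1}^k g_t$, where $g_t=d(r_t,w_t)$ as defined just before the lemma. This uses only the model of Section~\ref{sec:prelim}: each request receives exactly one server and pays exactly its distance to that server.

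Next we invoke perfection of the prediction sequence. By the definition in Section~\ref{sec:prelim}, a perfect sequence satisfies $p_t=M^*(r_t)$ for some offline optimal matching $M^*$. Hence $\OPT=\sum_{t=1}^k d(r_t,p_t)$, which is the identity already recorded in the preliminaries. Subtracting gives
\[
\cost-\OPT=\sum_{t=1}^k\bigl(g_t-d(r_t,p_t)\bigr).
\]

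The last step is to split this sum according to whether $w_t=p_t$. Every term with $w_t=p_t$ has $g_t=d(r_t,p_t)$ and so contributes $0$. What remains is exactly the sum over the deviating requests, $t$ with $w_t\ne p_t$, which is the claimed formula.

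No step is a genuine obstacle. The one point that needs care is that the formula uses $d(r_t,p_t)$ even at deviating steps where $p_t$ may already be occupied. The proof never needs $p_t$ to be free, because the identity for $\OPT$ comes from the predicted matching $M^*$ and not from the algorithm's run, and the subtraction is purely termwise. The formula therefore holds for any algorithm, including one that deviates because the predicted server has already been used.
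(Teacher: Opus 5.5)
Your proposal is correct and matches the paper's proof: both write the cost as $\sum_t g_t$, split by whether $w_t=p_t$, use perfection to get $\sum_t d(r_t,p_t)=\OPT$, and subtract termwise. Your remark that the identity never needs $p_t$ to be free at deviating steps is accurate, and it is implicit in the paper's two-line argument.
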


\begin{proof}
$\cost=\sum_{t:w_t=p_t}d(r_t,p_t)+\sum_{t:w_t\ne p_t}g_t$ and, by perfection,
$\sum_t d(r_t,p_t)=\cost(M^*)=\OPT$; substitute.
\end{proof}

\paragraph{The algorithm.}
For a request $y$, a prediction $p$ and a target $w$, define the \emph{assignment excess}
$\Delta(y;p\to w):=d(y,w)+d(w,p)-d(y,p)\in[0,2d(y,w)]$. Starting with $\sigma_0=0$, the algorithm
$\Sd_\eps$ maintains $\sigma_t\ge0$ after request $t$ and, as long as the predicted labels in the
prefix are distinct, a minimum-cost matching $\Pi_t$ of cost $\Phi_t$ between its free set
$F^A_t$ and the reference free set $F^*_t:=S\setminus\{p_1,\dots,p_t\}$. At step $t$, let
\[
B_t\;:=\;\eps\sum_{j\le t}d(r_j,p_j)\;-\;\sigma_{t-1}.
\]
Fix a total order on server labels. Among minimum-cost matchings, choose $\Pi_t$ to pair every
$u\in F^A_t\cap F^*_t$ to itself, then break ties lexicographically. Such a choice exists because,
by triangle inequality, replacing $(u,q),(v,u)$ by $(u,u),(v,q)$ cannot increase cost.
The algorithm acts as follows while the predicted labels are distinct:
\begin{enumerate}\itemsep2pt
\item Set $b_t=p_t$ if
$p_t\in F^A_{t-1}$. Otherwise, set $b_t=c_t$, the unique $F^A_{t-1}$-side
vertex paired with $p_t$ in $\Pi_{t-1}$.
\item Another free server $w\ne b_t$ is admissible if
$\Delta_t(w):=\Delta(r_t;p_t\to w)\le B_t$. Selecting it sets
$\sigma_t=\sigma_{t-1}+\Delta_t(w)$. Selecting $b_t$ sets $\sigma_t=\sigma_{t-1}$.
\item Select the cheapest admissible server; ties favour $b_t$, then the
fixed server-label order.
\end{enumerate}
If a predicted label repeats, the algorithm selects a nearest free server on this and every
subsequent request, with ties broken by label order, and sets $\sigma_t=\sigma_{t-1}$. Two facts
used below are:
\textup{(F1)}~every free target $w$ has $\Delta_t(w)\le2d(r_t,w)$ (triangle), and hence every
selected deviation has $\Delta_t(w)\le2g_t$;
\textup{(F2)}~by induction, $\sigma_t\le\eps\sum_{j\le t}d(r_j,p_j)$ after every step.

The algorithm is online and deterministic; on a finite explicitly represented metric, each
minimum-cost matching $\Pi_t$ can be recomputed in polynomial time. The construction is specific
to metric matching and uses no auxiliary competitive algorithm.

\begin{theorem}[distinct predictions and consistency]\label{thm:consistency}
For every $k$, every metric space, and every $\eps\in(0,2]$, each input with pairwise distinct
predicted server labels satisfies
\[
\cost(\Sd_\eps)\le(1+\eps)\sum_{t=1}^k d(r_t,p_t).
\]
Consequently, $\Sd_\eps$ is $(1+\eps)$-consistent.
\end{theorem}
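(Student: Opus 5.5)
The plan is an amortised potential argument with $\Phi_t$, the minimum matching cost between the algorithm's free set $F^A_t$ and the reference free set $F^*_t$. With distinct predicted labels the algorithm never enters its fallback mode, and $p_t\in F^*_{t-1}$ at every step. Initially $F^A_0=F^*_0=S$, so $\Phi_0=0$; at the end both free sets are empty, so $\Phi_k=0$. The heart of the proof is the per-step inequality
\[
g_t+\Phi_t-\Phi_{t-1}\;\le\; d(r_t,p_t)+(\sigma_t-\sigma_{t-1}).
\]
Summing it over $t$ telescopes to $\cost(\Sd_\eps)\le\sum_t d(r_t,p_t)+\sigma_k$. Then (F2) gives $\sigma_k\le\eps\sum_t d(r_t,p_t)$, which is the claimed bound. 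For consistency, a perfect sequence has distinct labels and $\sum_t d(r_t,p_t)=\OPT$, so the bound becomes $(1+\eps)\OPT$.

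To prove the per-step inequality I would exhibit, in each case, a feasible matching between $F^A_t$ and $F^*_t$ obtained by modifying $\Pi_{t-1}$. Its cost is an upper bound on $\Phi_t$. There are three cases.
\begin{enumerate}
\item \emph{$w_t=b_t=p_t$.} Then $p_t\in F^A_{t-1}\cap F^*_{t-1}$, and by the tie-breaking rule $p_t$ is paired with itself in $\Pi_{t-1}$. Deleting this zero-cost pair gives $\Phi_t\le\Phi_{t-1}$, and $g_t=d(r_t,p_t)$.
\item \emph{$w_t=b_t=c_t\ne p_t$.} Deleting the pair $(c_t,p_t)$ from $\Pi_{t-1}$ gives $\Phi_t\le\Phi_{t-1}-d(c_t,p_t)$. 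By the triangle inequality $g_t=d(r_t,c_t)\le d(r_t,p_t)+d(p_t,c_t)$, so the inequality holds with $\sigma_t=\sigma_{t-1}$.
\item \emph{Deviation to $w\ne b_t$.} In $\Pi_{t-1}$, let $w$ be paired with some $q\in F^*_{t-1}$, where $q=w$ is allowed, and let $b_t$ be paired with $p_t$; this holds in both definitions of $b_t$. Remove $w$ from the algorithm side and $p_t$ from the reference side, and re-pair $b_t$ with $q$. If $q=p_t$, which happens only when $w$ is $p_t$'s partner, i.e.\ $w=b_t$, this situation does not arise. By the triangle inequality $d(b_t,q)\le d(b_t,p_t)+d(p_t,w)+d(w,q)$, so $\Phi_t-\Phi_{t-1}\le d(w,p_t)$. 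Hence
\[
g_t+\Phi_t-\Phi_{t-1}\le d(r_t,w)+d(w,p_t)=d(r_t,p_t)+\Delta_t(w),
\]
and $\Delta_t(w)$ is exactly the increment $\sigma_t-\sigma_{t-1}$.
\end{enumerate}

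I expect the main obstacle to be this third case. It needs careful bookkeeping to check that the repaired matching is a genuine perfect matching between $F^A_t$ and $F^*_t$, in particular when $q=w$ or when $b_t=p_t$ is self-paired. Admissibility ($\Delta_t(w)\le B_t$) plays no role in the amortised inequality. It matters only through (F2), which I would verify by induction: $\sigma_t\le\sigma_{t-1}+B_t=\eps\sum_{j\le t}d(r_j,p_j)$.
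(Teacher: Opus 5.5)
Your proposal is correct and matches the paper's proof: the same potential $\Phi_t$, the same per-step amortised inequality obtained by editing the tie-broken matching $\Pi_{t-1}$ with the triangle inequality, and the same telescoping combined with (F2). The only difference is that you merge the paper's two deviation cases (free versus consumed prediction) into one by observing that $b_t$ is paired with $p_t$ in both situations, which is valid because of the identity-pair convention.
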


\begin{proof}
Fix an input with distinct predicted labels and put $H=\sum_t d(r_t,p_t)$. The matchings $\Pi_t$
are then defined for every prefix, with $\Phi_0=\Phi_k=0$, so it suffices to prove at every step
\[
\bigl(g_t-d(r_t,p_t)\bigr)+\bigl(\Phi_t-\Phi_{t-1}\bigr)\;\le\;
\Delta_t(w_t)\cdot\mathbf 1[w_t\ne b_t],
\]
and then sum and telescope. Four cases; each edits the optimal matching $\Pi_{t-1}$ and uses
only the triangle inequality.
\begin{itemize}\itemsep2pt
\item \emph{The prediction is free and selected.} The excess is $0$, and the identity-pair
convention puts $(p_t,p_t)$ in $\Pi_{t-1}$; deleting that edge leaves a matching of the new free
sets, so $\Phi$ does not increase.
\item \emph{A deviation is selected while $p_t$ is free.} The excess is
$=d(r_t,w)-d(r_t,p_t)=\Delta_t(w)-d(w,p_t)$. Drift: delete $(w,q^*)$ and $(e^A,p_t)$, add
$(e^A,q^*)$. If the two deleted edges coincide, deleting that edge does not increase $\Phi$ and the
claim is immediate; otherwise the rise is at most $d(w,p_t)$, and the sum is at most
$\Delta_t(w)$.
\item \emph{The server $c_t$ is selected.} The excess is
$=d(r_t,c_t)-d(r_t,p_t)\le d(p_t,c_t)$; deleting the edge $(c_t,p_t)$ lowers $\Phi$ by at
least $d(p_t,c_t)$, so the sum is nonpositive.
\item \emph{A deviation $w\ne c_t$ is selected while $p_t$ is unavailable.} The excess is
$\Delta_t(w)-d(w,p_t)$. For the drift,
delete $(c_t,p_t)$ and $(w,q^*)$, add $(c_t,q^*)$: rise at most
$d(c_t,q^*)-d(w,q^*)-d(c_t,p_t)\le d(w,p_t)$. Sum $\le\Delta_t(w)$.
\end{itemize}
Summing and telescoping gives
\[
\cost(\Sd_\eps)-H\le\sigma_k\le\eps H,
\]
where the last inequality is \textup{(F2)}, proving the first statement and, on a correct input
where $H=\OPT$, consistency. The four drift estimates use only distinctness of the predicted
labels and the triangle inequality; before a predicted label repeats, $\Phi$ can rise only on a
deviation, and then by at most $d(w_t,p_t)$.
\end{proof}

\begin{theorem}[robustness in arbitrary metrics]\label{thm:robustness}
For every $k\ge2$, every metric space and every $\eps\in(0,2]$, writing
$\gamma:=2+4/\eps$, $\Sd_\eps$ is
\[
\left[1+\frac4\eps\gamma^{k-2}
 +\frac4\eps\frac{\gamma^{k-2}-1}{\gamma-1}\right]\text{-robust}.
\]
For $\eps\in(0,1]$ this is at most $6^{k-1}/\eps^{k-1}$.
\end{theorem}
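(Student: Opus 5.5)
The plan is to run the online-to-offline charging argument of the classical greedy analysis, with the rejection property of $\Sd_\eps$ supplying the factor $\gamma=2+4/\eps$ at each step. Fix an input and an offline optimal matching $M^*$ of cost $\OPT$. Write $C_t=\sum_{j\le t}g_j$ for the algorithm's cost and $H_t=\sum_{j\le t}d(r_j,p_j)$ for the accumulated predicted cost. The aim is a recurrence for $C_t$ (possibly augmented by a multiple of $H_t$) over the first $k-1$ requests, after which only one server is free.

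\emph{Step 1 (terminal step).} At request $k$ exactly one server $w_k$ is free. Take the symmetric difference of the algorithm's partial matching on $r_1,\dots,r_{k-1}$ with $M^*$. It contains an alternating path from $r_k$ to $w_k$, and the triangle inequality gives
\[
g_k\le\OPT+C_{k-1}.
\]
Hence $\cost\le\OPT+2C_{k-1}$. This is the source of the leading $1$, and of a factor $2$ in front of the recurrence solution.

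\emph{Step 2 (rejection bound).} Let $o_t$ be a free server that is cheap relative to $\OPT$, chosen by the same alternating-path argument applied to the current free set, so that $d(r_t,o_t)\le\OPT+C_{t-1}$. Suppose the algorithm pays $g_t>d(r_t,o_t)$. Because it selects the cheapest admissible server, $o_t$ must be either $b_t$ itself or inadmissible. In the inadmissible case (F1), (F2) and the definition of $B_t$ give
\[
2d(r_t,o_t)\ge\Delta_t(o_t)>B_t\ge\eps d(r_t,p_t),
\]
so $d(r_t,p_t)<(2/\eps)\,d(r_t,o_t)$. The same inequality bounds $H_t$ in terms of $d(r_t,o_t)$ and $\sigma_{t-1}$.

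\emph{Step 3 (per-step cost).} I then bound $g_t$. If the algorithm deviates, then $g_t$ is at most the cost of $b_t$ (otherwise $b_t$, being cheaper and favoured by ties, would be chosen). The cost of $b_t$ is at most $d(r_t,p_t)+d(p_t,c_t)$. I would control $d(p_t,c_t)\le\Phi_{t-1}$ by the drift bound at the end of Theorem~\ref{thm:consistency}: $\Phi$ rises only on deviations, by at most $d(w_j,p_j)$, which is in turn bounded by $g_j+d(r_j,p_j)$. Combining this with Step 2 should give
\[
C_t\le\gamma\,C_{t-1}+\tfrac4\eps\OPT.
\]
Here the $2$ in $\gamma$ comes from the detour through $p_t$ and $c_t$, and the $4/\eps$ comes from $2\cdot(2/\eps)$ applied to $d(r_t,o_t)\le\OPT+C_{t-1}$. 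The base case is $C_1\le\frac2\eps\OPT$-ish. Unrolling over $t\le k-1$ and substituting into Step 1 should produce exactly $1+\frac4\eps\gamma^{k-2}+\frac4\eps\frac{\gamma^{k-2}-1}{\gamma-1}$. For $\eps\le1$ we have $\gamma\le6/\eps$ and $\frac4\eps\le\frac{6}{\eps}-2$, so the bracket collapses to at most $(6/\eps)^{k-1}$ by a direct geometric-sum estimate.

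\emph{Repeated labels.} Once a predicted label repeats, the algorithm is greedy. Greedy's step cost is at most $d(r_t,o_t)\le\OPT+C_{t-1}$, which is dominated by the same recurrence, so this branch needs no separate bookkeeping.

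The main obstacle is Step 3. The potential $\Phi_{t-1}$, which bounds the cost of following $c_t$, can itself be large, so it must be charged to earlier algorithm and prediction costs without losing more than a factor $\gamma$ per step. If the direct bound fails, my first fallback is to run the recurrence on the combined quantity $C_t+\Phi_t$, or on $C_t+H_t$, instead of $C_t$ alone.
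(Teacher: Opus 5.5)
Your outline is the paper's proof. It uses the same ingredients: the alternating-path server with $d(r_t,o_t)\le\OPT+C_{t-1}$, the terminal bound $\cost\le\OPT+2C_{k-1}$, the rejection inequality from (F1)--(F2), the estimate $g_t\le d(r_t,b_t)\le d(r_t,p_t)+\Phi_{t-1}$, and the drift bound $\Phi_{t-1}\le C_{t-1}+H_{t-1}$.

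The gap is Step~3. You leave it at ``should give'', and both your recurrence and your account of its constants are wrong. From $C_t\le\gamma C_{t-1}+\frac4\eps\OPT$ and $C_1\le\frac2\eps\OPT$, unrolling gives $1+\frac4\eps\gamma^{k-2}+\frac8\eps\frac{\gamma^{k-2}-1}{\gamma-1}$, not the stated bracket. Your explanation, ``$2\cdot(2/\eps)$ applied to $d(r_t,o_t)$'', is what you get by bounding $d(r_t,p_t)$ and $H_{t-1}$ \emph{separately} through the rejection inequality. That charges the current prediction twice. Once the $\sigma_{t-1}$ term is included, it yields only $C_t<(2+6/\eps)C_{t-1}+\frac4\eps\OPT$, so you lose $\gamma=2+4/\eps$.

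The missing step is the cancellation $d(r_t,p_t)+H_{t-1}=H_t$. Follow it with one application of rejection to all of $H_t$, and use the bound $\sigma_{t-1}\le2C_{t-1}$ from (F1), which you never state:
\[
g_t\le C_{t-1}+H_t,\qquad
\eps H_t<2d(r_t,o_t)+\sigma_{t-1}\le2(\OPT+C_{t-1})+2C_{t-1}.
\]
Thus $g_t<\frac2\eps\OPT+\bigl(1+\frac4\eps\bigr)C_{t-1}$, so $C_t<\gamma C_{t-1}+\frac2\eps\OPT$. For $\eps\le2$, this dominates the other cases: $b_t=p_t$, an admissible $o_t$, and a repeated label.

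The constants now come out as follows. The $4/\eps$ in $\gamma$ is one $\frac2\eps$ from $d(r_t,o_t)\le\OPT+C_{t-1}$ and one from $\sigma_{t-1}/\eps$. The $2$ is the carried $C_{t-1}$ plus the $C_{t-1}$ inside the $\Phi$ bound. With additive term $\frac2\eps\OPT$, the unrolling gives exactly the stated bracket, and your fallback potentials are unnecessary.

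A minor point: in Step~2, $o_t=b_t$ is impossible when $g_t>d(r_t,o_t)$, because $b_t$ is always allowed.
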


\begin{proof}
Fix any instance and an optimal offline matching $o$, with $e_t:=d(r_t,o_t)$ and
$\OPT=\sum_te_t$; let $a_t$ be the algorithm's step costs and $T_t:=\sum_{i\le t}a_i$.

\emph{Free comparison server.} At every step $t$ some free server $v_t$ satisfies
$d(r_t,v_t)\le\bar D_t:=e_t+\sum_{i<t}(a_i+e_i)\le\OPT+T_{t-1}$: follow $r_t\to o_t$; while the
reached server is consumed, say by our step $i$, continue to $r_i\to o_i$. In the union of the
offline matching and our first $t-1$ edges, the component containing $r_t$ is a simple path:
$r_t$ has degree one and every other vertex has degree at most two. Its other endpoint is a server
with no online edge, hence free, and the path length is at most $\bar D_t$.

\emph{Potential function.} While the predicted labels are distinct,
$\Phi_{t-1}\le T_{t-1}+\sum_{i<t}d(r_i,p_i)$: by the pointwise
drift observation at the end of Theorem~\ref{thm:consistency}, $\Phi$ rises only at deviations,
by at most
$d(w_i,p_i)\le a_i+d(r_i,p_i)$.

\emph{Per-step bound.} We prove $a_t\le(2/\eps)\bar D_t+(1+2/\eps)T_{t-1}$. If a predicted label
has repeated, or $v_t$ is the reference server, or $v_t$ is an admissible deviation, then the
selection rule gives $a_t\le d(r_t,v_t)\le\bar D_t$. Otherwise $v_t$ is not admissible:
$\Delta_t(v_t)>B_t$, and since $\Delta_t(v_t)\le2d(r_t,v_t)\le2\bar D_t$ we get $B_t<2\bar D_t$.
By \textup{(F2)} applied through step $t-1$,
$B_t\ge\eps\,d(r_t,p_t)$, so $d(r_t,p_t)<(2/\eps)\bar D_t$. Also, by \textup{(F1)},
$\sigma_{t-1}\le2T_{t-1}$, so
$\eps\sum_{j\le t}d(r_j,p_j)<2\bar D_t+2T_{t-1}$. If $b_t=p_t$, then
$a_t\le d(r_t,p_t)<(2/\eps)\bar D_t$. If $b_t=c_t$, then, writing
$H_t=\sum_{j\le t}d(r_j,p_j)$,
\[
a_t\;\le\;d(r_t,c_t)\;\le\;d(r_t,p_t)+d(p_t,c_t)\;\le\;d(r_t,p_t)+\Phi_{t-1}
\;\le\;T_{t-1}+H_t
\;<\;\frac2\eps\bar D_t+\left(1+\frac2\eps\right)T_{t-1},
\]
using the potential bound and the second consequence.  The identity
$d(r_t,p_t)+H_{t-1}=H_t$ is the cancellation that avoids charging the current prediction twice.
This proves the claimed per-step bound.

\emph{First and terminal steps.} At step $1$, either $o_1=p_1$ or $o_1$ is an admissible
deviation, in which case $a_1\le e_1$, or rejection gives
$\eps d(r_1,p_1)=B_1<\Delta_1(o_1)\le2e_1$ and hence $a_1<(2/\eps)e_1$, so
$T_1=a_1\le(2/\eps)\OPT$ for $\eps\le2$.

After $k-1$ matches, the single free server $u$ is selected by every branch at the terminal step:
it is $b_k$ if the predicted labels are distinct, and the nearest free server otherwise. Since the
$v_k$ must also be $u$,
$a_k\le\bar D_k\le\OPT+T_{k-1}$.

\emph{Recursion.} For $2\le t\le k-1$, the per-step bound and
$\bar D_t\le\OPT+T_{t-1}$ give
$T_t\le\gamma T_{t-1}+(2/\eps)\OPT$, where $\gamma:=2+4/\eps$. Therefore
\[
T_{k-1}\le\left[\frac2\eps\gamma^{k-2}
+\frac2\eps\frac{\gamma^{k-2}-1}{\gamma-1}\right]\OPT,
\]
and the terminal bound gives $\cost\le\OPT+2T_{k-1}$, proving the displayed expression. For
$\eps\le1$, put $q=6/\eps$; since $\gamma\le q$, $q\ge6$, and the geometric sum in the display is
at most $q^{k-2}/(q-1)$, the robustness is at most
$1+(2/3)q^{k-1}+(2/3)q^{k-1}/(q-1)<q^{k-1}$ (with $k=2$ immediate).
\end{proof}

\begin{corollary}[deterministic bounds]\label{cor:bracket}
For every $k\ge2$ and every $\eps\in(0,1]$,
\[
1+\sum_{j=1}^{k-1}\frac{2^{\,j+1}}{\eps^{\,j}}
\le R_k^{\R}(1+\eps)
\le R_k(1+\eps)
\le1+2\left(\frac2\eps+3\right)^{k-1}.
\]
For every $0<\eps\le1/(k-1)$, this implies
\[
\frac{2^k}{\eps^{k-1}}
\le R_k^{\R}(1+\eps)
\le R_k(1+\eps)
\le1+e^{3/2}\frac{2^k}{\eps^{k-1}}.
\]
Consequently, both optimal robustness values are $\Theta_k(1/\eps^{k-1})$ for fixed $k$, and the second display
has constants independent of $k$.
\end{corollary}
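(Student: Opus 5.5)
The plan is to chain the existing bounds and then do elementary estimates. The lower bound in the first display is Theorem~\ref{thm:lower-general}. The middle inequality $R_k^{\R}\le R_k$ holds by definition. For the upper bound I would invoke Theorem~\ref{thm:robustness} with $\gamma=2+4/\eps$ and show that its bracketed expression is at most $1+2(2/\eps+3)^{k-1}$. Write $u=2/\eps$, so $\gamma=2+2u$ and the bracket becomes
\[
1+2u\gamma^{k-2}+2u\,\frac{\gamma^{k-2}-1}{\gamma-1}.
\]
It then suffices to prove
\[
u\gamma^{k-2}+u\,\frac{\gamma^{k-2}-1}{\gamma-1}\le (u+3)^{k-1}.
\]
Since $\gamma-1=1+2u\ge u$, the left side is at most $u\gamma^{k-2}+\gamma^{k-2}=(u+1)(2u+2)^{k-2}$.

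This is where I see the main obstacle. The naive comparison fails: $(2u+2)^{k-2}$ is not bounded by $(u+3)^{k-2}$ once $u>1$. So the recursion in the proof of Theorem~\ref{thm:robustness} must be tightened, or the bound must be rederived. My first attempt would be to revisit the per-step bound and use the sharper estimate $T_t\le(1+2/\eps)T_{t-1}+\dots$ with a coefficient $u+3$ rather than $2u+2$. The idea is to notice that the $\bar D_t$ and $T_{t-1}$ contributions give $a_t\le u\bar D_t+(1+u)T_{t-1}$, with $\bar D_t\le \OPT+T_{t-1}$. Taking $T_t=T_{t-1}+a_t$ and tracking $\OPT$ more carefully, through $\bar D_t\le\OPT+T_{t-1}$ counted once per step, gives a growth of $1+u+1+u=2u+2$ per step. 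That matches $\gamma$, not $u+3$.

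Since the paper's abstract advertises leading coefficient $2^k$, i.e.\ $2u^{k-1}$, the true source must be a sharper version of Theorem~\ref{thm:robustness}. Presumably this is Theorem~\ref{thm:fixedkleading} and Lemma~\ref{lem:closedform}, cited in the introduction. So I would look for a refined bookkeeping in which the accumulated cost $T_{t-1}$ enters with coefficient $3$ rather than $u+2$. This would come from charging the $c_t$ case through $\Phi_{t-1}$ and $H_{t-1}$ separately, while bounding $H_{t-1}$ by consistency-budget considerations at step $t$ rather than by $2T_{t-1}/\eps$. The aim is the recursion $T_t\le(u+3)T_{t-1}+u\,\OPT$, which unrolls to $1+2(u+3)^{k-1}$ after the terminal step $\cost\le\OPT+2T_{k-1}$.

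Given that bound, the second display is routine. The lower bound satisfies $L_k(\eps)\ge 2^k/\eps^{k-1}$ trivially, since all terms are positive. For the upper bound,
\[
2(u+3)^{k-1}=2u^{k-1}\Bigl(1+\tfrac{3\eps}{2}\Bigr)^{k-1}\le\frac{2^k}{\eps^{k-1}}\,e^{3\eps(k-1)/2}\le e^{3/2}\frac{2^k}{\eps^{k-1}}
\]
when $\eps(k-1)\le1$. Finally, the $\Theta_k$ statement follows by comparing the leading terms.
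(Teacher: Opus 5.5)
Your treatment of the lower bound, the middle inequality, and the second display matches the paper. That covers dropping all but the last term of $L_k(\eps)$ and the estimate $(1+3\eps/2)^{k-1}\le e^{3\eps(k-1)/2}\le e^{3/2}$. You are also right that Theorem~\ref{thm:robustness} only gives base $\gamma=2u+2$.

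The gap is the upper bound $1+2(u+3)^{k-1}$. In the paper this is simply Lemma~\ref{lem:closedform}, which states exactly that bound. You name that lemma, but then plan to rederive it from a pathwise one-variable recursion $T_t\le(u+3)T_{t-1}+u\,\OPT$, and this recursion is false for $\Sd_\eps$. Take the instance of Remark~\ref{rem:k3endpoint} with $\eps\le1$: servers $W=-a$, $U=a$, $P=b$ with $b\uparrow 6a/\eps$, requests $0,W,P$ with predictions $P,W,U$, and $W$ labelled before $U$. At the first request the algorithm takes $W$ (excess $2a\le\eps b$, tied with $U$ at cost $a$). At the second request it rejects $U$ because $4a>\eps b-2a$, and pays $a+b$ for $P$. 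Thus $T_1=a$ and $\OPT=a$, while $T_2=2a+b\to(3u+2)a$. Your recursion allows only $(2u+3)a$, so it fails at $t=2$ since $u\ge2$.

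The missing idea is to carry the spent budget $\sigma_t$ as a second state variable and bound it against $\OPT$, not against $T_t$. The loss in Theorem~\ref{thm:robustness} comes from the estimate $\sigma_{t-1}\le2T_{t-1}$, which is attained on the path above ($\sigma_1=2T_1$). You proposed to rebound $H_{t-1}$, but the $uT_{t-1}$ term you want to remove is $(u/2)\sigma_{t-1}$ after that estimate, so it is $\sigma_{t-1}$ that must be rebounded.

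Theorem~\ref{thm:fixedkleading} uses two sharper facts from Lemma~\ref{lem:fixedkprefix}:
\begin{itemize}
\item the potential inequality $T_t-H_t+\Phi_t\le\sigma_t$, which turns the consumed-prediction branch into $T_t\le H_t+\sigma_{t-1}$;
\item the excess bound $\delta_t\le2D_t\le2(T_{t-1}+\OPT)$, which makes $\sigma$ grow one power of $u$ more slowly than $T$.
\end{itemize}
Together with the rejection bound $H_t<uD_t+(u/2)\sigma_{t-1}$, these give $T_t\le P_t\OPT$ and $\sigma_t\le Q_t\OPT$, where $P_t=(u+1)P_{t-1}+(1+u/2)Q_{t-1}+u$ and $Q_t=Q_{t-1}+2P_{t-1}+2$. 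Lemma~\ref{lem:closedform} then proves $P_t\le(u+3)^t$ by a simultaneous induction with $Q_t\le\beta(u+3)^t$, where $\beta=2(u+4)/((u+2)(u+3))$. The terminal step $\cost\le\OPT+2T_{k-1}$ finishes. Citing Lemma~\ref{lem:closedform} directly, as the paper does, would have closed your argument.
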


\begin{proof}
The lower bound is Theorem~\ref{thm:lower-general}, the middle inequality follows from the definitions,
and the upper bound is Lemma~\ref{lem:closedform}. For the second display, put $q=2/\eps$. Then
\[
(q+3)^{k-1}
=q^{k-1}\left(1+\frac{3\eps}{2}\right)^{k-1}
\le e^{3\eps(k-1)/2}q^{k-1}
\le e^{3/2}q^{k-1}.
\]
Together with the last term $2^k/\eps^{k-1}$ in the lower bound, this completes the proof.
\end{proof}

\begin{remark}
The rejection bound is applied to the first $k-1$ requests; at the final request there is only one
free server, which accounts for the power $k-1$. For $k=2$ the theorem gives $1+4/\eps$, equal to
the value in Theorem~\ref{thm:k2}, while the next section gives a smaller upper bound for $k=3$.
\end{remark}

\section{Three servers}\label{sec:k3}

\begin{theorem}[$k=3$ upper bound]\label{thm:k3ub}
For $k=3$ in every metric and every $\eps\in(0,1]$, $\Sd_\eps$ is
$\bigl(1+12/\eps+8/\eps^2\bigr)$-robust.  For $\eps\in(1,2]$ it is
$\bigl(1+16/\eps+8/\eps^2\bigr)$-robust.  Consequently, for $0<\eps\le1$,
\[
8+4\eps+\eps^2\;\le\;\eps^2R_3^{\R}(1+\eps)\;\le\;8+12\eps+\eps^2,
\qquad\text{so}\qquad
\lim_{\eps\downarrow0}\eps^2R_3^{\R}(1+\eps)=8.
\]
\end{theorem}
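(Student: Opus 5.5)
The plan is to rerun the proof of Theorem~\ref{thm:robustness} at $k=3$ while keeping the exact bookkeeping of the first request, instead of passing through the crude bounds $T_1\le(2/\eps)\OPT$ and $\sigma_1\le 2T_1$. The terminal step is unchanged: at the third request only one server is free, so $a_3\le\bar D_3\le\OPT+T_2$ and
\[
\cost\le\OPT+2T_2 .
\]
It therefore suffices to show $T_2\le(6/\eps+4/\eps^2)\OPT$ for $\eps\le1$, and $T_2\le(8/\eps+4/\eps^2)\OPT$ for $\eps\in(1,2]$. As before, $o$ is an optimal offline matching and $e_t=d(r_t,o_t)$. The proof splits according to what $\Sd_\eps$ does at step~$1$.

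\emph{Case 1: $o_1=p_1$, or $o_1$ is an admissible deviation.} The selection rule gives $a_1\le e_1\le\OPT$. I would then apply the per-step bound of Theorem~\ref{thm:robustness} unchanged at step $2$, together with $\bar D_2\le\OPT+T_1$. This gives
\[
T_2\le(2+4/\eps)T_1+(2/\eps)\OPT\le(2+6/\eps)\OPT,
\]
hence robustness at most $5+12/\eps$. For $\eps\le1$ this is at most $1+12/\eps+8/\eps^2$, since $4\le 8/\eps^2$. For $\eps\le2$ it is at most $1+16/\eps+8/\eps^2$.

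\emph{Case 2: $o_1$ is rejected.} Rejection gives $\eps\,d(r_1,p_1)<\Delta_1(o_1)\le2e_1$. Whatever $w_1$ is selected (either $p_1$ or a cheaper admissible deviation), we have $a_1\le d(r_1,p_1)<(2/\eps)e_1$.

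\emph{Subcase 2a: $w_1=p_1$.} Here $\sigma_1=0$, $F^A_1=F^*_1$ and $\Phi_1=0$. So at step $2$ either $b_2=p_2$, or $p_2$ repeats $p_1$ and the nearest-server rule applies. If the algorithm does not already pay at most $\bar D_2$, then rejection of $v_2$ uses the exact value $B_2=\eps H_2$ rather than the estimate through (F2) and (F1). This gives $\eps(d(r_1,p_1)+d(r_2,p_2))<2\bar D_2$. Hence
\[
T_2=a_1+a_2\le d(r_1,p_1)+d(r_2,p_2)<(2/\eps)\bar D_2\le(2/\eps)(\OPT+a_1).
\]
With $a_1<(2/\eps)e_1$ this yields $T_2\le(2/\eps+4/\eps^2)\OPT$.

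\emph{Subcase 2b: $w_1$ is a deviation $w\ne p_1$.} Admissibility gives
\[
\sigma_1=\Delta_1(w)\le\eps\,d(r_1,p_1),
\qquad\text{hence}\qquad
d(r_1,w)+d(w,p_1)\le(1+\eps)d(r_1,p_1).
\]
By the drift analysis, $\Phi_1\le d(w,p_1)$. At step $2$, rejection of $v_2$ gives $\eps H_2<2\bar D_2+\sigma_1$, hence $\eps\,d(r_2,p_2)<2\bar D_2$. If $b_2=c_2$, then
\[
a_2\le d(r_2,p_2)+\Phi_1\le d(r_2,p_2)+(1+\eps)d(r_1,p_1)-a_1 .
\]
Hence
\[
T_2\le d(r_2,p_2)+(1+\eps)d(r_1,p_1)<(2/\eps)(\OPT+a_1)+(2/\eps+2)e_1\le(2+4/\eps+4/\eps^2)\OPT .
\]
The case $b_2=p_2$ is only easier. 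This gives robustness at most $5+8/\eps+8/\eps^2$, which is at most $1+12/\eps+8/\eps^2$ for $\eps\le1$ and at most $1+16/\eps+8/\eps^2$ for $\eps\le2$, because $4\le 4/\eps$ resp.\ $4\le8/\eps$.

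The main obstacle is Subcase~2b. There the potential $\Phi_1$ and the charge $\sigma_1$ are both nonzero, and the saving must come from cancelling $a_1$ against the $(1+\eps)d(r_1,p_1)$ bound on $d(w,p_1)$. I would verify that inequality carefully, together with the branches at step $2$ where the algorithm pays at most $\bar D_2$ or a predicted label repeats; these should only help.

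The final display of the statement then follows directly. Multiply the robustness bound by $\eps^2$ and combine it with Corollary~\ref{cor:values}: the lower bound gives $\eps^2+4\eps+8$ and the upper bound gives $\eps^2+12\eps+8$, and both tend to $8$.
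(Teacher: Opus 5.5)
Your argument is correct, but it organises the case analysis differently from the paper. The paper never splits on the fate of $o_1$. It extracts two uniform first-step facts, $a_1\le(2/\eps)e_1$ and $\sigma_1\le2e_1$, and branches only at request two ($a_2\le D$, $b_2=p_2$, or $b_2=c_2$). It derives an explicit estimate for $a_2$ in terms of $e_1,e_2$ and sums $\OPT+2a_1+2a_2$. For $\eps\in(1,2]$ it needs a separate unsplit estimate through $\Phi_1\le a_1+h_1$.

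You split at request one instead. If $o_1$ is not rejected, $a_1\le e_1$ and the generic recursion of Theorem~\ref{thm:robustness} already gives $5+12/\eps$. If $o_1$ is rejected, you use $h_1<(2/\eps)e_1$ directly and let $\sigma_1\le\eps h_1$ isolate $h_2<(2/\eps)\bar D_2$. In the $b_2=c_2$ branch, your cancellation of $a_1$ via $d(w,p_1)\le(1+\eps)h_1-a_1$ is the paper's identity $\sigma_1=a_1+d(p_1,p_2)-h_1$ combined with $\sigma_1\le\eps h_1$. Because you bound $T_2$ as a whole, you avoid the slack in the paper's $b_2=p_2$ branch, which uses $a_2\le h_1+h_2$ and then adds $2a_1$ separately.

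The payoff is a single bound $5+8/\eps+8/\eps^2$ on all of $(0,2]$. It implies both stated constants, strictly except at $\eps=1$ and $\eps=2$ respectively, with no separate treatment of $\eps>1$. The paper's route instead produces the per-request estimate \eqref{eq:k3a2} and the first-step quantities in the form that Theorem~\ref{thm:k3exact} later sharpens on the real line.

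Two small points to write out. The step-two branches with $a_2\le\bar D_2$ give $T_2\le\OPT+2a_1$, which is within your bound in both cases. The final display also needs the $(1+\eps)$-consistency of $\Sd_\eps$ from Theorem~\ref{thm:consistency}.
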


\begin{proof}
Fix an input and an optimal offline matching. Let $a_t$ and $e_t$ be the algorithmic and offline
costs at request $t$, let $h_t=d(r_t,p_t)$, and let $\sigma_t$ be the assignment excess charged through
request $t$.  The first-step argument in Theorem~\ref{thm:robustness} gives
\begin{equation}\label{eq:k3a1}
a_1\le\frac2\eps e_1.
\end{equation}
We also have
\begin{equation}\label{eq:k3spent}
\sigma_1\le2e_1.
\end{equation}
The claim is immediate if the algorithm follows $p_1$. If it uses an allowed alternative $w$, let
$o_1$ be the first offline server; if $o_1=b_1$ or is an
admissible alternative, the selection rule gives $a_1\le e_1$, and hence
$\sigma_1=\Delta_1(w)\le2a_1\le2e_1$.  If $o_1$ is rejected, then
\[
\sigma_1=\Delta_1(w)\le B_1<\Delta_1(o_1)\le2e_1.
\]

At request two, the alternating path used in Theorem~\ref{thm:robustness} reaches a free server
$v$ at distance at most
\begin{equation}\label{eq:k3D}
D:=e_2+a_1+e_1.
\end{equation}
If a predicted label has repeated, if $v=b_2$, or if $v$ is an admissible alternative, then
$a_2\le D$.  Otherwise $v$ is rejected, so
\begin{equation}\label{eq:k3H}
\eps(h_1+h_2)-\sigma_1<2D.
\end{equation}

Assume first that $b_2=p_2$. Equations
\eqref{eq:k3spent}--\eqref{eq:k3H} give
\[
a_2\le h_2\le h_1+h_2
<\frac2\eps(D+e_1)
\le\left(\frac4{\eps^2}+\frac4\eps\right)e_1+\frac2\eps e_2.
\]

Suppose instead that $p_2$ was consumed, so $b_2=c_2$. Since $p_1\ne p_2$, request one used
$p_2$ as an alternative to $p_1$. After that move, the two residual
free-server multisets differ only in $p_1,p_2$, so $\Phi_1=d(p_1,p_2)$ and
\[
\sigma_1=a_1+d(p_1,p_2)-h_1.
\]
The server $c_2$ therefore satisfies
\[
a_2\le h_2+d(p_1,p_2)=h_1+h_2+\sigma_1-a_1.
\]
Using \eqref{eq:k3D}, \eqref{eq:k3H}, \eqref{eq:k3a1}, and \eqref{eq:k3spent},
\begin{align*}
a_2
&<\frac2\eps D+\left(1+\frac1\eps\right)\sigma_1-a_1\\
&\le\frac2\eps e_2+\left(\frac2\eps-1\right)a_1
  +\left(2+\frac4\eps\right)e_1\\
&\le\frac2\eps e_2+\left(\frac4{\eps^2}+\frac2\eps+2\right)e_1.
\end{align*}
For $\eps\le1$ this is at most
\begin{equation}\label{eq:k3a2}
a_2\le\frac2\eps e_2+\left(\frac4{\eps^2}+\frac4\eps\right)e_1.
\end{equation}
The easy case $a_2\le D$ also satisfies \eqref{eq:k3a2}.  For $\eps\in(1,2]$, the unsplit bound
\[
a_2\le\frac2\eps e_2+\left(\frac4{\eps^2}+\frac6\eps\right)e_1
\]
follows directly from \eqref{eq:k3spent}--\eqref{eq:k3H} and the potential estimate
$\Phi_1\le a_1+h_1$.

At request three there is one free server, and the terminal alternating path gives
\[
a_3\le e_3+(a_1+e_1)+(a_2+e_2),
\]
so $a_1+a_2+a_3\le\OPT+2a_1+2a_2$. Substituting
\eqref{eq:k3a1} and \eqref{eq:k3a2} proves the first upper bound, while the unsplit estimate proves
the second. Corollary~\ref{cor:values} then gives
\[
8+4\eps+\eps^2\le\eps^2R_3^{\R}(1+\eps)
\le8+12\eps+\eps^2
\]
for $0<\eps\le1$, which proves the limit.
\end{proof}

On the real line the second-request estimate can be sharpened, and once the slack is small
enough the sharpened form closes the gap between Theorem~\ref{thm:k3ub} and the lower bound of
Corollary~\ref{cor:values}, determining the three-server value exactly.

\begin{theorem}[exact three-server value for small slack]\label{thm:k3exact}
Let $\eps^\star:=\sqrt{13}-3=0.6055\ldots$, equivalently the value of $\eps$ at which $q=2/\eps$
solves $q^2=3q+1$.  For $k=3$ on the real line and every $\eps\in(0,\eps^\star]$,
\[
\cost(\Sd_\eps)\le\left(1+\frac4\eps+\frac8{\eps^2}\right)\OPT
\qquad\text{on every input,}
\]
and consequently
\[
R_3^{\R}(1+\eps)=1+\frac4\eps+\frac8{\eps^2}
\qquad(0<\eps\le\eps^\star).
\]
\end{theorem}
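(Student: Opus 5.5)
The lower bound $R_3^{\R}(1+\eps)\ge1+4/\eps+8/\eps^2$ is Corollary~\ref{cor:values}, so only the upper bound for $\Sd_\eps$ on the line needs proof. I would keep the skeleton of Theorem~\ref{thm:k3ub}. Its terminal alternating path gives
\[
\cost\le\OPT+2(a_1+a_2).
\]
With $q=2/\eps$, the target then reduces to the joint estimate
\[
a_1+a_2\le(q+q^2)\,\OPT .
\]
Theorem~\ref{thm:k3ub} bounds $a_1$ and $a_2$ separately, and $a_1\le qe_1$ together with \eqref{eq:k3a2} gives $e_1$-coefficient $q^2+3q$. The excess $2q$ must therefore be removed by showing that the expensive cases at requests one and two cannot occur together, and by using the real-line order structure (Lemma~\ref{lem:uncross}) instead of the crude path bound $D=e_2+a_1+e_1$.

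I would split by what happens at request one.
\begin{enumerate}
\item $\Sd_\eps$ follows $p_1$, so $\sigma_1=0$ and $a_1=h_1$. If $h_1$ is large compared with $e_1$, then $o_1\ne p_1$ was rejected or not cheaper, which forces $h_1\le qe_1$. Request two then gives $a_2\le h_2<q D-h_1$ by \eqref{eq:k3H}. The point is that the term $a_1$ inside $D$ is paid for only once.
\item $\Sd_\eps$ deviates to some $w$ with $o_1$ admissible. Then $a_1\le e_1$, $\sigma_1\le 2e_1$, and the bounds of Theorem~\ref{thm:k3ub} already leave slack.
\item $\Sd_\eps$ deviates to $w$ with $o_1$ rejected, so $a_1$ may be close to $qe_1$.
\end{enumerate}

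On the line, three servers and three requests allow an explicit description of the configuration in each case. The servers are ordered, $o$ is order-preserving, and the alternating path from $r_2$ either closes immediately (so $a_2\le e_2+\dots$) or passes through $r_1$'s server. I would write the relevant distances as nonnegative variables, say the gaps between consecutive servers and the offsets of $r_1,r_2$. Each subcase then becomes a small linear program: maximise $a_1+a_2-(q+q^2)\OPT$ subject to the admissibility and rejection inequalities (the $\Delta$ thresholds against $B_t$) and the selection rule.

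The main obstacle is case (c), combined with request two being forced to $c_2$ after a rejection. This is exactly the branch that produced the $-a_1$ and $(1+1/\eps)\sigma_1$ terms in Theorem~\ref{thm:k3ub}. On the line, $\Delta_1(w)=2\operatorname{dist}(w,[r_1,p_1])$, so $\sigma_1$ is the length of the overshoot beyond the interval, and $\Phi_1=|p_1-p_2|$. I expect this to make the bound on $a_2$ depend on whether $r_2$ lies between the relevant servers. In that branch I would first try to show that the extremal configuration is the lower-bound chain of Theorem~\ref{thm:lower-general} (request at $x$, server at $0$, servers at $1$ and $x+q$), where the value is $q^2+q$. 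A competing configuration, in which the second request exploits the overshoot $\sigma_1$, should give a value like $3q+1+\dots$. The condition $q^2\ge3q+1$, i.e.\ $\eps\le\eps^\star$, should be precisely what makes the first configuration dominate, and that is where I expect the threshold to enter. Verifying the vertices of this LP is the step I would expect to take the most work.
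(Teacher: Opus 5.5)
Your skeleton matches the paper's: the terminal alternating path gives $\cost\le\OPT+2(a_1+a_2)$, and the lower bound is Corollary~\ref{cor:values}. The proposal still has a genuine gap. The estimate that controls request two in the dangerous branch is never given; it is deferred to an LP vertex check you have not carried out, and your case split does not isolate that branch.

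The missing idea is a dichotomy for the first request, and it makes the LP unnecessary. Since $b_1=p_1$ is always admissible, cheapest-admissible selection gives $a_1\le h_1$, so a deviation cannot lie beyond $p_1$. With your observation $\Delta_1(w)=2\,\mathrm{dist}(w,[r_1,p_1])$, this forces $\sigma_1\in\{0,2a_1\}$.
\begin{itemize}
\item If $\sigma_1=2a_1$, then $a_1\le e_1$ even when $o_1$ is rejected, because $2a_1=\sigma_1\le B_1=\eps h_1<\Delta_1(o_1)\le2e_1$. Allowance also gives $h_1\ge qa_1$.
\item If $\sigma_1=0$, then $h_1\ge a_1$.
\end{itemize}
Substitute these into the rejection inequality $h_1+h_2<qD+\tfrac q2\sigma_1$. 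When $p_2$ was consumed, also use $a_2\le h_2+\Phi_1=h_1+h_2+\sigma_1-a_1$. Every branch then closes by arithmetic.

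Your map of where the difficulty lies is also inverted.
\begin{itemize}
\item On the line, $a_1$ near $qe_1$ in your case~(c) forces $\sigma_1=0$. Then even the consumed-prediction branch gives $a_1+a_2<qe_2+qa_1+qe_1\le qe_2+(q^2+q)e_1$, with no restriction on $\eps$.
\item The restriction comes from $\sigma_1=2a_1$ together with $b_2=c_2$, where $a_2<qe_2+qe_1+(2q+1)a_1$. This branch includes your case~(b).
\item Your claim that the bounds of Theorem~\ref{thm:k3ub} already leave slack in case~(b) is false. Combining \eqref{eq:k3a2} with $a_1\le e_1$ gives $a_1+a_2\le qe_2+(q^2+2q+1)e_1$, which never fits under $(q^2+q)e_1$.
\item The instance of Remark~\ref{rem:k3endpoint} shows this concretely: servers $-a,a,b$, requests $0,-a,b$, predicted servers at $b,-a,a$, and $b\uparrow3qa$. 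Here $o_1$ is admissible, yet $a_1+a_2\to(3q+2)\OPT$, which exceeds $(q+q^2)\OPT$ at $\eps=1$.
\end{itemize}
So case~(b) needs a condition on $q$. In that branch $a_1+a_2<qe_2+qe_1+(2q+2)a_1\le qe_2+(3q+2)e_1$, so your joint target needs $q^2\ge2q+2$, which $q^2\ge3q+1$ implies. The paper instead bounds $a_2$ alone by $q^2e_1+qe_2$ and therefore needs exactly $q^2\ge3q+1$.

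Finally, the lower-bound chain you expect to be extremal in case~(c) follows both predictions, so it belongs to your case~(a).
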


\begin{proof}
Retain the notation of the proof of Theorem~\ref{thm:k3ub} and put $q=2/\eps$, so that
$q\ge2/\eps^\star=(3+\sqrt{13})/2>3$ and in particular $q\ge2$.  We prove the sharpened
second-request estimate
\begin{equation}\label{eq:k3exacta2}
a_2\;\le\;q^2e_1+qe_2\;=\;\frac4{\eps^2}e_1+\frac2\eps e_2
\end{equation}
in every branch, and then sum.

\emph{Two facts about the first request.}  Cheapest-allowed selection and the fact that the
reference server $b_1=p_1$ is always allowed give $a_1\le h_1$.  On the real line an
alternative server no farther from the request than the predicted server therefore either lies
between the two, giving assignment excess zero, or lies on the far side of the request, giving
excess exactly $2a_1$; the case beyond $p_1$ is excluded by $a_1\le h_1$.  Hence
\begin{equation}\label{eq:k3dichotomy}
\sigma_1\in\{0,2a_1\}.
\end{equation}
When $\sigma_1=2a_1$,
\begin{equation}\label{eq:k3a1e1}
a_1\le e_1 .
\end{equation}
Indeed, if the offline server $o_1$ equals $b_1$ or is an admissible alternative, then
cheapest-allowed selection gives $a_1\le d(r_1,o_1)=e_1$; and if $o_1$ is rejected, then
allowance of the move the algorithm did make together with that rejection gives
$2a_1=\sigma_1\le B_1=\eps h_1$ and $\eps h_1=B_1<\Delta_1(o_1)\le2e_1$, so $a_1<e_1$.
When $\sigma_1=0$ the algorithm either followed its prediction, so $a_1=h_1$, or used a
zero-excess alternative $w_1$, so that $h_1=a_1+d(w_1,p_1)\ge a_1$; in both cases
\begin{equation}\label{eq:k3h1a1}
h_1\ge a_1 .
\end{equation}

\emph{Branch 1: $a_2\le D$.}  By \eqref{eq:k3D} and \eqref{eq:k3a1},
$a_2\le e_2+a_1+e_1\le e_2+(q+1)e_1$, which is at most $qe_2+q^2e_1$ because $q\ge2$.

\emph{Branch 2: $v$ is rejected and the reference server at request two is the free
prediction.}  Then $a_2\le h_2$ and \eqref{eq:k3D}--\eqref{eq:k3H} give
\[
a_2<\frac2\eps e_2+
\left[\frac2\eps(a_1+e_1)+\frac{\sigma_1}\eps-h_1\right]
=qe_2+\Bigl[q(a_1+e_1)+\tfrac q2\sigma_1-h_1\Bigr].
\]
If $\sigma_1=0$, then \eqref{eq:k3h1a1} and \eqref{eq:k3a1} bound the bracket by
$(q-1)a_1+qe_1\le(q-1)qe_1+qe_1=q^2e_1$.  If $\sigma_1=2a_1$, then allowance gives
$2a_1\le\eps h_1$, that is $h_1\ge qa_1$, so the bracket is at most
$2qa_1+qe_1-qa_1=q(a_1+e_1)\le2qe_1\le q^2e_1$ by \eqref{eq:k3a1e1} and $q\ge2$.

\emph{Branch 3: $v$ is rejected and the reference server at request two is paired with a
consumed prediction.}  Since $k=3$ and only request one has consumed a server, request one used
$p_2$, the two residual free-server multisets differ only in $p_1,p_2$, and $\Phi_1=d(p_1,p_2)$.
As in the derivation of \eqref{eq:k3a2}, $a_2\le h_2+\Phi_1=H_2+\sigma_1-a_1$, so
\eqref{eq:k3D}--\eqref{eq:k3H} give
\[
a_2<\frac2\eps e_2+
\left(\frac2\eps-1\right)a_1+
\frac2\eps e_1+
\left(1+\frac1\eps\right)\sigma_1
=qe_2+(q-1)a_1+qe_1+\Bigl(1+\tfrac q2\Bigr)\sigma_1 .
\]
If $\sigma_1=0$, then \eqref{eq:k3a1} gives $a_2<qe_2+(q-1)qe_1+qe_1=qe_2+q^2e_1$.  If
$\sigma_1=2a_1$, then $\bigl(1+\tfrac q2\bigr)\sigma_1=(q+2)a_1$ and \eqref{eq:k3a1e1} give
\[
a_2<qe_2+qe_1+(2q+1)a_1\le qe_2+(3q+1)e_1\le qe_2+q^2e_1,
\]
the last step being exactly the inequality $q^2\ge3q+1$, that is $\eps\le\eps^\star$.  This is
the only point at which the restriction $\eps\le\eps^\star$ is used.

\emph{Summation.}  The terminal alternating path gives $a_3\le e_3+(a_1+e_1)+(a_2+e_2)$, so
$\cost(\Sd_\eps)\le\OPT+2a_1+2a_2$.  By \eqref{eq:k3a1} and \eqref{eq:k3exacta2},
\[
\cost(\Sd_\eps)\le\OPT+2qe_1+2\bigl(q^2e_1+qe_2\bigr)
\le\OPT+2q(q+1)(e_1+e_2)\le\bigl(1+2q+2q^2\bigr)\OPT,
\]
which is the displayed bound.  Corollary~\ref{cor:values} supplies the matching lower bound.
\end{proof}

Above $\eps^\star$ only Branch 3 with $\sigma_1=2a_1$ escapes the argument, and there a
geometric analysis recovers the same bound in part of the remaining range.

\begin{proposition}[the residual configuration on the real line]\label{prop:k3residual}
Let $k=3$ on the real line and $\eps^\star<\eps\le1$.  Then
$\cost(\Sd_\eps)\le\bigl(1+4/\eps+8/\eps^2\bigr)\OPT$ on every input except possibly those in
which all of the following hold: request one uses $W=p_2$ as a positive-excess alternative to
$P=p_1$; at request two the reference server is $P$, which is paired with the consumed
prediction $W$; the other free server $U$ is rejected at request two; and the order-preserving
offline optimum does \emph{not} match $(r_1,r_2,r_3)$ to $(U,W,P)$ respectively.
\end{proposition}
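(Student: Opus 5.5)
The plan is to rerun the proof of Theorem~\ref{thm:k3exact} and isolate the one place where $\eps\le\eps^\star$ was used. By the summation step, $\cost(\Sd_\eps)\le\OPT+2a_1+2a_2$ together with $a_1\le qe_1$ gives the target bound $(1+2q+2q^2)\OPT$ as soon as the second-request estimate \eqref{eq:k3exacta2}, $a_2\le q^2e_1+qe_2$, holds. Branches~1 and~2, and Branch~3 with $\sigma_1=0$, used only $q\ge2$, and $q=2/\eps\ge2$ still holds for $\eps\le1$. The only surviving case is therefore Branch~3 with $\sigma_1=2a_1$. Here $v$ is rejected at request two, and the reference server $b_2=c_2$ is paired with a consumed prediction. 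Since only request one has consumed a server, request one used $W:=p_2$ as a positive-excess alternative to $P:=p_1$. Then $\Phi_1=d(P,W)$, so $c_2=P$. The other free server $U$ is the only candidate for the rejected comparison server $v$, because $v\ne b_2$. This already produces every listed condition except the last, and it remains to show that \eqref{eq:k3exacta2} holds when the order-preserving optimum matches $(r_1,r_2,r_3)$ to $(U,W,P)$.

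For this configuration I will use the real-line geometry together with the identity $e_1=d(r_1,U)$, $e_2=d(r_2,W)$, $e_3=d(r_3,P)$. Since $\sigma_1=2a_1>0$ and $a_1\le h_1$, the dichotomy \eqref{eq:k3dichotomy} puts $r_1$ strictly between $W$ and $P$. Take $W<r_1<P$, so $h_1=d(r_1,P)\ge qa_1$ by allowance. The offline server of $r_1$ is $U$, so the first-request analysis gives the sharper fact $d(r_1,U)=e_1\ge a_1=d(r_1,W)$. Moreover, either $U$ was admissible but more expensive, or it was rejected with $\Delta_1(U)>\eps h_1$. Order preservation with $r_1\mapsto U$ and $r_2\mapsto W$, combined with $W<r_1$, then forces either $U<W$ with $r_1\le r_2$, or an ordering in which both $U$ and $r_2$ lie on the same side of $W$. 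I will enumerate the few admissible orderings of $U,W,P$ and $r_1,r_2,r_3$ that are consistent with order preservation.

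In each ordering, I will bound $a_2=d(r_2,P)$ directly instead of going through $h_2+\Phi_1$, which is the lossy step. The rejection of $U$ at request two gives $\Delta_2(U)>\eps(h_1+h_2)-2a_1$, and $h_2=d(r_2,W)=e_2$. The triangle inequality along the line gives
\[
d(r_2,P)\le d(r_2,W)+d(W,P)=e_2+a_1+h_1.
\]
I then plan to bound $h_1$ by $e_1$, or by $e_2$, using the order constraints. For example, if $U<W<r_1<P$, then $d(r_1,U)\ge a_1$ and the order-preserving match forces $r_2\ge r_1$ and $r_3\ge r_2$. In that case the rejection inequality, in which $\Delta_2(U)=2d(r_2,U)$ or $0$ depending on the side, gives
\[
\eps h_1<2d(r_2,U)+2a_1-\eps e_2\le 2(e_1+e_2+a_1)+2a_1,
\]
so $h_1$ is controlled by $q(e_1+e_2)$ up to lower-order terms. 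I expect the coefficients to fit within $q^2e_1+qe_2$ once $a_1\le e_1$ and $a_1\le h_1/q$ are inserted.

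The main obstacle is this last casework. The term $(2q+1)a_1$ must be absorbed without $q^2\ge 3q+1$, so the configuration geometry has to supply an extra relation, such as $h_1\le e_1+a_1$ or $d(W,P)\le e_2+e_3$. That relation should then allow a different terminal accounting, replacing $a_3\le e_3+(a_1+e_1)+(a_2+e_2)$ by a direct bound on $d(r_3,U)$. I will first try handling each ordering by bounding the total cost $a_1+a_2+a_3$ directly, computing it explicitly as a piecewise-linear function of the positions and comparing it with $(1+2q+2q^2)\OPT$ vertex by vertex on the polyhedron cut out by allowance, rejection, and order constraints.
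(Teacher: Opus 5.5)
Your reduction is correct and matches the paper's proof. Branches~1 and~2 and Branch~3 with $\sigma_1=0$ need only $q\ge2$, and Branch~3 with $\sigma_1=2a_1$ yields $W=p_2$, $P=p_1=c_2$ and the rejected $U$. The gap is in what follows, which is the real content of the proposition, and your primary target there is false.

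\textbf{The target fails.} Re-establishing \eqref{eq:k3exacta2} in the configuration $(U,W,P)$ fails whenever $q^2<3q+1$. Take $r_1=0$, $W=-a$, $U=a+\delta$, and $P=h$ with $h$ just below $3qa$. Then take $r_2=-a$, $r_3=h$ and $p_3=U$.
\begin{itemize}
\item Request one selects $W$: its excess is $2a\le\eps h$, and it is cheaper than the zero-excess $U$. So $\sigma_1=2a$.
\item At request two, $h_2=0$ and $B_2=\eps h-2a<4a+2\delta=\Delta_2(U)$. So $U$ is rejected, and $a_2=h+a\to(3q+1)a$.
\item The order-preserving optimum is $(U,W,P)$, with $e_1=a+\delta$ and $e_2=e_3=0$. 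So \eqref{eq:k3exacta2} would demand $a_2\le q^2(a+\delta)$.
\end{itemize}
On the same instance, the alternating-path accounting gives $\OPT+2a_1+2a_2\to(5+6q)a$. This exceeds $(1+2q+2q^2)a$ for $q<1+\sqrt3$, so that step must be abandoned too. Neither relation you hope for holds either: $h_1\approx3qa_1$ against $e_1+a_1\approx2a_1$, and $d(W,P)\approx(3q+1)a_1$ against $e_2+e_3=0$. The bound survives only because the true cost is $a+2h-\delta\approx(1+6q)a$.

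\textbf{The missing idea.} Use the shortcut $a_3=d(r_3,U)\le e_3+d(P,U)$ instead of the alternating path back through $r_2$, $W$ and $r_1$. Together with $a_2\le e_2+a_1+h_1$, this gives $\cost(\Sd_\eps)\le\OPT+X$ with $X=2a_1+h_1+d(P,U)-e_1$. One then shows $X\le2q(q+1)\OPT$ by locating $U$, after normalising $r_1=0$, $W=-a_1$, $P=h_1$.
\begin{itemize}
\item $U$ strictly between $W$ and $r_1$ is impossible, since it would be admissible and cheaper at request one.
\item If $U\ge h_1$, then $X=2a_1$ and $e_1\ge h_1\ge qa_1$.
\item If $U\le-a_1$, then $X=2(a_1+h_1)$. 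Here your excess formula is wrong. On the line $\Delta(y;p\to w)=2\,\mathrm{dist}(w,[y,p])$, so $\Delta_2(U)=2d(U,W)=2(e_1-a_1)$, not $2d(r_2,U)$. Rejection then gives $h_1+e_2<qe_1$ at once. Your loose bound $2d(r_2,U)$ is what created the $(2q+1)a_1$ obstacle in your worked ordering.
\item If $0\le U\le h_1$, then $X=2(a_1+h_1-e_1)$. Order preservation gives only $r_2\le r_1$, so $r_2$ may lie on either side of $W$, contrary to your claim. The subcases $r_2\le-a_1$ and $-a_1\le r_2\le0$ give $\Delta_2(U)=2(e_1+a_1)$ and $\Delta_2(U)=2(e_1+a_1-e_2)$. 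Hence $X/2<(2q+1)a_1+(q-1)e_1-e_2\le3qe_1-e_2\le q(q+1)(e_1+e_2)$, using $a_1\le e_1$ and $q\ge2$.
\end{itemize}
Your closing plan of bounding the total cost directly over the polyhedron points in the right direction. As written, however, none of this casework is carried out.
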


\begin{proof}
By the proof of Theorem~\ref{thm:k3exact}, \eqref{eq:k3exacta2} holds in Branches 1 and 2 and in
Branch 3 with $\sigma_1=0$, for every $\eps\le1$; the summation step then applies verbatim.  The
excluded configuration is Branch 3 with $\sigma_1=2a_1$, and it remains to treat it when the
offline optimum matches $r_1$ to $U$, $r_2$ to $W$ and $r_3$ to $P$.  Fix such an optimum, put
$q=2/\eps\ge2$, and translate
and, if necessary, reflect the line so that
\[
r_1=0,\qquad W=-a,\qquad P=h,
\]
where $a,h>0$. Since the first excess is $2a$, allowance gives $h\ge qa$. Write
\[
U=z,\qquad e=|z|=e_1,\qquad f=d(r_2,W)=e_2,\qquad g=d(r_3,P)=e_3.
\]
The two remaining algorithmic costs satisfy
\[
d(r_2,P)\le f+h+a,\qquad d(r_3,U)\le g+d(P,U).
\]
Consequently,
\begin{equation}\label{eq:k3terminalgeometry}
\cost(\Sd_\eps)\le\OPT+X,
\qquad X:=2a+h+d(P,U)-e.
\end{equation}

First suppose $0\le z\le h$. The server $U$ has zero assignment excess and is therefore allowed, so the
selection of $W$ gives $e=z\ge a$; order preservation gives $r_2\le0$, while
$d(P,U)=h-e$ and $X=2(a+h-e)$. If $r_2\le-a$, the assignment excess at request two for $U$ is $2(e+a)$, and
rejection of $U$ gives
\[
h<qe+2qa-f,
\]
and hence
\[
\frac X2<(2q+1)a+(q-1)e-f\le3qe-f\le q(q+1)(e+f).
\]
If instead $-a\le r_2\le0$, the assignment excess is $2(e+a-f)$ and rejection gives
\[
h<qe+2qa-(q+1)f.
\]
Therefore
\[
\frac X2<(2q+1)a+(q-1)e-(q+1)f
\le3qe-(q+1)f\le q(q+1)(e+f).
\]
Both final inequalities use $q\ge2$.

If $z\ge h$, then $d(P,U)=e-h$ and $X=2a$, while $e\ge h\ge qa$ gives the required bound.
Finally suppose $z\le-a$, and write $z=-e$, where $e\ge a$. Order preservation gives
$r_2\ge0$, and the assignment excess at request two for $U$ is $2(e-a)$, so rejection reduces to
\[
h+f<qe.
\]
Since $X=2(a+h)$, we obtain
\[
\frac X2<a+qe-f\le(q+1)e-f\le q(q+1)(e+f).
\]
Thus $X\le2q(q+1)\OPT$ whenever the offline optimum matches $r_1$ to $U$, and
\eqref{eq:k3terminalgeometry} gives
\[
\cost(\Sd_\eps)\le
\bigl(1+2q+2q^2\bigr)\OPT
=\left(1+\frac4\eps+\frac8{\eps^2}\right)\OPT
\]
in that case as well.
\end{proof}

\begin{remark}[what is left open at $k=3$]\label{rem:k3open}
For $\eps\le\eps^\star$ the three-server value is settled by Theorem~\ref{thm:k3exact}, and the
argument is purely arithmetic. For $\eps^\star<\eps\le1$, a single branch of the second-request
analysis remains open. Proposition~\ref{prop:k3residual} settles that branch when the
order-preserving offline optimum matches $(r_1,r_2,r_3)$ to $(U,W,P)$ respectively; the other
assignments of the three requests to the three servers are open. Once $q^2<3q+1$, the per-step identities and \eqref{eq:k3a1},
\eqref{eq:k3a1e1} no longer imply \eqref{eq:k3exacta2}, so the remaining assignments require the
geometry of the line. We expect the value to be $1+4/\eps+8/\eps^2$ throughout $0<\eps\le1$.
Remark~\ref{rem:k3endpoint} shows that $\Sd_\eps$ itself does not attain this expression for
$\eps>1$, leaving the value above $\eps=1$ open.
\end{remark}

\begin{conjecture}\label{conj:k3}
$R_3^{\R}(1+\eps)=1+4/\eps+8/\eps^2$ for every $\eps\in(0,1]$.
\end{conjecture}

\begin{remark}[the endpoint for this algorithm]\label{rem:k3endpoint}
The restriction $\eps\le1$ is necessary for $\Sd_\eps$ to attain the expression of
Theorem~\ref{thm:k3exact}.
For $1<\eps\le2$, take servers $W=-a$, $U=a$, and $P=b$, requests $0,W,P$, and predictions
$P,W,U$, give $W$ the earlier alternative label, and let $b$ approach $6a/\eps$ from below. The
algorithm uses $W$ at the first request, then rejects $U$ because $\eps b-2a<4a$ and uses $P$ at
the second, leaving $U$ for the final match. Its ratio approaches $1+12/\eps$, which exceeds
$1+4/\eps+8/\eps^2$ throughout this interval and leaves the exact value above $\eps=1$ open.
\end{remark}

\begin{lemma}[two prefix bounds]\label{lem:fixedkprefix}
Let $\delta_t=\sigma_t-\sigma_{t-1}$. Fix an arbitrary input and
an optimal offline matching, and write
\[
a_t=d(r_t,w_t),
\qquad
e_t=d(r_t,o_t),
\qquad
h_t=d(r_t,p_t),
\]
where $w_t$ is the server selected by $\Sd_\eps$ and $o_t$ is the offline server. Put
\[
T_t=\sum_{i\le t}a_i,
\qquad
H_t=\sum_{i\le t}h_i,
\qquad
D_t=e_t+\sum_{i<t}(a_i+e_i).
\]
If the first $t$ predictions are distinct, then
\begin{equation}\label{eq:fixedkpotential}
T_t-H_t+\Phi_t\le\sigma_t.
\end{equation}
For every request, including requests after a repeated prediction,
\begin{equation}\label{eq:fixed-excess}
\delta_t\le2D_t.
\end{equation}
\end{lemma}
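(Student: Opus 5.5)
For \eqref{eq:fixedkpotential} I will use induction on $t$ and reuse the per-step drift inequality from the proof of Theorem~\ref{thm:consistency}. That inequality was
\[
\bigl(a_t-h_t\bigr)+\bigl(\Phi_t-\Phi_{t-1}\bigr)\le\Delta_t(w_t)\cdot\mathbf 1[w_t\ne b_t].
\]
Its proof needed only that the predicted labels up to step $t$ are distinct, since this is what makes $\Pi_{t-1}$ and $\Pi_t$ well defined, together with the triangle inequality. By the selection rule, the right-hand side is exactly $\delta_t$: selecting $b_t$ leaves $\sigma$ unchanged, and selecting a deviation $w$ adds $\Delta_t(w)$. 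The base case is $T_0=H_0=\Phi_0=\sigma_0=0$, where $\Phi_0=0$ because $F^A_0=F^*_0=S$ and the identity-pair convention applies. Summing the per-step inequalities over $i\le t$ telescopes to $T_t-H_t+\Phi_t-\Phi_0\le\sigma_t$. The one point to check is that the four drift cases in Theorem~\ref{thm:consistency} never used distinctness of predictions beyond step $t$. They did not, because each case edits only $\Pi_{t-1}$ at the current step.

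For \eqref{eq:fixed-excess}, first take the case $\delta_t=0$: this covers selecting $b_t$ and every step after a repeated label. The inequality is then trivial since $D_t\ge0$. Otherwise $w_t\ne b_t$ is an admissible deviation and $\delta_t=\Delta_t(w_t)$. I will use the free comparison server $v_t$ from the proof of Theorem~\ref{thm:robustness}. The alternating path in the union of the offline matching and the first $t-1$ online edges gives $d(r_t,v_t)\le D_t$.

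There are three subcases.
\begin{enumerate}
\item If $v_t$ is admissible, or $v_t=b_t$, the cheapest-admissible rule gives $a_t\le d(r_t,v_t)\le D_t$. By (F1), $\delta_t\le2a_t\le2D_t$.
\item If $v_t$ is a non-admissible deviation, then $\Delta_t(v_t)>B_t$. Admissibility of $w_t$ gives $\delta_t\le B_t<\Delta_t(v_t)\le2d(r_t,v_t)\le2D_t$, again by (F1) applied to $v_t$.
\item If $v_t$ is not a deviation candidate, then it equals $b_t$, which is already covered by the first subcase.
\end{enumerate}

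The main subtlety is that the first subcase needs $b_t$ to be treated as a candidate whose cost bounds $a_t$. This holds because ties favour $b_t$ and $b_t$ is always selectable. Even so, a deviation may be chosen only if it is cheaper than every admissible option including $b_t$, so $a_t\le d(r_t,b_t)$ still holds. I expect this to be routine.
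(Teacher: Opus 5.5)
Your proposal is correct and follows the paper's own proof essentially step for step. You obtain \eqref{eq:fixedkpotential} by summing the per-step drift inequality from Theorem~\ref{thm:consistency}, which needs only distinct labels through step $t$. You obtain \eqref{eq:fixed-excess} from the alternating-path server $v_t$, with the same split into $\delta_t=0$, $v_t=b_t$ or admissible (giving $\delta_t\le 2a_t\le 2D_t$), and $v_t$ rejected (giving $\delta_t\le B_t<\Delta_t(v_t)\le 2D_t$).
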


\begin{proof}
Before a repeated prediction, the residual-matching update in the proof of
Theorem~\ref{thm:consistency} gives
\[
(a_t-h_t)+(\Phi_t-\Phi_{t-1})\le\delta_t.
\]
If $p_t$ is free, selecting it gives zero service excess and deletes its identity edge; selecting
a permitted target $w_t$ instead reconnects the two affected residual-matching edges, increasing
the potential by at most $d(w_t,p_t)$, while
\[
a_t-h_t=\Delta_t(w_t)-d(w_t,p_t).
\]
If $p_t$ has been consumed, selecting $c_t$ deletes the edge $(c_t,p_t)$ and decreases the
potential by at least the service excess, while selecting another permitted target uses the same
two-edge reconnection. Summation proves \eqref{eq:fixedkpotential}.

For \eqref{eq:fixed-excess}, the alternating path in the robustness proof gives a free server $v_t$
with $d(r_t,v_t)\le D_t$. The case $\delta_t=0$ is immediate; otherwise the algorithm uses a
permitted deviation, and if $v_t=b_t$ or $v_t$ is permitted, least-cost selection gives
$a_t\le d(r_t,v_t)$, and hence $\delta_t\le2a_t\le2D_t$. If $v_t$ is rejected, then
\[
\delta_t\le B_t<\Delta_t(v_t)\le2d(r_t,v_t)\le2D_t.
\]
\end{proof}

\begin{theorem}[leading coefficient for fixed $k$]\label{thm:fixedkleading}
Let $q=2/\eps$, and define polynomials
\[
P_1(q)=q,
\qquad
Q_1(q)=2,
\]
and, for $t\ge2$,
\[
\begin{aligned}
P_t(q)
&=(q+1)P_{t-1}(q)
 +\left(1+\frac q2\right)Q_{t-1}(q)+q,\\
Q_t(q)
&=Q_{t-1}(q)+2P_{t-1}(q)+2.
\end{aligned}
\]
For every $k\ge2$ and every $\eps\in(0,1]$, in every metric space,
\[
\cost(\Sd_\eps)
\le\bigl(1+2P_{k-1}(2/\eps)\bigr)\OPT.
\]
Consequently, for every fixed $k\ge2$,
\[
\lim_{\eps\downarrow0}\eps^{k-1}R_k^{\R}(1+\eps)
=\lim_{\eps\downarrow0}\eps^{k-1}R_k(1+\eps)=2^k.
\]
\end{theorem}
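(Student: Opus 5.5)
We prove by induction on $t\le k-1$ the paired prefix bounds
\[
T_t\le P_t(q)\,\OPT_{\le t},\qquad \sigma_t\le Q_t(q)\,\OPT_{\le t},
\]
where $\OPT_{\le t}:=\sum_{i\le t}e_i\le\OPT$. It is convenient to state both in the sharper form $T_t\le\sum_i(\cdot)e_i$ with coefficients majorised by $P_t,Q_t$. The terminal step then reproduces the argument of Theorem~\ref{thm:robustness}. After $k-1$ matches only one server is free, and the alternating path gives $a_k\le\OPT+T_{k-1}$. Hence $\cost\le\OPT+2T_{k-1}\le(1+2P_{k-1}(q))\OPT$.

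\textbf{Base case.} The first-request argument of Theorem~\ref{thm:robustness} gives $a_1\le qe_1$. The first-step estimate \eqref{eq:k3spent} gives $\sigma_1\le2e_1$. These match $P_1=q$ and $Q_1=2$.

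\textbf{Inductive step.} Fix $t\ge2$. The bound on $\sigma_t$ is easy: by \eqref{eq:fixed-excess}, $\delta_t\le2D_t\le2e_t+2T_{t-1}+2\OPT_{\le t-1}$. This yields $\sigma_t\le(Q_{t-1}+2P_{t-1}+2)\OPT_{\le t}$, which is exactly the recurrence for $Q_t$.

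For $a_t$ we split as in Theorem~\ref{thm:k3ub}. If a prediction has repeated, or $v_t$ is $b_t$, or $v_t$ is admissible, then $a_t\le D_t\le e_t+T_{t-1}+\OPT_{\le t-1}$, which is dominated by the target. Otherwise $v_t$ is rejected, so
\[
\eps H_t<\sigma_{t-1}+2D_t.
\]
If $b_t=p_t$, then $a_t\le h_t\le H_t$. If $b_t=c_t$, then $a_t\le h_t+\Phi_{t-1}$, and Lemma~\ref{lem:fixedkprefix} gives $\Phi_{t-1}\le\sigma_{t-1}-T_{t-1}+H_{t-1}$. Hence
\[
a_t\le H_t+\sigma_{t-1}-T_{t-1}<qD_t+\bigl(1+\tfrac q2\bigr)\sigma_{t-1}-T_{t-1}.
\]
Substituting $D_t=e_t+T_{t-1}+\OPT_{\le t-1}$, the $T_{t-1}$ terms give $(q-1)T_{t-1}$. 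Adding the earlier $T_{t-1}$ to get $T_t=T_{t-1}+a_t$, we obtain
\[
T_t\le qT_{t-1}+q\,\OPT_{\le t}+\bigl(1+\tfrac q2\bigr)\sigma_{t-1}.
\]
This is at most $(q+1)P_{t-1}+(1+q/2)Q_{t-1}+q$, as required. The first case is dominated as well, since $2T_{t-1}+\OPT\le$ this bound when $q\ge2$.

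\textbf{Main obstacle.} The delicate point is the non-uniformity of the inequalities: whether the $\sigma_t$ and $T_t$ bounds can be applied with $\OPT_{\le t}$ at every step, and whether the $\eps\le1$ restriction (so $q\ge2$) suffices to dominate the trivial branches. I would carry the coefficients of each $e_i$ explicitly, so that monotonicity in $\OPT_{\le t}$ is automatic.

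\textbf{Asymptotics.} Writing $P_t\sim\alpha_tq^t$ and $Q_t\sim\beta_tq^{t-1}$, the recurrences give $\beta_t=2\alpha_{t-1}$ and $\alpha_t=\alpha_{t-1}+\beta_{t-1}/2=\alpha_{t-1}+\alpha_{t-2}$. Unfortunately this is Fibonacci growth, not $1$. So I expect the leading coefficient of $P_{k-1}$ to be $q^{k-1}$ only after a finer check: $Q_{t-1}$ is really of degree $t-2$ in $q$ times $O(1)$, which makes $(q/2)Q_{t-1}$ of lower order than $qP_{t-1}$. This holds since $Q_t\approx2P_{t-1}\sim2q^{t-1}$, giving $(q/2)Q_{t-1}\sim q^{t-1}$ against $qP_{t-1}\sim q^t$. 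Hence $P_t=q^t(1+O(1/q))$ and $1+2P_{k-1}(2/\eps)=2^k\eps^{-(k-1)}(1+O(\eps))$. Combined with the leading term $2^k/\eps^{k-1}$ of Theorem~\ref{thm:lower-general} and $R_k^{\R}\le R_k$, both limits equal $2^k$.
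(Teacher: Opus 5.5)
Your proposal is correct and follows the paper's proof. It uses the same joint induction on $(T_t,\sigma_t)$ through Lemma~\ref{lem:fixedkprefix}, and the same split into the cheap branch $a_t\le D_t$ and the rejected branch with either $p_t$ free or $b_t=c_t$. It closes with the same terminal alternating-path step $a_k\le\OPT+T_{k-1}$ and the same degree count for $P_t,Q_t$. Carrying $\OPT_{\le t}$ instead of $\OPT$ is harmless but unnecessary.

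Two small clean-ups are needed. First, write out the $p_t$-free branch, $T_t\le T_{t-1}+H_t<(q+1)T_{t-1}+q\,\OPT_{\le t}+\tfrac q2\sigma_{t-1}$, since that branch is where the coefficient $q+1$ in the recurrence comes from. Second, delete the ``Fibonacci'' computation: $(1+\tfrac q2)Q_{t-1}$ has degree $t-1$, so $\alpha_t=\alpha_{t-1}=1$, exactly as you conclude afterwards.
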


\begin{proof}
Fix the input and offline matching used in Lemma~\ref{lem:fixedkprefix}. We prove, for every
nonterminal request $t\le k-1$,
\begin{equation}\label{eq:fixedkrecurrencebounds}
T_t\le P_t(q)\OPT,
\qquad
\sigma_t\le Q_t(q)\OPT.
\end{equation}
At request one, the rejection argument in Theorem~\ref{thm:robustness} gives
$T_1=a_1\le q e_1\le q\OPT$, while \eqref{eq:fixed-excess} gives
$\sigma_1\le2e_1\le2\OPT$.

Fix $t\in\{2,\ldots,k-1\}$ and suppose \eqref{eq:fixedkrecurrencebounds} holds at $t-1$. Since the
$e_i$ are the edges of one offline optimum,
\begin{equation}\label{eq:fixedkD}
D_t=e_t+T_{t-1}+\sum_{i<t}e_i
\le T_{t-1}+\OPT.
\end{equation}
If a prediction has repeated, the algorithm uses the nearest free server; if the predictions are
distinct and $v_t=b_t$ or is permitted, least-cost selection gives the same estimate. In both
cases $a_t\le D_t$, and the claimed recurrence below follows because $q\ge2$.

It remains to consider a distinct prediction prefix for which $v_t$ is rejected. The admissibility test
and $\Delta_t(v_t)\le2D_t$ give
\[
H_t<qD_t+\frac q2\sigma_{t-1}.
\]
If $p_t$ is free then $T_t\le T_{t-1}+H_t$; if it has been consumed, the edge $(c_t,p_t)$ has
length at most $\Phi_{t-1}$ and \eqref{eq:fixedkpotential} gives
\[
T_t\le H_t+\sigma_{t-1}.
\]
Using \eqref{eq:fixedkD}, every branch therefore satisfies
\[
T_t\le
(q+1)T_{t-1}
+\left(1+\frac q2\right)\sigma_{t-1}
+q\OPT.
\]
The first polynomial recurrence proves the first inequality in
\eqref{eq:fixedkrecurrencebounds}, while Equations~\eqref{eq:fixed-excess} and~\eqref{eq:fixedkD} give
\[
\sigma_t\le\sigma_{t-1}+2T_{t-1}+2\OPT,
\]
and the second recurrence proves the remaining induction step.

At request $k$, the terminal alternating path to the single free server gives
\[
a_k\le\OPT+T_{k-1}.
\]
and hence
\[
\cost(\Sd_\eps)
\le\OPT+2T_{k-1}
\le\bigl(1+2P_{k-1}(q)\bigr)\OPT.
\]
which also covers $\OPT=0$.

Induction in the polynomial recurrences gives
\[
P_t(q)=q^t+O_t(q^{t-1}),
\qquad
Q_t(q)=2q^{t-1}+O_t(q^{t-2}).
\]
The upper bound therefore gives a limiting coefficient at most $2^k$ for both $R_k$ and
$R_k^{\R}$. The highest-order term $2^k/\eps^{k-1}$ in Theorem~\ref{thm:lower-general} gives the
reverse inequality for $R_k^{\R}$ and, because
$R_k^{\R}\le R_k$, for $R_k$ as well.
\end{proof}

\begin{lemma}[closed form of the recurrence]\label{lem:closedform}
Let $q\ge2$ and $\mu=q+3$. For every $t\ge1$,
\[
P_t(q)\le\mu^t.
\]
Consequently, for every metric space and every $0<\eps\le1$,
\[
\cost(\Sd_\eps)
\le\left(1+2\left(\frac2\eps+3\right)^{k-1}\right)\OPT.
\]
\end{lemma}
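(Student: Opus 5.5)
The plan is a joint induction on $t$ that bounds $P_t$ and $Q_t$ together. $P_t$ alone does not close, because its recurrence involves $Q_{t-1}$. Guided by the leading terms $P_t\sim q^t$ and $Q_t\sim 2q^{t-1}$, the natural pair of inductive bounds is
\[
P_t(q)\le\mu^t,\qquad Q_t(q)\le 2\mu^{t-1}\qquad(t\ge1).
\]
The base case $t=1$ holds: $P_1=q\le\mu$ and $Q_1=2=2\mu^0$.

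For the step, assume both bounds at $t-1$. The $Q$-recurrence gives
\[
Q_t\le 2\mu^{t-2}+2\mu^{t-1}+2.
\]
We need this to be at most $2\mu^{t-1}$. That fails, since $2\mu^{t-2}+2>0$. So the second hypothesis must be adjusted. One option is $Q_t\le 2\mu^{t-1}\cdot\kappa$ for a suitable constant $\kappa$; another is a bound of the form $Q_t\le c\,\mu^{t-1}$ with $c$ chosen so that $c\mu^{t-2}+2\mu^{t-1}+2\le c\mu^{t-1}$. The latter requires roughly $c(1-1/\mu)\ge 2+2/\mu^{t-1}$, i.e.\ $c\ge 2\mu/(\mu-1)+\text{small}$. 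With $\mu\ge5$ the choice $c=3$ works: at $t\ge2$ we need $3\mu^{t-2}+2\mu^{t-1}+2\le3\mu^{t-1}$, i.e.\ $3\mu^{t-2}+2\le\mu^{t-1}$, which holds for $t\ge2$ and $\mu\ge5$. The base $Q_1=2\le3$ is also fine. So I will use the hypothesis $Q_t\le 3\mu^{t-1}$.

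The $P$ step is then
\[
P_t\le(q+1)\mu^{t-1}+\Bigl(1+\tfrac q2\Bigr)3\mu^{t-2}+q .
\]
It must be at most $\mu^t=(q+3)\mu^{t-1}$, which amounts to
\[
\Bigl(3+\tfrac{3q}{2}\Bigr)\mu^{t-2}+q\le 2\mu^{t-1}=2(q+3)\mu^{t-2}.
\]
Since $2(q+3)-(3+3q/2)=3+q/2$, it suffices that $q\le(3+q/2)\mu^{t-2}$. This holds for $t\ge2$ because $\mu^{t-2}\ge1$ and $q\le3+q/2$ whenever $q\le6$. That condition is not always satisfied, so here I would instead use $\mu^{t-2}\ge1$ only when $t=2$, which needs a direct check. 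At $t=2$,
\[
P_2=(q+1)q+(1+q/2)2+q=q^2+3q+2\le(q+3)^2,
\]
so $t=2$ is fine. For $t\ge3$ we have $\mu^{t-2}\ge\mu=q+3$, and $(3+q/2)(q+3)\ge q$ trivially. Hence the induction closes for all $q\ge2$.

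This constant bookkeeping is the main obstacle: finding a $Q$-bound strong enough to feed into $P$ yet weak enough to be self-sustaining. If $3\mu^{t-1}$ turned out to be too weak somewhere, I would first try sharpening it to $(2+2/\mu)\mu^{t-1}$.

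Finally, with $q=2/\eps\ge2$ for $\eps\le1$, Theorem~\ref{thm:fixedkleading} gives
\[
\cost(\Sd_\eps)\le\bigl(1+2P_{k-1}(q)\bigr)\OPT\le\bigl(1+2(2/\eps+3)^{k-1}\bigr)\OPT,
\]
which is the claimed closed form.
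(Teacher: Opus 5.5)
Your proof is correct and follows the paper's route: a simultaneous induction giving $P_t\le\mu^t$ together with a geometric bound on $Q_t$, followed by substitution into Theorem~\ref{thm:fixedkleading} with $q=2/\eps\ge2$. The only difference is the constant in the $Q$-bound. The paper uses $Q_t\le\beta\mu^t$ with $\beta\mu=2+4/(q+2)\le3$, tuned so that both induction steps close uniformly for all $t\ge2$ through exact identities. Your cruder $Q_t\le3\mu^{t-1}$ instead forces the separate check $P_2=q^2+3q+2\le\mu^2$, which you carry out correctly; your suggested fallback $(2+2/\mu)\mu^{t-1}$ would not survive the $Q$-step, but you do not need it.
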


\begin{proof}
Put
\[
\beta=\frac{2(q+4)}{(q+2)(q+3)}.
\]
We prove simultaneously that $P_t(q)\le\mu^t$ and $Q_t(q)\le\beta\mu^t$, using the identities
\[
\left(1+\frac q2\right)\beta=1+\frac1\mu,
\qquad
\beta(q+2)=2+\frac2\mu
\].
At $t=1$, one has $P_1=q\le\mu$ and
$Q_1=2\le\beta\mu=2(q+4)/(q+2)$.

Suppose both bounds hold at $t-1$. The first recurrence gives
\[
P_t
\le\left(\mu-1+\frac1\mu\right)\mu^{t-1}+q
=\mu^t-\mu^{t-1}+\mu^{t-2}+q
\le\mu^t,
\]
because $\mu^{t-1}-\mu^{t-2}=\mu^{t-2}(q+2)>q$. For the second recurrence,
\[
\beta\mu^t-Q_t
\ge\beta\mu^t-(\beta+2)\mu^{t-1}-2
=2(\mu^{t-2}-1)\ge0.
\]
This proves the simultaneous induction, and the cost bound follows from
Theorem~\ref{thm:fixedkleading} after setting $q=2/\eps$.
\end{proof}

\begin{remark}[lower-bound comparison]
The three-server construction in Theorem~\ref{thm:lower-general} uses servers $0,1,a$ with $a$ just
below $x+2/\eps$.  Every $(1+\eps)$-consistent algorithm follows two predictions, including one
from the consumed middle-server location, and incurs ratio
$(2a-x)/x\to1+4(2+\eps)/\eps^2$.  The predicted cost already incurred before the second decision
widens the admissible interval to $a\ge x+2/\eps$.  The lower construction places $a$ immediately
below this boundary.
\end{remark}

\section{Randomisation}\label{sec:randomised}

For a randomised algorithm, a perfect completion bounds the probability that the algorithm does
not follow a prediction, which changes the optimal two-server constant.

\begin{theorem}[randomised two-server trade-off]\label{thm:randk2}
For $\eps\in(0,1]$, $R_2^{\mathrm{rand},\R}(1+\eps)=1+1/\eps$; for $\eps\in[1,2]$ the value is $2$.
The two formulas agree at $\eps=1$.
\end{theorem}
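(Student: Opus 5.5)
The plan is to reduce, as in Theorem~\ref{thm:k2}, to a single decision. Normalise $s_0=0$, $s_1=1$ and $p_1=s_1$ by reflection, and write $x=r_1$. A randomised algorithm is then described by the probability $\pi(x)$ that it matches $x$ to $s_0$; the second move is forced. For a completion $r_2=y$, the two possible total costs are
\[
A_0(y)=x+|y-1|\quad(\text{via }s_0),\qquad A_1(y)=(1-x)+|y|\quad(\text{via }s_1),
\]
and $\OPT=\min\{A_0(y),A_1(y)\}$. The expected cost is $\pi A_0+(1-\pi)A_1$. Everything reduces to optimising $\pi$ against a few extremal completions.

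\emph{Lower bound.} Fix $x\in(0,1)$. The completion $(0,s_0)$ is perfect with $\OPT=1-x$ by Lemma~\ref{lem:uncross}. Consistency then gives $\pi(1+x)+(1-\pi)(1-x)\le(1+\eps)(1-x)$, that is
\[
\pi\le\frac{\eps(1-x)}{2x}.
\]
The completion $r_2=1$ has $\OPT=x$ and expected ratio
\[
\frac{(2-x)-2\pi(1-x)}{x}\;\ge\;\frac{2-x}{x}-\frac{\eps(1-x)^2}{x^2}.
\]
With $u=1/x$ this equals $2u-1-\eps(u-1)^2$, which is maximised at $u=1+1/\eps$ with value $1+1/\eps$; this point lies in $x\in(0,1)$ for every $\eps>0$. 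This gives the bound $1+1/\eps$ for $\eps\le1$. For the value $2$, I would use the prediction-free midpoint argument at $x=1/2$: the completions $y=1$ and $y=0$ give ratios $3-2\pi$ and $1+2\pi$, so the better of the two is at least $2$ for every $\pi$. Since $1+1/\eps\ge2$ exactly when $\eps\le1$, the lower bound is $\max\{2,1+1/\eps\}$.

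\emph{Upper bound.} For $\eps\le1$, take
\[
\pi(x)=\min\Bigl\{1,\frac{\eps(1-x)}{2x}\Bigr\}\ \text{ for }x\in(0,1),\qquad \pi=1\ \text{for }x\le0,\qquad \pi=0\ \text{for }x\ge1.
\]
For $\eps\in[1,2]$, use the $\eps=1$ algorithm, which is $2$-consistent and hence $(1+\eps)$-consistent. The key step is to bound $\sup_y$ of the expected ratio. Both the expected cost and $\OPT$ are piecewise linear in $y$, with breakpoints at $y=0$, $y=1$ and the crossing point $A_0=A_1$. On each piece the ratio is linear-fractional, hence monotone, so its supremum is attained at a breakpoint or as $|y|\to\infty$. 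At the crossing point and at infinity the ratio is at most $1$. So only $y=0$ and $y=1$ remain, which matches the extremisers in~\eqref{eq:phipsi}.

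\emph{Checking the two breakpoints.} At $y=1$ the ratio is $2u-1-2\pi(u-1)$. When the cap $\pi=1$ is inactive this is the concave quadratic above, bounded by $1+1/\eps$. When the cap is active, $u\ge1+2/\eps$ and the ratio is $1$. At $y=0$ the ratio is $1+2\pi x/(1-x)\le1+\eps$, and $1+\eps\le1+1/\eps$ for $\eps\le1$. Consistency only concerns perfect completions, i.e.\ those $y$ with $A_1\le A_0$. There the same breakpoint reduction leaves $y=0$ as the worst case, which meets $1+\eps$ by the choice of $\pi$. The cases $x\le0$ and $x\ge1$ are handled by dominance, as in Theorem~\ref{thm:k2}. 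I expect the main obstacle to be this supremum over $y$ for a mixed strategy. Since~\eqref{eq:phipsi} applies only to pure choices, the argument must go through the linear-fractional monotonicity reduction rather than reuse~\eqref{eq:phipsi} directly.
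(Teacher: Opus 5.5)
Your proof is correct. The lower bound is exactly the paper's: the perfect completion $y=0$ caps the deviation probability, the completion $y=1$ at $x=\eps/(1+\eps)$ (the paper's $z=(1-x)/x=1/\eps$) gives $1+1/\eps$, and the midpoint argument at $x=1/2$ gives $2$.

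The upper bound differs from the paper's in two minor respects.

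First, the mixing rule. You deviate with probability $\min\{1,\eps z/2\}$. The paper deviates with probability $\min\{z^2/(1+z^2),\eps z/2\}$. For $\eps\in[1,2]$ the paper uses the balanced rule $z^2/(1+z^2)$, whereas you reuse your $\eps=1$ rule. Both choices give $1+1/\eps$ for $\eps\le1$ and $2$ on $[1,2]$.

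Second, the supremum over completions. Your closing remark that \eqref{eq:phipsi} cannot be reused for a mixed strategy is mistaken, and the paper does reuse it. On any completion, one of the two pure matchings is optimal. The expected ratio is therefore $1$ plus the probability of choosing the other matching times that matching's relative excess. By \eqref{eq:phipsi}, this excess is at most $2/z$ or $2z$ respectively, uniformly in $y$. This gives consistency ($1+2\pi/z\le1+\eps$) and robustness ($\max\{1+2\pi/z,\,1+2z(1-\pi)\}$) in one line.

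Your piecewise linear-fractional reduction is also valid. The ratio is monotone on the pieces cut at $0$, $x$ and $1$, equals $1$ at $y=x$, and tends to $1$ as $|y|\to\infty$. It has the small advantage of locating exactly where the supremum is attained, but it is more work than the case split needs.
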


\begin{proof}
Normalise the servers to $0,1$, take $p_1=1$, $x=r_1\in(0,1)$, put $z=(1-x)/x$, and let $q$ be
the probability of following the prediction. The two matchings have costs
$c_{\rm id}=x+|y-1|$ (select $s_0$ first) and $c_{\rm sw}=1-x+|y|$ (select $s_1$ first),
and~(\ref{eq:phipsi}) says their
worst relative excesses are $2/z$ and $2z$, respectively.

For the lower bound when $\eps\le1$, choose $z=1/\eps$. Expected consistency on the perfect
completion $y=0$ requires $(1-q)2x\le\eps(1-x)$ and hence $q\ge1/2$, while on the completion
$y=1$ we have $\OPT=x$ and expected ratio $1+2qz\ge1+1/\eps$.

For the upper bound, follow the prediction with probability
$q_\eps(z)=\max\{1/(1+z^2),1-\eps z/2\}$. (Outside $(0,1)$ take the move that is weakly
optimal for every completion; reflect for $p_1=0$; at the forced second step ignore its
prediction and use the unique free server.) On a perfect instance the expected ratio is at most
$1+2(1-q_\eps)/z\le1+\eps$. If following the prediction is optimal, the same expression is at most
$1+2z/(1+z^2)\le2$. If the other matching is optimal, the ratio is at most $1+2q_\eps z$: according to
which term defines the maximum, this is at most either $1+2z/(1+z^2)\le2$ or
$1+2z-\eps z^2\le1+1/\eps$. This proves the first claim. For $\eps\ge1$, the balanced choice
$q=1/(1+z^2)$ is $2$-consistent and $2$-robust. At $x=1/2$, the completions $y=0,1$ have
expected ratios $1+2(1-q)$ and $1+2q$, so every randomised algorithm has robustness at least $2$.
\end{proof}

For general $k$, a single perfect completion bounds all first deviations at one scale. Suppose the
main prefix through stage $j$ is
$(x,s_1),(a_1,s_2),\dots,(a_{j-1},s_j)$. Complete it by $(0,s_0)$ and then every later exact hit
$(a_i,s_i)$, $i>j$; this instance is perfect, with $\OPT=a_j-x$. If $E_j$ is the event that the
first deviation occurs at stage $j$, the remaining-cost computations from
Theorem~\ref{thm:lower-general} give
\begin{equation}\label{eq:randcharge}
\Pr(E_j)g_j\le\eps(a_j-x),\qquad
g_j=\begin{cases}
2x,&j=1,\ k=2,\\
\min\{2x,2(a_2-a_1)\},&j=1,\ k\ge3,\\
\min\{2a_{j-1},2(a_{j+1}-a_j)\},&2\le j\le k-2,\\
2a_{k-2},&j=k-1.
\end{cases}
\end{equation}
Every completed matching costs at least $\OPT$; on $E_j$, selecting a server to the left adds the
first quantity in the minimum, while selecting $s_m$ to the right adds $2(a_m-a_j)$. Expected
consistency therefore implies~(\ref{eq:randcharge}) using a completion fixed independently of the
random bits and valid for every target simultaneously.

\begin{theorem}[fixed-$k$ randomised lower bound]\label{thm:randk}
For every $k\ge2$ and $\eps\le1/(8k)$,
\[
R_k^{\mathrm{rand},\R}(1+\eps)\ge
\frac{5}{16(4k)^{k-2}}\frac1{\eps^{k-1}}.
\]
Consequently $R_k^{\mathrm{rand},\R}(1+\eps)=\Theta_k(1/\eps^{k-1})$ as $\eps\downarrow0$.
\end{theorem}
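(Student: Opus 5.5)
We reuse the chain instance of Theorem~\ref{thm:lower-general}, now with probabilities in place of forced moves. Fix $x$ with $x$ of order $\eps$; to be concrete, $x=2\eps$ (any $x\in(x_\eps,1)$ scale works). We choose geometric locations $a_0=0$, $a_1=1$, and $a_j\approx\lambda a_{j-1}$ with $\lambda$ of order $1/(k\eps)$, say $\lambda=1/(4k\eps)$. The adversary issues the main prefix $(x,s_1),(a_1,s_2),\dots,(a_{k-2},s_{k-1})$ and then the terminal request $(a_{k-1},s_0)$. On this full input $\OPT=x$, by the order-preserving matching. The algorithm's cost is at least $2a_{k-1}-x$ on the event $N$ that it never deviates during the prefix. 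Hence
\[
\text{robustness}\ \ge\ \Pr(N)\,\frac{2a_{k-1}-x}{x},
\]
and it suffices to show $\Pr(N)\ge$ a constant (e.g.\ $1/2$ or $5/8$) and to evaluate $a_{k-1}/x$.

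The key step bounds $\Pr(E_j)$ for each first-deviation stage $j$ using \eqref{eq:randcharge}. That display gives $\Pr(E_j)\le\eps(a_j-x)/g_j$, with $g_j$ a minimum of a left term ($2x$ or $2a_{j-1}$) and a right term $2(a_{j+1}-a_j)$. With the geometric choice:
\begin{itemize}
\item The left term at stage $j\ge2$ gives $\eps a_j/(2a_{j-1})=\eps\lambda/2=1/(8k)$.
\item At $j=1$ it gives $\eps(1-x)/(2x)\le 1/4$ for $x=2\eps$.
\item The right term gives $\eps a_j/(2(\lambda-1)a_j)\approx\eps/(2\lambda)$, which is tiny.
\end{itemize}
Summing over $j\le k-1$ yields $\Pr(\bigcup E_j)\le 1/4+(k-2)/(8k)<3/8$, so $\Pr(N)\ge5/8$. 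This is the source of the constant $5/16=\tfrac12\cdot\tfrac58$. The hypothesis $\eps\le1/(8k)$ is what makes $\lambda\ge2$, so that the right-hand terms are dominated, and keeps $x<1$.

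It remains to evaluate the ratio. We have $a_{k-1}=\lambda^{k-2}$, so
\[
\frac{2a_{k-1}-x}{x}\ \ge\ \frac{\lambda^{k-2}}{x}=\frac{1}{2\eps}\Bigl(\frac{1}{4k\eps}\Bigr)^{k-2}.
\]
Multiplying by $\Pr(N)\ge5/8$ gives $\tfrac{5}{16}(4k)^{-(k-2)}\eps^{-(k-1)}$, the claimed bound. The matching upper bound $O_k(1/\eps^{k-1})$ follows from the deterministic Theorem~\ref{thm:fixedkleading}, since deterministic algorithms are randomised ones; this gives the $\Theta_k$ statement.

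The main obstacle is justifying \eqref{eq:randcharge} uniformly: the completion must be fixed before the coin flips, yet it has to penalise every possible deviation target at stage $j$, both left and right. We also need to confirm that deviations at earlier stages, which alter the free set, do not break the later perfect completions. We handle this by charging only the first deviation, since the events $E_j$ are disjoint, and by verifying the remaining-cost bound via Lemma~\ref{lem:residual} for each target. The endpoint case $j=k-1$, where there is no right neighbour, needs a separate check. The constants also need care: if $x=2\eps$ makes the $j=1$ term too large, we would instead take $x=4\eps$ and recompute, accepting a slightly different constant.
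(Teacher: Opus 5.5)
Your proposal is correct and follows the paper's proof essentially step for step: the same choices $x=2\eps$, $\lambda=1/(4k\eps)$, $a_j=\lambda^{j-1}$, the same use of \eqref{eq:randcharge} with a union bound over the disjoint first-deviation events to get $\Pr(N)>5/8$, and the same deterministic upper bound for the $\Theta_k$ claim. The only difference is cosmetic: you lower-bound the cost off the event $N$ by $0$ rather than by $x$, which still yields the constant $5/16$ because $a_{k-1}\ge x$. The fallback to $x=4\eps$ is unnecessary, since $x=2\eps$ already gives $\Pr(E_1)\le1/4$.
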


\begin{proof}
Set $x=2\eps$, $\lambda=1/(4k\eps)\ge2$, $a_1=1$, and $a_j=\lambda^{j-1}$, then use the
construction from Theorem~\ref{thm:lower-general} and finish with $(r_k,p_k)=(a_{k-1},s_0)$. Equation~(\ref{eq:randcharge})
gives $\Pr(E_1)\le(1-x)/4\le1/4$, and since
$a_{j+1}-a_j=a_{j-1}\lambda(\lambda-1)\ge2a_{j-1}$, it gives
$\Pr(E_j)<\eps\lambda/2=1/(8k)$ for every $2\le j\le k-1$. Since the first-deviation events are
disjoint, the probability $p$ of following all $k-1$ predictions is greater than $5/8$.

The terminal instance has $\OPT=x$; every realisation costs at least $x$, and when all predictions
are followed the cost is $2a_{k-1}-x$. Hence
\[
\frac{\mathbb E[\cost]}{\OPT}
\ge1+2p\left(\frac{a_{k-1}}x-1\right)
>1+\frac54\left(\frac{a_{k-1}}x-1\right).
\]
Here $a_{k-1}/x\ge8$, so the last expression is at least
$(5/8)a_{k-1}/x=(5/16)\lambda^{k-2}/\eps$, the claimed bound. The matching upper bound follows
from Theorem~\ref{thm:robustness}, since a deterministic algorithm is also randomised.
\end{proof}

\subsection{Uniform randomised bounds}\label{sec:randfixedband}

The preceding theorem determines the power of $1/\eps$ for fixed $k$, but its explicit
probability bound requires $\eps=O(1/k)$. The next results give bounds uniform in $k$: the lower
bound is exponential in $1/\eps$, and the upper bound combines a prediction-following algorithm
with a prediction-free competitive algorithm.

\begin{theorem}[uniform randomised lower bound]\label{thm:randuniformlb}
For every $\eta>0$ there are constants $\eps_0(\eta)>0$ and $C_\eta<\infty$ such that, for every
$\eps\in(0,\eps_0(\eta)]$ and every $k\ge C_\eta/\eps$,
\[
R_k^{\mathrm{rand},\R}(1+\eps)\ge \exp\!\left(\frac{2-\eta}{\eps}\right).
\]
\end{theorem}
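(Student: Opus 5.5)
The plan is to run the prefix construction of Theorem~\ref{thm:lower-general} with many stages and slowly growing server locations, then combine two kinds of constraint. The first is a \emph{cumulative} consistency constraint, coming from a single perfect completion at each stage. The second is a robustness constraint, coming from terminating the instance early. Together they should yield a differential inequality whose solution forces robustness of order $e^{2/\eps}$.

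\emph{Setup.} Put the servers at $0=a_0$, $a_1=1<a_2<\dots<a_n$ with $n=k-1$ and a geometric progression $a_j=\lambda^{j-1}$, where $\lambda=1+\theta$ and $\theta=\theta(\eta)$ is small. Issue the prefix $(x,s_1),(a_1,s_2),\dots$ with $x$ a small constant multiple of $\eps$ (or tending to $0$ after scaling). Let $F_m$ be the probability that the algorithm has deviated from the predictions before stage $m$, and let $\pi_m=1-F_m$.

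\emph{Step 1 (cumulative consistency).} Stop after stage $m$ and complete perfectly by $(0,s_0)$ and exact hits, so that $\OPT=a_m-x$. Any run whose first deviation occurs at stage $j\le m$ and goes to the left pays excess at least $2a_{j-1}$, by the remaining-cost computation behind~\eqref{eq:randcharge}. Summing over $j$, rather than bounding each $\Pr(E_j)$ separately as in Theorem~\ref{thm:randk}, gives
\[
\sum_{j\le m}\Pr(E_j)\,2a_{j-1}\le\eps a_m\qquad\text{for every }m.
\]

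\emph{Step 2 (robustness).} Stop after stage $m$ with the terminal request $(a_m,s_0)$. Every run that followed all predictions pays about $2a_m$, while $\OPT=x$. This gives $\pi_m\le Rx/(2a_m)$, so $F_m\ge1-Rx/(2a_m)$.

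\emph{Step 3 (continuum limit).} Write $a$ for the continuous scale and $G(A)=\int^A F\,da$. Summation by parts turns Step 1 into $2AF(A)-2G(A)\lesssim\eps A$, that is, $(G/A)'\le\eps/(2A)$. Integrating from the scale $a_0\approx$ const$\cdot x$ where the chain starts gives
\[
G(A)/A\le(\eps/2)\log(A/a_0)+O(1)\cdot(\text{initial term}).
\]
Step 2 forces $F\ge1-\eta'$ on all scales beyond $Rx/\eta'$, and hence $G(A)/A\ge1-2\eta'$ for $A$ much larger than that. Comparing the two bounds gives $\log(Rx/a_0)\ge(2-\eta)/\eps$, so $R\ge e^{(2-\eta)/\eps}$. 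The discretisation error from the ratio $\lambda$ costs a factor $1+O(\theta)$ in the exponent. Covering $\log$-length $2/\eps$ with steps of size $\theta$ needs about $2/(\theta\eps)$ servers, which is where $k\ge C_\eta/\eps$ comes from. Smallness of $\eps$ absorbs the additive constants.

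\emph{Main obstacle.} The main difficulty is the rightward deviations. With fine steps, the quantity $g_j$ in~\eqref{eq:randcharge} is governed by $2(a_{j+1}-a_j)=2\theta a_j$, which is small, so a rightward deviation is not charged $2a_{j-1}$. I would first try to argue that a run deviating to a server $s_m$ on the right can be treated as still following the chain, one index ahead. Concretely, I would redefine the event ``deviated'' as ``used $s_0$ or left a gap'', so that both the terminal-cost bound of Step 2 and the charge of Step 1 remain valid. If that fails, I would interleave the geometric scales with well-separated gaps, placing pairs of nearby servers far apart, so that a rightward jump always costs a constant fraction of the current scale. This would only lose a constant in each of the $O(1/(\theta\eps))$ steps, and hence only an $\eta$ in the exponent.
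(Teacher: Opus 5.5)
\textbf{There is a genuine gap: the obstacle you flag is exactly where the argument fails, and neither repair closes it as stated.} Step~1 only controls runs whose first deviation goes to $s_0$. A first deviation to some $s_{m'}$ on the right is charged only $2(a_{m'}-a_j)$, which is $2\theta a_j$ for the nearest server. Yet the bound $F_m\ge1-Rx/(2a_m)$ from Step~2 counts every deviation, so Step~3 combines inequalities about different events. Moreover, \eqref{eq:randcharge} assumes all earlier predictions were followed, so a use of $s_0$ after a rightward move is not charged at all.

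The missing idea is to index runs by the first stage $i$ at which $s_0$ is consumed, ignoring every other deviation. Then use the line identity $\cost=\sum(s-r)+2N$ (Lemma~\ref{lem:linecost}). Here $N$ is the total leftward displacement, and $\sum(s-r)$ depends only on the two point multisets.
\begin{itemize}
\item On a perfect stopping instance at depth $m\ge i$, this sum equals $\OPT=a_m-x$. So any run containing the edge $b_i\to0$ (with $b_1=x$, $b_i=a_{i-1}$) pays at least $\OPT+2b_i$, whatever else it did.
\item On a terminal instance at depth $m$, all remaining requests lie at positions at least $a_m$. So any run that has not yet consumed $s_0$ must later send such a request to $0$, and it pays at least $2a_m-x$. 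This holds for runs with rightward moves too.
\end{itemize}
With ``deviated'' meaning exactly ``consumed $s_0$'', both of your inequalities hold and rightward moves never need a charge. Adding ``or left a gap'' puts uncharged mass back into $F$ and breaks Step~1. The separated-pairs repair is unnecessary, and it is unlikely to keep the constant $2$, which needs consecutive scale ratios tending to $1$.

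\textbf{Your choice of $x$ is also fatal as stated.} The first-stage capacity is $\eps(a_1-x)/(2x)$; with $a_1=1$ and $x=c\eps$ it is about $1/(2c)$. If $x\le\eps/(2+\eps)$, an algorithm may send $x$ to $s_0$ with probability one. It then pays $\OPT+2x\le(1+\eps)\OPT$ on every stopping instance and exactly $\OPT=x$ on every terminal instance, so the family proves nothing. For a larger fixed $c$, the ``initial term'' in Step~3 is $\Theta(1/c)$ and the exponent drops to $2-1/c$. The paper takes $x=1/2$ and $a_1=1$, so this capacity is only $\eps/2$.

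\textbf{With both corrections, your extraction works and differs from the paper's.} Let $Q_m$ be the probability of consuming $s_0$ by stage $m$. Because the $b_i$ increase, Abel summation of the prefix constraints $2\sum_{i\le j}q_ib_i\le\eps(a_j-x)$ gives $Q_m\le\eps/2+(m-1)\eps\theta/2$. Your depth-$m$ terminal instance, padded with exact hits, gives $1-Q_m\le Rx/(2a_m-x)$. Take $m$ maximal with $\eps/2+(m-1)\eps\theta/2\le1-\eta'$. This yields the bound without the paper's single terminal instance and greedy optimisation.
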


\begin{proof}[Proof sketch]
Fix $x=1/2$, put $a_1=1$, and take
$a_i=(1+\delta)^{i-1}$, where $\delta>0$ is a sufficiently small constant depending only on
$\eta$. Use the common lower-bound prefix
$(x,s_1),(a_1,s_2),\ldots,(a_{n-1},s_n)$, its perfect stopping instances, and the terminal completion
at $a_n$.

For a deterministic realisation, let $i$ be the first stage at which it consumes the leftmost
server $s_0=0$, with $i=\infty$ if this never happens.  On the real line, for any perfect matching,
$\cost=\Delta+2N$, where $\Delta=\sum(s-r)$ depends only on the two point multisets and $N$ is the
total leftward displacement (Lemma~\ref{lem:linecost}). Thus, if the realisation uses the edge
$b_i\to0$, where $b_1=x$ and $b_i=a_{i-1}$ for $i\ge2$, its cost on every later stopping instance is at
least $\OPT+2b_i$, and its cost on the terminal instance is at least $2b_i-x$. If $s_0$ is never
used in the prefix, the final request is forced to $s_0$ and the total cost is at least $2a_n-x$.
Averaging over the random tape shows that every randomised algorithm is coordinatewise dominated
on this family by a mixture that records only the first stage at which $s_0$ is used
(Theorem~\ref{thm:canon}).

Writing $q_i$ for this probability, $(1+\eps)$-consistency on the stopping instance at depth $j$
therefore gives the exact prefix constraints
\[
2\sum_{i\le j}q_i b_i\le \eps(a_j-x).
\]
The expected cost on the terminal instance is minimised by assigning probability mass to the
earliest scales (Theorem~\ref{thm:greedy}). Here the first capacity is $\eps/2$, and each subsequent
geometric scale contributes capacity $\eps\delta/2$. Hence roughly $2/(\eps\delta)$ stages are
required before one unit of probability is accumulated. Retaining the fully filled coordinates
gives cost at least
\[
\eps\bigl((1+\delta)^L-1\bigr)-\frac12,
\qquad
L=\left\lfloor\frac{2(1-\eps/2)}{\eps\delta}\right\rfloor.
\]
Since $2\log(1+\delta)/\delta\ge2-\eta/3$ for sufficiently small $\delta$, division by
$\OPT=1/2$ yields the displayed bound for all sufficiently small $\eps$.  The construction uses
$O_\eta(1/\eps)$ servers; exact-hit padding transfers it to every larger $k$. The three asserted
steps, the cost decomposition on the real line, the reduction to this mixture, and the
greedy minimisation subject to the prefix constraints, are proved in Appendix~\ref{app:lb}.
\end{proof}

The upper bound follows from the following comparison theorem.

\begin{definition}[switching assumptions]\label{def:switching}
Fix an online problem and two deterministic online algorithms $X,Y$, with per-request costs
$x_t,y_t$ and cumulative costs $X_t,Y_t$. The pair satisfies the \emph{switching assumptions} if there is one
uniform online implementation, common to all inputs and all mode schedules, that starts with the
algorithm and both references in the same state, with tracking potential $\Psi_0=0$, maintains a
mode in $\{X,Y\}$ and a nonnegative potential $\Psi$, and supports the following operations
pathwise:
\begin{enumerate}
\item[(S)] while tracking one reference on a request, actual cost plus the change in $\Psi$ is at most
that reference's cost on the request;
\item[(T)] changing the tracked reference after prefix $t$, without changing the online
state, increases $\Psi$ by at most $D_t\le X_t+Y_t$.
\end{enumerate}
The two deterministic reference trajectories are simulated in parallel, so their current actions
and costs are known after the request is revealed and before the combined algorithm acts. The
implementation must remain valid after every finite sequence of pre-service and post-service
switches.  The switching kernels are measurable; Bernoulli maximal couplings below give an
explicit implementation.
\end{definition}

Fix $L>0$, let $Z=e^L-1$, and define
\[
F(z)=\min\left\{1,\frac{\log(1+z)}L\right\}.
\]
Before request $t$, put $A=X_{t-1}$, $B=Y_{t-1}$, and let $x,y$ be the two reference costs on the
current request. Along the virtual interpolation
\[
A(s)=A+sx,\qquad B(s)=B+sy,\qquad q(s)=F(A(s)/B(s)),
\]
let $\bar q=\int_0^1q(s)\,ds$. The combined algorithm first changes the marginal probability of tracking
$Y$ from $q(0)$ to $\bar q$, serves the request in the resulting mode, and then changes the
marginal from $\bar q$ to $q(1)$. Each change uses a maximal Bernoulli coupling. Since
$A(s)/B(s)$ is monotone on the request interval, Definition~\ref{def:switching} gives the
expected amortised line integral
\begin{equation}\label{eq:interpolation}
\int\bigl[(1-F(z))\,dX+F(z)\,dY+(X+Y)|dF(z)|\bigr].
\end{equation}

\begin{theorem}[comparison theorem]\label{thm:comparison}
Let $c\ge1$ and $\eps\in(0,1]$, and put
\[
\kappa(c)=2\log c+(c+1)\log(1+1/c),\qquad
L=\frac{2+\kappa(c)}\eps,\qquad
M=1+\frac{2(e^L-1)}L.
\]
For every pair $(X,Y)$ satisfying Definition~\ref{def:switching}, there is a randomised online algorithm such that, on
every fixed input,
\[
\mathbb E[\cost]\le M\,\cost(Y),
\]
and, on every input satisfying $\cost(Y)\le c\,\cost(X)$,
\[
\mathbb E[\cost]\le(1+\eps)\cost(X).
\]
\end{theorem}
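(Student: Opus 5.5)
We analyse the interpolation algorithm described just before the statement. By Definition~\ref{def:switching} and the Bernoulli maximal couplings, its expected cost is bounded by the line integral~\eqref{eq:interpolation}, with $z=X/Y$ the running ratio of cumulative reference costs. The proof then reduces to bounding this integral by $M\,\cost(Y)$ in general, and by $(1+\eps)\cost(X)$ when $\cost(Y)\le c\,\cost(X)$.

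The tool is a pair of potentials, one per guarantee, chosen so that each differentiates cleanly along the path. Write $G(z)=\int_0^z F$ and consider $\Lambda_Y=Y\cdot\phi(X/Y)$ for a suitable $\phi$. Along a path where $X$ and $Y$ increase, $d\Lambda_Y=\phi'(z)\,dX+(\phi(z)-z\phi'(z))\,dY$.

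\emph{Robustness.} We want
\[
(1-F)\,dX+F\,dY+(X+Y)|dF|\le M\,dY-d\Lambda_Y .
\]
Since $|dF|=F'(z)|dz|$ and $Y\,dz=dX-z\,dY$, the terms $(X+Y)F'|dz|=(1+z)F'(z)\,|dX-z\,dY|$. Here the logarithmic choice of $F$ makes $(1+z)F'(z)=1/L$ constant on $z<Z$.

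\begin{itemize}
\item When $z\ge Z$, we have $F=1$ and $dF=0$, so the cost is just $dY$.
\item When $z<Z$, the switching cost is at most $(dX+z\,dY)/L$. So we need $\phi'\ge 1-F+1/L$ and $\phi-z\phi'\le M-F-z/L$.
\end{itemize}

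Take $\phi'(z)=1-F(z)+1/L$ on $[0,Z]$, constant $1/L$ beyond, with $\phi(0)=0$ and a cap. The $dY$ coefficient is then maximised at $z=Z$. There $\phi(Z)\approx\int_0^Z(1-F)+Z/L\le Z\cdot(\text{something})$, which we balance to produce $M=1+2(e^L-1)/L$.

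If cancellation of the $X$-terms is awkward for large $z$, we will instead bound $X\le Y\,Z$ on the region where $F<1$. Above that region we are tracking $Y$ at no cost beyond $dY$.

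\emph{Consistency.} Here we take $\Lambda_X=X\psi(Y/X)$ and want cost $\le(1+\eps)\,dX-d\Lambda_X+\text{(boundary term)}$. The switching term $(1+z)F'|dz|/L$, integrated against the monotone change of $\log(1+z)$, has total mass controlled by $L^{-1}$ times the variation of $\log(1+z)$. The $F\,dY$ term is paid while $z$ is small, i.e.\ while $Y\gg X$.

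The key point is that $F(z)=\log(1+z)/L$ is small for moderate $z$. The total excess over $dX$ is bounded by about $\frac{1}{L}\int(\ldots)$. Using $\cost(Y)\le c\,\cost(X)$ at the end, the final ratio $z_{\mathrm{end}}\ge1/c$, so the final $F$-weight and the terminal boundary term are bounded by $\kappa(c)/L$. The $2/L$ comes from the two switching directions, summed over the path.

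This accounting, with $L=(2+\kappa(c))/\eps$, gives the excess $\le\eps\cost(X)$. Concretely, the plan is a potential $\Lambda_X$ whose derivative absorbs $F\,dY-F\,dX$, with value at the terminal state controlled by $c$. The quantity $(c+1)\log(1+1/c)$ should arise as $\int$ of $\log$-type terms at $z=1/c$, and $2\log c$ from the boundary.

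\emph{Main obstacle.} The hardest part is the consistency potential: finding an explicit $\Lambda_X$ that is nonnegative along all paths, not just terminal ones, and yields exactly the constant $\kappa(c)$. I would first guess $\psi$ by solving the ODE that makes the worst case tight at $z=1/c$. I would then verify the inequality pointwise separately on $z<1/c$, $1/c\le z<Z$ and $z\ge Z$.
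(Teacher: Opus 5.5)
Your overall architecture matches the paper's. You reduce the interpolation integral to $a(z)\,dX+b(z)\,dY$, with $a=1-F+(1+z)F'$ and $b=F+z(1+z)F'$. You then use homogeneous potentials $Y\phi(X/Y)$: one cancels the $dX$ coefficient (comparison with $Y$), the other cancels the $dY$ coefficient (comparison with $X$).

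The robustness step, however, has a sign error. Under your convention $\cost\le M\,dY-d\Lambda_Y$, the $dX$ condition is $\phi'\le-a$, not $\phi'\ge a$. So the choice $\phi'=1-F+1/L$ with $\phi(0)=0$ doubles the $dX$ term instead of cancelling it. The slope $1/L$ above $Z$ is also wrong. There $a=0$, and any nonzero slope leaves a term proportional to $X$ (either per step or at termination) that cannot be charged to $\cost(Y)$. The paper takes $\chi(z)=\int_z^Z a(s)\,ds$ below the cap and $\chi=0$ above it. This leaves the $dY$ coefficient $g=b+\chi+za$. Your claim that this is maximised at $Z$ follows from $g'=2/(L(1+z))>0$, and $g(Z)=b(Z)+Za(Z)=1+2Z/L=M$. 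This half is a repairable slip.

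The genuine gap is the comparison with $X$. You never construct $\Lambda_X$, and the target you set for it, nonnegativity along all paths, cannot be met. The integral you are bounding is not at most $(1+\eps)X$ on every path: with $X=1$ fixed and $Y\to\infty$, the $F\,dY$ term alone contributes about $(\log Y)/L$. So the potential must go negative somewhere. The hypothesis $\cost(Y)\le c\,\cost(X)$ can enter only at termination, in the form $z_{\mathrm{end}}\ge1/c$.

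The missing step is to solve $\phi-z\phi'=-b$ exactly, i.e.\ $(\phi/z)'=b/z^2$. This gives
\[
\phi(z)=\frac{2z\log z-(z+1)\log(1+z)+Cz}{L},\qquad
a+\phi'=1+\frac{2+C+2\log\bigl(z/(1+z)\bigr)}{L}\le1+\frac{2+C}{L}.
\]
Write $\phi(z)/z=(h(z)+C)/L$ with $h(z)=2\log z-(1+1/z)\log(1+z)$, which is increasing. The choice $C=\kappa(c)=-h(1/c)$ then makes $\phi\ge0$ precisely when $z\ge1/c$. Above the cap one extends $\phi$ linearly, and the slope is at most $1+\kappa(c)/L$ because $\log Z<L$.

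Your heuristic split (``$2/L$ from switching, $\kappa(c)/L$ from the terminal boundary'') is in fact the $C=0$ form of this argument. There the per-step coefficient is at most $1+2/L$, and $-Y\phi(z_{\mathrm{end}})\le\kappa(c)X/L$ whenever $z_{\mathrm{end}}\ge1/c$. It works precisely because the potential is allowed to be negative along the path, which contradicts the plan in your ``main obstacle'' paragraph.
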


\begin{proof}[Proof sketch]
Using $(X+Y)|dz|\le(1+z)dX+z(1+z)dY$, the integrand in
\eqref{eq:interpolation} is at most $a(z)dX+b(z)dY$, where
\[
a(z)=1-F(z)+(1+z)F'(z),\qquad
b(z)=F(z)+z(1+z)F'(z).
\]
The proof uses two potential functions. For the comparison with $X$, define
below the cap
\[
\phi(z)=\frac{2z\log z-(z+1)\log(1+z)+\kappa(c)z}{L}
\]
and extend it linearly above $Z$.  The potential $V(X,Y)=Y\phi(X/Y)$ cancels the $dY$ term and
gives
\[
d(\text{cost})+dV\le\left(1+\frac{2+\kappa(c)}L\right)dX.
\]
The choice of $\kappa(c)$ makes $V$ nonnegative whenever $Y\le cX$.  For the unconditional
comparison to $Y$, set $\chi(z)=\int_z^Za(s)\,ds$ below the cap and $0$ above it.  Then
$W(X,Y)=Y\chi(X/Y)$ cancels the $dX$ term and gives
\[
d(\text{cost})+dW\le\left(1+\frac{2(e^L-1)}L\right)dY.
\]
The full derivation, including zero-cost prefixes and the discrete implementation with a switch
before and after service,
appears in Appendix~\ref{app:comparison}.
\end{proof}

The next choice of $F$ gives the constant in the exponential dependence on $1/\eps$. Its full
proof is in Appendix~\ref{app:shifted}.

\begin{theorem}[comparison using a shifted logarithm]\label{thm:shifted-comparison}
Let $c\ge1$ and $\eps\in(0,1]$, and define
\[
K=2+2\eps\log\frac c\eps+(c+2\eps-1)\log\left(1+\frac\eps c\right),
\quad L=\frac K\eps,\quad Z=\frac{e^L-1}{\eps},
\]
and
\[
\widehat M(c,\eps)=1+\frac{2\eps Z(1+Z)}{Le^L}.
\]
For every pair $(X,Y)$ satisfying Definition~\ref{def:switching}, there is a randomised online algorithm satisfying
$\mathbb E[\cost]\le\widehat M(c,\eps)\cost(Y)$ on every fixed input and
$\mathbb E[\cost]\le(1+\eps)\cost(X)$ whenever $\cost(Y)\le c\cost(X)$.  For every fixed
$c\ge1$,
\[
\log\widehat M(c,\eps)=\frac2\eps+O_c(\log(1/\eps))\qquad(\eps\downarrow0).
\]
The online mixing rule is the preceding Bernoulli construction with a switch before and after
service, and with
$F(z)=\min\{1,\log(1+\eps z)/L\}$.
\end{theorem}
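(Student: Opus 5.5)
We follow the template of Theorem~\ref{thm:comparison} and change only the mixing function, now $F(z)=\min\{1,\log(1+\eps z)/L\}$ with cap $Z=(e^L-1)/\eps$. By Definition~\ref{def:switching} and the two-switch Bernoulli implementation, the expected amortised cost of the combined algorithm is bounded by the line integral \eqref{eq:interpolation}. Using $(X+Y)|dz|\le(1+z)dX+z(1+z)dY$ as before, the integrand is at most $a(z)\,dX+b(z)\,dY$ with
\[
a(z)=1-F(z)+(1+z)F'(z),\qquad b(z)=F(z)+z(1+z)F'(z),\qquad F'(z)=\frac{\eps}{L(1+\eps z)} \quad (z<Z).
\]
Both guarantees then follow from potentials of the form $Y\psi(X/Y)$. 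For such a potential,
\[
d\bigl(Y\psi(X/Y)\bigr)=\psi'(z)\,dX+\bigl(\psi(z)-z\psi'(z)\bigr)dY .
\]
This identity reduces each guarantee to a first-order ODE for $\psi$.

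\emph{Comparison with $X$.} We choose $\phi$ so that the $dY$ coefficient vanishes: $b+\phi-z\phi'=0$ below the cap. This is a linear ODE, solved by $\phi(z)=z\int^z b(s)s^{-2}\,ds$. The $dX$ coefficient then becomes $a+\phi'$, and we show it is at most $1+\eps$. Below the cap a direct computation gives $a+\phi'=1+F'(z)(1+z)+\int b/s^2-b/z$. With $F=\log(1+\eps z)/L$, the shift by $\eps$ is designed so that this bound equals $1+K'/L=1+\eps$ up to terms absorbed into $K$. Above the cap, $F=1$ and $F'=0$, so we extend $\phi$ linearly. The integration constant is fixed so that $V=Y\phi(X/Y)\ge0$ whenever $z\ge1/c$, that is, whenever $Y\le cX$. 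The minimum of $\phi$ over $z\ge1/c$ is attained at $z=1/c$. This produces the terms $2\eps\log(c/\eps)$ and $(c+2\eps-1)\log(1+\eps/c)$ in $K$, and checking that this choice gives exactly $K$ is a routine calculation. Zero-cost prefixes (with $Y=0$) are handled as in Appendix~\ref{app:comparison}: there we start in mode $X$, and the potential is $0$.

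\emph{Comparison with $Y$.} We set $\chi(z)=\int_z^Z a(s)\,ds$ below the cap and $\chi=0$ above it. Then $W=Y\chi(X/Y)\ge0$ cancels the $dX$ term, and the $dY$ coefficient equals
\[
b+\chi-z\chi' = b+\chi+za .
\]
We bound its supremum over $z\le Z$. Since $b$ grows with $z$ and $\chi$ is small near the cap, the supremum is attained at $z\to Z$. There $F=1$, and the coefficient is about $1+Z(1+Z)F'(Z)\cdot 2=1+2\eps Z(1+Z)/(L(1+\eps Z))$. Because $1+\eps Z=e^L$, this equals $\widehat M$. The main obstacle is verifying monotonicity, so that the maximum really sits at $Z$. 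We would first differentiate $b+\chi+za$ and show the derivative is nonnegative using $1+\eps z\ge1$; this bounds the coefficient by its value at $Z$.

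\emph{Asymptotics.} For fixed $c$ we have $K=2+O(\eps\log(1/\eps))$, so $L=2/\eps+O(\log(1/\eps))$. Also $Z\approx e^L/\eps$, which gives
\[
\widehat M\approx 2Z/L\approx 2e^L/(\eps L)=e^{L+O(\log(1/\eps))}.
\]
Hence $\log\widehat M=2/\eps+O_c(\log(1/\eps))$.
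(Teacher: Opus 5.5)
Your architecture matches the paper's. You use the same $F$, the bound $a\,dX+b\,dY$ from the interpolation lemma, $V=Y\phi(X/Y)$ with $\phi(z)=z\int_{1/c}^z b(s)s^{-2}\,ds$, and $W=Y\chi(X/Y)$ with $\chi(z)=\int_z^Z a(s)\,ds$. The $Y$-comparison and the asymptotics work as you sketch them. The derivative you plan to compute is $g'=b'+za'=2(1+z)(zF')'=2\eps(1+z)/\bigl(L(1+\eps z)^2\bigr)>0$, and above the cap the coefficient is $1$. The gap is in the $X$-comparison. This is the only place where the particular $K$ enters, and you defer it as a routine calculation. Moreover, the one formula you display there is wrong. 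Since $\phi'(z)=\int_{1/c}^z b(s)s^{-2}\,ds+b(z)/z$, one has
\[
a+\phi'=1-F+\frac Fz+2(1+z)F'+\int_{1/c}^z\frac{b(s)}{s^2}\,ds,
\]
not $1+(1+z)F'+\int b/s^2-b/z$.

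What must be shown is $a+\phi'\le1+K/L$ on $(0,Z)$. Evaluating the integral in closed form gives $a+\phi'=1+G(z)/L$ with
\[
G(z)=2\eps\log\frac{z}{1+\eps z}+\frac{2\eps(1+z)}{1+\eps z}-I(1/c),
\qquad
I(z)=-\frac{\log(1+\eps z)}{z}+2\eps\log z+(1-2\eps)\log(1+\eps z).
\]
The argument closes only with two facts: $G'(z)=2\eps(1+z)/\bigl(z(1+\eps z)^2\bigr)>0$, and $\lim_{z\to\infty}G(z)=K$. So the supremum is a limit at infinity. The terms $2+2\eps\log(1/\eps)$ in $K$ come from $z\to\infty$. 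The integration constant at $1/c$ contributes only $-I(1/c)=2\eps\log c+(c+2\eps-1)\log(1+\eps/c)$, so not all of $K$ comes from it as you suggest. Without the monotonicity of $G$, the phrase ``equals $1+\eps$ up to terms absorbed into $K$'' bounds nothing.

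Above the cap you also owe a check that the slope $m=(\phi(Z)+1)/Z$ of the linear extension is at most $1+\eps$. This does not follow from $F=1$ alone. Using $1+\eps Z=e^L$, one gets $L+K-Lm=2-2\eps\log(1-e^{-L})>0$. You also need $Z\ge1/c$, which follows from $K>2$, to get $\phi(Z)\ge0$.

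Finally, on the axis $Y=0<X$ the rule tracks $Y$, because $F(\infty)=1$, and there $V=mX\ne0$. So ``start in mode $X$ with potential $0$'' is correct only while both reference costs vanish. After that, the paper switches for free to marginal $F(x/y)$ on the first request with positive reference cost.
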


A different choice of $F$ gives simultaneous comparisons without an assumption relating the
terminal costs of $X$ and $Y$. The proof is in Appendix~\ref{app:truncated}.

\begin{theorem}[comparison using a truncated logarithm]\label{thm:truncated-comparison}
Let $\eps\in(0,1]$ and $r>0$, and define
\[
L=\frac{2(1+1/r)}\eps,
\qquad
Z=re^L.
\]
For every pair $(X,Y)$ satisfying Definition~\ref{def:switching}, there is a randomised online algorithm satisfying, on
every fixed input,
\[
\mathbb E[\cost]\le(1+\eps)\cost(X)
\]
and
\[
\mathbb E[\cost]\le M_r(\eps)\cost(Y),
\qquad
M_r(\eps)=1+\frac{2(1+Z)}L.
\]
Both comparisons are strictly multiplicative, including when one reference has zero cost.
\end{theorem}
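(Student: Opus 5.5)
We follow the template of the proof of Theorem~\ref{thm:comparison}: use the same Bernoulli interpolation with $F(z)=\min\{1,\log(1+z/r)/L\}$ (a logarithm shifted by the scale $r$ and truncated at $Z=re^L$, where $F$ reaches $1$), and build two potentials of the form $Y\phi(X/Y)$. By~\eqref{eq:interpolation} and $(X+Y)|dz|\le(1+z)dX+z(1+z)dY$, the amortised cost is at most $a(z)\,dX+b(z)\,dY$, where
\[
a(z)=1-F(z)+(1+z)F'(z),\qquad b(z)=F(z)+z(1+z)F'(z).
\]
Below the cap we have $F'(z)=1/(L(r+z))$, so $(1+z)F'(z)\le(1+1/r)/L$ and $z(1+z)F'(z)\le(1+z)/L$.

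\emph{Comparison with $X$.} We take $V=Y\phi(X/Y)$ with $\phi$ chosen so that $\phi-z\phi'=-b$, which makes the $dY$ terms cancel. The $dX$ coefficient is then $a+\phi'$. We need $\phi\ge0$ for all $z\ge0$, since no relation between the terminal costs is assumed. We aim for $a+\phi'\le1+\eps$.

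The ODE gives $\phi(z)=z\bigl(C+\int^z b(s)/s^2\,ds\bigr)$. Since $b(s)/s^2\approx F(s)/s^2+(1+s)/(sL(r+s))$, the natural choice is $\phi(z)=\frac{z}{L}\int_0^{z}\frac{ds}{s(\cdot)}\dots$, tuned so that $\phi(0)=0$ and $\phi'$ stays bounded. Above the cap, $F\equiv1$ and $F'=0$, so $a=0$ and $b=1$; the linear extension $\phi=z\phi'(Z)-1$ works provided $\phi(Z)\ge0$.

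The key quantity is $\phi'=\int_0^z\bigl(\dots\bigr)$, and the aim is to show $a+\phi'\le 1+2(1+1/r)/L=1+\eps$. This is exactly where the choice $L=2(1+1/r)/\eps$ enters: one factor $(1+1/r)/L$ comes from the switching term in $a$, and one from $\phi'$. The truncation at $z=0$ (through the shift by $r$) keeps $\phi'$ finite at zero, which is what makes the bound strictly multiplicative even when $Y$ has zero cost.

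\emph{Comparison with $Y$.} We set $W=Y\chi(X/Y)$ with $\chi(z)=\int_z^Z a(s)\,ds$ below the cap and $0$ above it. Then $\chi'=-a$ cancels the $dX$ term, and the $dY$ coefficient is
\[
b+\chi-z\chi'=b+za+\chi.
\]
Bounding this with $F\le1$, $za\le z+z(1+z)F'$ and $\chi\le\int_0^Z a$, we expect the supremum to be attained at $z=Z$, giving $1+2(1+Z)/L$.

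\emph{Boundary cases.} Zero-cost prefixes (where $X/Y$ is undefined) need separate treatment, as in Appendix~\ref{app:comparison}. We would set $q=0$ while $X=Y=0$, and check the two degenerate regimes $Y=0<X$ and $X=0<Y$ directly. The shift by $r$ makes $F(0)=0$ and $F'(0)$ finite, which is what should make these cases work. The discrete implementation is then inherited verbatim.

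The main obstacle is the first potential: we must find $\phi$ that is nonnegative for every $z$ (no $c$-hypothesis is available to absorb a negative part) while keeping $a+\phi'\le1+\eps$ uniformly, including across the cap. If the exact ODE solution fails to be nonnegative, we would instead take $\phi'$ equal to $(1+1/r)/L$ times an explicit majorant and verify the resulting differential inequality rather than an equality.
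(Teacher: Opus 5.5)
\textbf{There is a genuine gap.} It sits exactly at the obstacle you flag, and it comes from your choice of $F$. With $F(z)=\min\{1,\log(1+z/r)/L\}$, the bound $\mathbb E[\cost]\le(1+\eps)\cost(X)$ is \emph{false} without a hypothesis relating $X$ and $Y$.

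Since $F(0)=0$ but $F'(0)=1/(Lr)>0$, you get $b(s)\sim 2s/(Lr)$ as $s\downarrow0$, so $\int_0 b(s)s^{-2}\,ds$ diverges. Pure $dY$ increments occur, so any potential must satisfy $b+\phi-z\phi'\le0$. This holds whether you use the exact ODE solution or your fallback differential inequality. The condition is equivalent to $(\phi/z)'\ge b/z^2$, which forces $\phi(z)/z\to-\infty$ as $z\downarrow0$. Hence $\phi<0$ near the origin and $\phi'\to-\infty$, contrary to your claim that the shift keeps $\phi'$ finite.

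This is not an artefact of the method. Fix $X$ at a small value $\delta>0$ and let $Y$ grow by $1$ per request. For example, take a two-point MTS in which $X$ pays $\delta$ on the first task and then sits in a state that later tasks do not charge, while $Y$ moves to the other state and pays $1$ per task. Your algorithm tracks $Y$ with probability about $\delta/(LrY)$. Its service cost alone is therefore about $(\delta/(Lr))\log T$ after $T$ requests, which is unbounded relative to $\cost(X)=\delta$.

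A shifted logarithm is what Theorem~\ref{thm:shifted-comparison} uses, and it needs $Y\le cX$ (the lower limit $z_0=1/c$ in its $\phi$) for precisely this reason. Your cap is also inconsistent with the statement: $\log(1+z/r)=L$ at $z=r(e^L-1)$, not at $re^L$.

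\textbf{The missing idea is a dead zone.} The paper takes $F_r(z)=0$ for $z\le r$, $F_r(z)=\log(z/r)/L$ on $(r,Z)$ with $Z=re^L$, and $F_r=1$ above $Z$. The algorithm thus tracks $X$ deterministically while $X\le rY$. Then:
\begin{itemize}
\item $(a,b)=(1,0)$ on $[0,r)$, and $\phi(z)=z\int_0^z b(s)s^{-2}\,ds$ is finite, vanishes below $r$, and is nonnegative everywhere.
\item With $f=F_r'=1/(Lz)$ on $(r,Z)$, one gets $(a+\phi')'=a'+b'/z=2(1+z)(f'+f/z)=0$. So $a+\phi'$ equals its value at $r^+$, which is $1+2(1+1/r)/L=1+\eps$.
\item This is where your two factors $(1+1/r)/L$ really come from: one from $a(r^+)$ and one from $\phi'(r^+)=b(r^+)/r$.
\item Above $Z$, the linear extension has slope $1+\tfrac2L\bigl(\tfrac1r-\tfrac1Z\bigr)<1+\eps$.
\end{itemize}
Your $Y$-side plan then goes through with $F_r$. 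The derivative $g'=b'+za'=2(1+z)(zf)'$ vanishes on each piece, and $g$ is smaller below $r$. On $(r,Z)$ it equals $g(Z^-)=1+2(1+Z)/L$, which is exactly $M_r(\eps)$ with $Z=re^L$. Since $F_r\equiv0$ near the origin, the regime $X\ll Y$ (including $X=0<Y$) needs no separate treatment.
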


Taking $r=2/\eps$ gives
\begin{equation}\label{eq:M0}
M_0(\eps)
=1+\frac{2\eps}{2+\eps}
 +\frac4{2+\eps}e^{2/\eps+1}
=(2e+o(1))e^{2/\eps}.
\end{equation}

\begin{corollary}[comparison with a competitive baseline]\label{cor:baseline}
Suppose a prediction-augmented online problem has a deterministic algorithm $X$ with $X=\OPT$ on
correct inputs and a deterministic strict $\rho$-competitive algorithm $Y$. If $(X,Y)$ satisfies
Definition~\ref{def:switching}, then for every $\eps\in(0,1]$ there is a randomised $(1+\eps)$-consistent,
$M_0(\eps)\rho$-robust algorithm.
\end{corollary}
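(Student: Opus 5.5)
This follows directly from Theorem~\ref{thm:truncated-comparison} with $r=2/\eps$; the hypotheses on $X$ and $Y$ supply the rest. Since $X$ and $Y$ are deterministic and $(X,Y)$ satisfies Definition~\ref{def:switching}, Theorem~\ref{thm:truncated-comparison} applies. For every $\eps\in(0,1]$ it gives a randomised online algorithm such that, on every fixed input,
\[
\mathbb E[\cost]\le(1+\eps)\cost(X)
\qquad\text{and}\qquad
\mathbb E[\cost]\le M_r(\eps)\cost(Y).
\]
Both inequalities hold unconditionally, with no assumption relating $\cost(X)$ to $\cost(Y)$. This is why I use Theorem~\ref{thm:truncated-comparison} rather than Theorem~\ref{thm:comparison}: the latter gives $(1+\eps)$ only under $\cost(Y)\le c\,\cost(X)$.

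\emph{Consistency.} On a correct input, the hypothesis gives $\cost(X)=\OPT$. The first inequality then yields $\mathbb E[\cost]\le(1+\eps)\OPT$.

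\emph{Robustness.} Strict $\rho$-competitiveness of $Y$ means $\cost(Y)\le\rho\,\OPT$ on every input, with no additive term. Combining this with the second inequality gives $\mathbb E[\cost]\le M_r(\eps)\rho\,\OPT$. Strictness matters here: an additive constant in the guarantee for $Y$ would survive multiplication by $M_r$ and break the multiplicative robustness claim.

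\emph{Identifying the constant.} It remains to check that $r=2/\eps$ gives $M_r(\eps)=M_0(\eps)$ as in \eqref{eq:M0}. With $L=2(1+\eps/2)/\eps=2/\eps+1=(2+\eps)/\eps$ and $Z=(2/\eps)e^{L}$,
\[
M_r=1+\frac{2(1+Z)}{L}=1+\frac{2\eps}{2+\eps}+\frac{2\eps}{2+\eps}\cdot\frac2\eps\,e^{2/\eps+1}
=1+\frac{2\eps}{2+\eps}+\frac{4}{2+\eps}e^{2/\eps+1},
\]
which is \eqref{eq:M0}. As $\eps\downarrow0$ this is $(2e+o(1))e^{2/\eps}$.

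\emph{Zero-cost case.} If $\OPT=0$, then $\cost(Y)=0$ by strictness. Theorem~\ref{thm:truncated-comparison} is stated to remain multiplicative even when one reference has zero cost, so the combined algorithm also pays $0$ in expectation. This edge case is the only potential subtlety, and the cited theorem explicitly covers it, so I expect no real obstacle.
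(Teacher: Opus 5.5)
Your proposal is correct and matches the paper's proof. Apply Theorem~\ref{thm:truncated-comparison} with $r=2/\eps$, then use the comparison with $X$ for consistency on correct inputs and the comparison with $Y$, together with strict $\rho$-competitiveness, for robustness; your explicit check that this choice of $r$ reproduces \eqref{eq:M0} is accurate and adds detail the paper leaves implicit.
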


\begin{proof}
Apply Theorem~\ref{thm:truncated-comparison} with $r=2/\eps$: comparison with $X$ proves consistency
on correct inputs, and comparison with $Y$ proves robustness on every input.
\end{proof}

\paragraph{Metric matching.}
Let $F_t^X,F_t^Y$ be the two residual labelled-server multisets and let $D_t$ be their minimum
matching distance. Pairing the servers consumed by $X$ and $Y$ on each request, and then taking
complements inside the common labelled server multiset, gives
\begin{equation}\label{eq:residual-distance}
D_t\le X_t+Y_t.
\end{equation}
An actual residual set can track either reference through a bijection: serving the server paired
with the reference server makes actual service plus the change in tracking potential at most the
reference service, while composing with a minimum matching between $F_t^X$ and $F_t^Y$ changes modes for
at most $D_t$. Thus every pair of deterministic online metric matching algorithms satisfies
Definition~\ref{def:switching} in every metric. Appendix~\ref{app:matching-switching} gives the full labelled-multiset proof.

Let $\mathrm{Pred}$ consume the predicted server whenever it is free, with an arbitrary fixed
fallback otherwise, so that it equals $\OPT$ on correct inputs; let $B_k$ be the deterministic
$\beta_k$-competitive algorithm on the real line from~\cite{Rag18}, where $\beta_k=O(\log k)$.

\begin{corollary}[uniform randomised upper bound]\label{cor:randfixedband}
For every $\eps\in(0,1]$ and every $k\ge2$,
\[
R_k^{\mathrm{rand},\R}(1+\eps)
\le
M_0(\eps)\beta_k
=e^{2/\eps}O(\log k).
\]
Together with Theorem~\ref{thm:robustness},
\[
R_k^{\mathrm{rand},\R}(1+\eps)
\le
\min\left\{
\frac{6^{k-1}}{\eps^{k-1}},
M_0(\eps)\beta_k
\right\}.
\]
\end{corollary}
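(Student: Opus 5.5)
The plan is to obtain the first inequality from Corollary~\ref{cor:baseline}, with $X=\mathrm{Pred}$ and $Y=B_k$ on the real line. The second inequality then follows by taking the better of two algorithms.

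\emph{Step 1: the prediction-following reference.} By construction, $\mathrm{Pred}$ consumes $p_t$ whenever it is free. On a correct input the predicted labels are pairwise distinct, so $p_t$ is always still free when $r_t$ arrives. Hence $\mathrm{Pred}$ plays $M^*$ and its cost equals $\OPT$, exactly as in Observation~\ref{obs:ftp}. The fallback rule is never invoked on correct inputs, and on other inputs it only has to be deterministic and online. So $X=\mathrm{Pred}$ is a deterministic online algorithm with $X=\OPT$ on correct inputs.

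\emph{Step 2: the competitive reference.} The algorithm $B_k$ of~\cite{Rag18} ignores predictions and is deterministic and $\beta_k$-competitive on the line, with $\beta_k=O(\log k)$. I must check that its guarantee is strict, meaning it has no additive constant. Since scaling the instance scales every cost, any additive term would scale away, so a scale-invariant guarantee $\cost(B_k)\le\beta_k\OPT$ holds. If needed, I would simply note that~\cite{Rag18} states the ratio multiplicatively. Thus $Y=B_k$ is strict $\rho$-competitive with $\rho=\beta_k$.

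\emph{Step 3: switching assumptions and combination.} The matching paragraph above, via~\eqref{eq:residual-distance} and Appendix~\ref{app:matching-switching}, shows that every pair of deterministic online metric matching algorithms satisfies Definition~\ref{def:switching}; in particular $(\mathrm{Pred},B_k)$ does. Corollary~\ref{cor:baseline} then gives, for every $\eps\in(0,1]$, a randomised algorithm that is $(1+\eps)$-consistent and $M_0(\eps)\beta_k$-robust. Hence $R_k^{\mathrm{rand},\R}(1+\eps)\le M_0(\eps)\beta_k$. By~\eqref{eq:M0}, $M_0(\eps)=(2e+o(1))e^{2/\eps}\le Ce^{2/\eps}$ uniformly for $\eps\le1$, and this gives the form $e^{2/\eps}O(\log k)$.

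\emph{Step 4: the minimum.} The deterministic algorithm $\Sd_\eps$ is $(1+\eps)$-consistent by Theorem~\ref{thm:consistency}. By Theorem~\ref{thm:robustness} it is $6^{k-1}/\eps^{k-1}$-robust for $\eps\le1$ in every metric, and in particular on $\R$. A deterministic algorithm is also a randomised one, so each of the two algorithms bounds the infimum, and therefore so does their minimum.

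The main obstacle is Step 3, specifically the validity of Definition~\ref{def:switching} for labelled multisets, where server locations may coincide and the fallback of $\mathrm{Pred}$ may be arbitrary. I take this from the appendix result as stated. Beyond that, I only need to confirm that strictness of $B_k$ is what Corollary~\ref{cor:baseline} requires, and that the robustness comparison with $Y$ holds on every input, including inputs with $\OPT=0$. In that case strictness forces $\cost(Y)=0$, and Theorem~\ref{thm:truncated-comparison} is strictly multiplicative, so the combined algorithm also has expected cost $0$.
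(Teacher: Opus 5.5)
Your proposal is correct and follows the paper's proof: apply Corollary~\ref{cor:baseline} with $X=\mathrm{Pred}$ and $Y=B_k$, with the switching assumptions supplied by Appendix~\ref{app:matching-switching}, and then take the minimum with the deterministic $\Sd_\eps$ from Theorems~\ref{thm:consistency} and~\ref{thm:robustness}. One small caveat: your scaling argument for the strictness of $B_k$ works only if the algorithm is scale-equivariant, which holds for Raghvendra's algorithm but not for an arbitrary one, so the cleaner justification is the multiplicative guarantee stated in~\cite{Rag18}, and that is what the paper relies on.
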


\begin{proof}
Apply Corollary~\ref{cor:baseline} with $X=\mathrm{Pred}$ and $Y=B_k$.
\end{proof}

Let $\rho_k^0$ denote the infimum strict competitive ratio of a prediction-free randomised
algorithm on the real line against an oblivious adversary.

\begin{corollary}[comparison with the prediction-free ratio]\label{cor:symbolicrho}
For every $k\ge2$ and $\eps\in(0,1]$,
\[
 \rho_k^0\le R_k^{\mathrm{rand},\R}(1+\eps)
 \le M_0(\eps)\rho_k^0.
\]
Together with Theorem~\ref{thm:randuniformlb}, the available lower bound in the uniform range is
$\max\{\rho_k^0,\exp((2-o(1))/\eps)\}$. Whether the upper product is necessary remains open.
\end{corollary}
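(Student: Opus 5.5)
The upper bound $R_k^{\mathrm{rand},\R}(1+\eps)\le M_0(\eps)\rho_k^0$ follows from a limiting version of Corollary~\ref{cor:baseline} with $Y$ replaced by a near-optimal randomised algorithm. The lower bound is Proposition~\ref{prop:floor} in randomised form.

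\emph{Lower bound.} Given any randomised $(1+\eps)$-consistent, $r$-robust algorithm, feed it the constant prediction $p_t\equiv s_1$. Robustness applies on every input, so the result is a prediction-free randomised algorithm with strict ratio $r$ against an oblivious adversary. Hence $r\ge\rho_k^0$, and taking the infimum gives $\rho_k^0\le R_k^{\mathrm{rand},\R}(1+\eps)$.

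\emph{Upper bound.} Fix $\delta>0$ and a randomised prediction-free algorithm $Y$ with strict ratio at most $\rho_k^0+\delta$. Definition~\ref{def:switching} concerns deterministic references, so we condition on $Y$'s random tape $\omega$.
\begin{itemize}
\item For each fixed $\omega$, $Y_\omega$ is a deterministic online matching algorithm.
\item By the paragraph on metric matching and Appendix~\ref{app:matching-switching}, the pair $(\mathrm{Pred},Y_\omega)$ satisfies the switching assumptions.
\item Apply Theorem~\ref{thm:truncated-comparison} with $r=2/\eps$ to this pair. This gives a randomised algorithm $A_\omega$ with $\mathbb E[\cost(A_\omega)]\le(1+\eps)\cost(\mathrm{Pred})$ and $\mathbb E[\cost(A_\omega)]\le M_0(\eps)\cost(Y_\omega)$ on every fixed input.
\item The combined algorithm first samples $\omega$ and then runs $A_\omega$, drawing independent Bernoulli couplings. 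It is online because $Y_\omega$ is simulated in parallel.
\item Both comparisons are pathwise in $\omega$ and strictly multiplicative, so averaging over $\omega$ is valid. On correct inputs $\cost(\mathrm{Pred})=\OPT$, which gives consistency $1+\eps$. On every input, averaging the second comparison gives $\mathbb E[\cost]\le M_0(\eps)\,\mathbb E_\omega[\cost(Y_\omega)]\le M_0(\eps)(\rho_k^0+\delta)\OPT$.
\end{itemize}
Letting $\delta\downarrow0$ gives $R_k^{\mathrm{rand},\R}(1+\eps)\le M_0(\eps)\rho_k^0$.

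\emph{Remaining claim.} The statement that the available lower bound in the uniform range is $\max\{\rho_k^0,\exp((2-o(1))/\eps)\}$ combines the lower bound above with Theorem~\ref{thm:randuniformlb}.

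\emph{Main obstacle.} The delicate step is measurability and the oblivious-adversary bookkeeping. The algorithm's random tape must be chosen independently of the input, and the expected-cost bound of Theorem~\ref{thm:truncated-comparison} must hold for each fixed tape before integrating. Definition~\ref{def:switching} explicitly requires uniform implementations and measurable switching kernels. Assuming the random tapes of $Y$ form a standard probability space, Fubini then justifies the averaging. If strictness fails when $\OPT=0$, Theorem~\ref{thm:truncated-comparison} covers it: zero cost of $Y_\omega$ forces zero expected cost.
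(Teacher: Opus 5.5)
Your proposal is correct and follows the paper's proof essentially step for step. The lower bound comes from fixing the prediction sequence; the upper bound conditions a near-optimal strict randomised baseline on its tape, applies Theorem~\ref{thm:truncated-comparison} with $r=2/\eps$ to $(\mathrm{Pred},Y_\omega)$ using fresh randomness, averages via the tower property, and lets the slack tend to zero. Your remarks on measurability and the $\OPT=0$ case are harmless additions that the paper leaves implicit.
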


\begin{proof}
The lower bound feeds a fixed prediction sequence to any prediction-augmented algorithm. For the
upper bound, take a strict randomised baseline $B$ with expected ratio at most $\rho_k^0+\eta$,
sample its tape $\omega$ before the input, and condition on $\omega$ to obtain a deterministic
baseline. Apply Theorem~\ref{thm:truncated-comparison} to $(\mathrm{Pred},B_\omega)$ using fresh
randomness: comparison with $\mathrm{Pred}$ proves consistency after averaging over $\omega$,
while on every fixed input the other comparison and the tower property give
\[
\mathbb E[\cost]
\le M_0(\eps)\mathbb E[\cost(B)]
\le M_0(\eps)(\rho_k^0+\eta)\OPT.
\]
Letting $\eta$ decrease to zero proves the upper bound.
\end{proof}

\subsection{Applications beyond metric matching}\label{sec:beyondmatching}

The switching assumptions also hold outside irrevocable matching.

\begin{lemma}[metrical task systems]\label{lem:mtscoupling}
Every pair of deterministic algorithms for the same metrical task system, starting at the same
state, satisfies Definition~\ref{def:switching}.
\end{lemma}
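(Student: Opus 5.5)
We verify Definition~\ref{def:switching} directly, taking the online state of the combined algorithm to be a point $z_t$ of the metric space and the tracking potential to be a distance. Let the two deterministic MTS algorithms $X,Y$ be at states $\xi_t,\upsilon_t$ after request $t$, all starting at the common initial state $s_0$. The implementation keeps its actual state $z_t$ and a mode $m\in\{X,Y\}$. We set $\Psi:=d(z,\xi)$ in mode $X$ and $\Psi:=d(z,\upsilon)$ in mode $Y$. Initially $z_0=\xi_0=\upsilon_0$, so $\Psi_0=0$. The serving rule is \emph{move to the tracked reference's new state}. This rule is uniform over inputs and mode schedules, and it is valid after any finite sequence of pre- and post-service switches, because a switch changes only $m$ and never $z$.

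\emph{Step (S).} Suppose request $t$ arrives with task vector $c_t$ and the combined algorithm is in mode $X$. The reference moves from $\xi_{t-1}$ to $\xi_t$ and pays $x_t=d(\xi_{t-1},\xi_t)+c_t(\xi_t)$. The actual algorithm moves from $z$ to $\xi_t$ and pays $d(z,\xi_t)+c_t(\xi_t)$, after which the new potential is $0$. Using $d(z,\xi_t)\le d(z,\xi_{t-1})+d(\xi_{t-1},\xi_t)$, the actual cost plus the potential change is
\[
d(z,\xi_t)+c_t(\xi_t)-d(z,\xi_{t-1})\le x_t.
\]
The case of mode $Y$ is symmetric. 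Service costs are evaluated at the state actually occupied when the task is processed, and the combined state coincides with the reference's state there, so the bound holds pathwise.

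\emph{Step (T).} A switch from mode $X$ to mode $Y$ after prefix $t$ changes $\Psi$ from $d(z,\xi_t)$ to $d(z,\upsilon_t)$, an increase of at most $d(\xi_t,\upsilon_t)=:D_t$ by the triangle inequality. We need $D_t\le X_t+Y_t$. Both trajectories start at $s_0$, so
\[
d(\xi_t,\upsilon_t)\le d(\xi_t,s_0)+d(s_0,\upsilon_t)\le \sum_{i\le t}d(\xi_{i-1},\xi_i)+\sum_{i\le t}d(\upsilon_{i-1},\upsilon_i)\le X_t+Y_t,
\]
since task costs are nonnegative. The same bound covers the reverse switch. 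The starting point and the reference trajectories are fixed independently of the random bits, so $D_t$ depends only on the prefix, and Bernoulli couplings of the mode remain measurable.

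\emph{Main obstacle.} The main point to settle is the precise MTS convention: whether the task cost is charged at the post-move state and whether cumulative costs include movement. The argument above needs only that each reference's cumulative cost dominates its total movement, which holds in either convention. If the convention charges service before movement, I would instead have the combined algorithm mirror the reference's service location, and the same triangle-inequality argument goes through.
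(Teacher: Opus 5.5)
Your proposal is correct and follows the paper's proof essentially step for step. You use the same tracking potential (the distance from the actual state to the tracked reference's state), the same rule of moving to the reference's new state with the triangle inequality for (S), and the same bound $d(\xi_t,\upsilon_t)\le X_t+Y_t$ through the common starting state for (T).
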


\begin{proof}
Let the two reference states after prefix $t$ be $u_t,v_t$. Since both trajectories start at the
same state,
\[
d(u_t,v_t)\le\sum_{i\le t}d(u_{i-1},u_i)+\sum_{i\le t}d(v_{i-1},v_i)\le X_t+Y_t.
\]
If the actual state is $a$ and it tracks a reference at $u$, use $d(a,u)$ as the potential; when the
reference moves to $u'$ for task $\ell_t$, the actual algorithm also moves to $u'$, and
\[
d(a,u')+\ell_t(u')-d(a,u)\le d(u,u')+\ell_t(u').
\]
Changing the tracked reference from $u$ to $v$ increases the potential by at most $d(u,v)$.
\end{proof}

For MTS, Christianson, Shen and Wierman~\cite{CSW23} already give DART, a strict multiplicative,
diameter-free asymmetric combiner with robustness $2^{O(1/\eps)}$ relative to the competitive
reference; their published analysis gives a constant at most $4$ when the bound is written as
$\exp((C+o(1))/\eps)$. Lemma~\ref{lem:mtscoupling} verifies Definition~\ref{def:switching}
for MTS, and the lower and upper bounds under those assumptions give $C=2$. A classical
baseline satisfying only $\cost\le\alpha\OPT+b$ yields
$M_0(\eps)\alpha\OPT+M_0(\eps)b$, including the additive term, whereas the metric-matching
applications remain strict because they use strict finite-horizon baselines.

Thus, when $k$ is large and $\eps$ is fixed,
\[
\max\left\{
e^{(2-o(1))/\eps},\,\Omega(\sqrt{\log k})
\right\}
\le R_k^{\mathrm{rand},\R}(1+\eps)
\le M_0(\eps)O(\log k).
\]
The dependence on $\eps$ remains exponential, while the dependence on the number of servers is at
most logarithmic apart from the factor $M_0(\eps)$.

\section{Discussion and open problems}\label{sec:discussion}

\paragraph{Summary of bounds.}
For every fixed $k$, Corollary~\ref{cor:bracket} determines the deterministic
near-perfect-consistency trade-off as $\Theta_k(1/\eps^{k-1})$, and
Theorem~\ref{thm:randk} gives the same exponent for randomised algorithms. At fixed positive
$\eps$, Corollary~\ref{cor:randfixedband} caps randomised robustness at
$M_0(\eps)O(\log k)$, while Theorem~\ref{thm:randuniformlb} forces
$e^{(2-o(1))/\eps}$ once $k=\Theta(1/\eps)$, fixing the constant in the exponent at $2$ on this
scale apart from the separate prediction-free dependence on $k$. The least factor obtainable under
Definition~\ref{def:switching} has the same constant for every fixed $c>1$. Both deterministic
values are exact for $k=2$, and on the real line the complete expression $1+4/\eps+8/\eps^2$ for
$k=3$ is exact when $0<\eps\le\sqrt{13}-3$ and conjecturally throughout $0<\eps\le1$
(Theorem~\ref{thm:k3exact}, Conjecture~\ref{conj:k3}). Corollary~\ref{cor:separation} gives
exponential growth in $k$ for every fixed consistency factor
below $3$ and a lower bound $2k-1$ at factor $3$.

\paragraph{Open problems.}
\begin{itemize}\itemsep2pt
\item \emph{Constants.} At $k=2$ both deterministic values are known exactly, and on the real line the
leading $1/\eps^2$ coefficient at $k=3$ is $8$. Is the lower-bound value $L_k(\eps)$ exactly
optimal at finite $\eps$ or for larger $k$?
\item \emph{The three-server value above $\sqrt{13}-3$.} Theorem~\ref{thm:k3exact} determines
$R_3^{\R}(1+\eps)$ for $0<\eps\le\sqrt{13}-3$, and Conjecture~\ref{conj:k3} asserts the same value
throughout $0<\eps\le1$. By Remark~\ref{rem:k3open} what is missing is the residual branch of the
second-request analysis under the offline request--server assignments other than the one that
Proposition~\ref{prop:k3residual} settles.
\item \emph{Consistency factor three.} Is polynomial or subexponential robustness attainable at
consistency factor $3$ on the real line?
\item \emph{The randomised joint dependence.} The constant in the exponential dependence on
$1/\eps$ is $2$. Is the correct joint dependence a maximum, a sum, or a product
of this term and the classical randomised complexity on the real line (itself only bounded between
$[\Omega(\sqrt{\log k}),\,O(\log k)]$)?
\item \emph{Factor under the switching assumptions.} Corollary~\ref{cor:optimal-rate}
determines the exponential rate for every fixed $c>1$. What is the exact finite-$\eps$
loss, and what is the crossover when $c-1=\Theta(\eps)$?
\item \emph{General metrics.} Theorems~\ref{thm:consistency} and~\ref{thm:robustness} hold in
every metric. The lower-bound construction uses the order on the real line. In general metrics the deterministic
prediction-free floor is $2k-1$~\cite{KP93}: for which metrics does
$(1+\eps)$-consistency force robustness $\omega(2k-1)$?
\end{itemize}

\begin{remark}[Scope of the switching interface]
The product in Corollary~\ref{cor:symbolicrho} follows from the information available to the
comparison theorem. Definition~\ref{def:switching} exposes the baseline $Y$ through its cumulative
cost, so the theorem gives $M(\eps)\cost(Y)$. Substituting
$\cost(Y)\le\rho_k^0\OPT$ then multiplies the two factors. A maximum or sum bound requires an
interface that also uses the reason that $Y$ is competitive. The lower-bound family in
Appendix~\ref{app:lb} uses $\Theta(1/\eps)$ servers and treats additional servers as zero-cost
padding. It therefore does not combine its inverse-slack lower bound with prediction-free hardness in
$k$. Resolving the joint dependence requires a construction in which both constraints act on the same
requests.
\end{remark}

\ifanon\else
\paragraph{Acknowledgements.}
The author thanks Yichen Huang for advising this project and for comments on an earlier draft.
\fi

\appendix

\section{Full proof of the randomised lower bound}\label{app:lb}

This appendix proves the three claims used in the proof sketch of
Theorem~\ref{thm:randuniformlb}: a cost decomposition on the real line, a reduction of an arbitrary
randomised algorithm to a mixture indexed by the first use of $s_0$, and greedy minimisation subject to the resulting
prefix constraints. Together they prove
Theorem~\ref{thm:randuniformlb} against \emph{every} randomised $(1+\eps)$-consistent algorithm.

The constant $2$ in $e^{(2-o(1))/\eps}$ follows from the reduction in
Theorem~\ref{thm:canon} and the optimisation in Theorems~\ref{thm:greedy}
and~\ref{thm:sharpexp}; by comparison, the fixed-$k$ event bound in
Theorem~\ref{thm:randk} uses a union bound and yields a constant of order $1/e$ in the exponent.
The reduction is therefore required to obtain the constant $2$.

\paragraph{The input family.}
Fix $0<x<a_1<\dots<a_n$, place servers $s_0=0$ and $s_i=a_i$ for $1\le i\le n$, and issue the
common prefix $(r_1,p_1)=(x,s_1)$ and $(r_t,p_t)=(a_{t-1},s_t)$ for $t=2,\dots,n$ on these
$k=n+1$ servers. Write $b_1=x$ and $b_i=a_{i-1}$ for $i\ge2$. For $1\le j\le n$, the
\emph{perfect stopping instance} $P_j$ follows the first $j$ prefix requests with a request at
$0$ and the exact hits $a_{j+1},\dots,a_n$; Lemma~\ref{lem:uncross} gives
$\OPT(P_j)=a_j-x$, while the \emph{terminal instance} $B$ follows the whole prefix and ends with
a request at $a_n$, with optimum $x$.

For $1\le i\le n$, define the deterministic realisation $H_i$ to follow the predictions at stages
$1,\dots,i-1$, match $b_i$ to $s_0$ at stage $i$, and use $s_{t-1}$ at each later prefix stage
$t>i$. It matches the request at $0$ in $P_j$, or the final request at $a_n$ in $B$, to the
remaining server; let $H_\infty$ follow the predictions throughout the prefix. Telescoping gives
\[
\cost(H_i,P_j)=\begin{cases}a_j-x,& j<i,\\[2pt] a_j-x+2b_i,& j\ge i,\end{cases}
\qquad
\cost(H_i,B)=2b_i-x,
\]
and $\cost(H_\infty,P_j)=a_j-x$, $\cost(H_\infty,B)=2a_n-x$.

\begin{lemma}[cost decomposition on the real line]\label{lem:linecost}
For any perfect matching of requests to servers on the real line, orient each matched edge from its
request $r$ to its server $s$ and set $\Delta=\sum_{\mathrm{edges}}(s-r)$ and
$N=\sum_{\mathrm{edges}}\max\{r-s,0\}$. Then $\cost=\Delta+2N$. Consequently, on $P_j$ every
matching containing the edge $b_i\to0$ costs at least $a_j-x+2b_i$; on $B$ every matching
containing $b_i\to0$ costs at least $2b_i-x$, and every matching containing $a_n\to0$ costs at
least $2a_n-x$. Each bound is attained by the corresponding algorithm $H_i$ or $H_\infty$.
\end{lemma}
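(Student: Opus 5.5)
The identity $\cost=\Delta+2N$ holds edge by edge, so I will prove it locally and then sum. For a single edge from $r$ to $s$, the length is $|s-r|$, and $|s-r|=(s-r)+2\max\{r-s,0\}$: if $s\ge r$ the correction term is $0$, and if $s<r$ the right side equals $(s-r)+2(r-s)=r-s$. Summing over the edges of the perfect matching gives $\cost=\Delta+2N$. Because the matching is perfect, $\Delta=\sum_{\text{servers}}s-\sum_{\text{requests}}r$, which depends only on the two point multisets and not on the matching. So on a fixed instance, minimising cost means minimising the total leftward displacement $N$.

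For the consequences I will compute $\Delta$ for each instance. On $P_j$ the requests are $x,a_1,\dots,a_{j-1},0,a_{j+1},\dots,a_n$ and the servers are $0,a_1,\dots,a_n$, so $\Delta=a_j-x$. On $B$ the requests are $x,a_1,\dots,a_{n-1},a_n$ and the servers are the same, so $\Delta=-x$. A matching that contains the edge $b_i\to0$ has leftward displacement $b_i$ on that edge alone, so $N\ge b_i$. This gives cost at least $a_j-x+2b_i$ on $P_j$ and at least $2b_i-x$ on $B$. In the same way, the edge $a_n\to0$ forces $N\ge a_n$, so the cost on $B$ is at least $2a_n-x$.

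For attainment, I will check that in $H_i$ every edge other than $b_i\to0$ points weakly rightward, so that $N=b_i$ exactly. The stages before $i$ follow the predictions: $x\to a_1$, and $a_{t-1}\to a_t$ for $t<i$. Each later stage $t>i$ sends $a_{t-1}\to s_{t-1}=a_{t-1}$, a zero-length edge. This is well defined because $s_{t-1}$ is still free: its predicted request was diverted to $0$ at stage $i$, and every later stage consumes the server at its own point.

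The remaining request takes the one free server. On $P_j$ with $j<i$, that is the request at $0$ matched to $0$ itself, and the cost is just $\Delta=a_j-x$. With $j\ge i$, the free server is $a_j$ and the request at $0$ goes to it rightward, so the cost is $a_j-x+2b_i$. On $B$ the remaining server is $a_n$ and the final request is $a_n$, so the cost is $2b_i-x$. For $H_\infty$, every prefix edge points rightward. The final request at $0$ in $P_j$ goes to the free server $a_j$, again rightward, giving cost $a_j-x$. On $B$ the final request $a_n$ must go to $0$, giving $N=a_n$ and cost $2a_n-x$.

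The main obstacle is bookkeeping rather than an inequality: I need to confirm which server is left free at the end of each realisation, since the attainment claims depend on it. The lower bounds follow immediately from $N\ge$ the displacement of the forced edge.
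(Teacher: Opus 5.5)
Your proof is correct and follows the paper's route: the paper writes $\cost=P+N$ and $\Delta=P-N$ with $P=\sum\max\{s-r,0\}$, which is your edge-by-edge identity $|s-r|=(s-r)+2\max\{r-s,0\}$ summed. It then uses the same values $\Delta=a_j-x$ and $\Delta=-x$, the leftward displacement of the forced edge, and the fact that $H_i$ and $H_\infty$ have no other leftward edge.

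There is one bookkeeping slip, in a case the lemma does not need. For $H_\infty$ on $P_j$, the server $a_j=s_j$ was consumed at stage $j$, so the request at $0$ goes to $s_0$ by a zero-length edge, not to $a_j$. That edge is what gives $N=0$ and hence cost $a_j-x$.
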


\begin{proof}
Let $P=\sum_{\mathrm{edges}}\max\{s-r,0\}$ and $N=\sum_{\mathrm{edges}}\max\{r-s,0\}$. Then
$\cost=P+N$, $\Delta=P-N$, and hence $\cost=\Delta+2N$, where the signed sum
$\Delta=\sum s-\sum r$ depends only on the two point multisets. On $P_j$ we have
$\Delta=a_j-x$, so a leftward edge $b_i\to0$ forces $N\ge b_i$ and gives
$\cost\ge a_j-x+2b_i$; on $B$ we have $\Delta=-x$, and the edges $b_i\to0$ and $a_n\to0$
force $N\ge b_i$ and $N\ge a_n$, respectively. Each corresponding algorithm has exactly the stated
leftward edge and no other leftward displacement, so equality holds.
\end{proof}

\begin{theorem}[reduction to first-use probabilities]\label{thm:canon}
Let $A$ be any randomised online matching algorithm on the family $\{P_1,\dots,P_n,B\}$, run under
a single common random tape. For $1\le i\le n$, let $q_i$ be the probability that $A$ first
consumes $s_0$ at prefix stage $i$, and let $q_\infty$ be the probability that it never consumes
$s_0$ during the prefix. Then, simultaneously for every $j$,
\[
\mathbb E[\cost(A,P_j)]\ge a_j-x+2\sum_{i\le j} q_i b_i,
\qquad
\mathbb E[\cost(A,B)]\ge\sum_{i} q_i(2b_i-x)+q_\infty(2a_n-x).
\]
The mixture that runs $H_i$ with probability $q_i$ and $H_\infty$ with probability
$q_\infty$ attains equality throughout. Hence every randomised algorithm is coordinatewise
dominated on this family by such a mixture.
\end{theorem}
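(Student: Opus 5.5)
The argument is pathwise: fix the random tape, so that $A$ becomes a deterministic online algorithm, prove the pointwise lower bounds for that realisation, and then average over the tape. The key structural point is that all instances $P_1,\dots,P_n,B$ share the common prefix. They first differ at the request issued right after prefix stage $j$ (for $P_j$), or after stage $n$ (for $B$). A deterministic online algorithm therefore makes identical choices on every instance up to the point where they diverge. So for a fixed tape $\omega$ the index $i(\omega)$, the first prefix stage at which $s_0$ is consumed, is well defined independently of which instance is being run. More precisely, on $P_j$ the realisation agrees with the prefix behaviour through stage $j$, so $s_0$ is consumed by stage $j$ on $P_j$ if and only if $i(\omega)\le j$. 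The $q_i$ and $q_\infty$ are the probabilities of the events $\{i(\omega)=i\}$ and $\{i(\omega)=\infty\}$, defined with respect to the single common tape.

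Next, the pointwise cost bounds follow from Lemma~\ref{lem:linecost}. On $P_j$ with $i(\omega)\le j$, the completed matching contains the edge $b_{i(\omega)}\to 0$, so its cost is at least $a_j-x+2b_{i(\omega)}$. With $i(\omega)>j$, the cost is still at least $\OPT(P_j)=a_j-x$, either trivially or via Lemma~\ref{lem:residual}. On $B$ with $i(\omega)=i<\infty$ the cost is at least $2b_i-x$. If $i(\omega)=\infty$, then after the $n$ prefix requests the only free server is $s_0$, since $n$ of the $k=n+1$ servers have been used. The final request $a_n$ is therefore forced onto $s_0$, and the lemma gives cost at least $2a_n-x$. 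Taking expectations over $\omega$ and splitting according to the value of $i(\omega)$ yields
\[
\mathbb E[\cost(A,P_j)]\ge a_j-x+2\sum_{i\le j}q_ib_i
\]
together with the stated bound on $B$.

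For attainment, I use the telescoped costs of $H_i$ and $H_\infty$ displayed just before the lemma. The mixture that runs $H_i$ with probability $q_i$ and $H_\infty$ with probability $q_\infty$ has expected cost equal to exactly these right-hand sides: on $P_j$ it is $\sum_{i>j}q_i(a_j-x)+\sum_{i\le j}q_i(a_j-x+2b_i)+q_\infty(a_j-x)$, and on $B$ it is analogous. Coordinatewise domination then follows immediately.

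I expect the main obstacle to be the first step: justifying rigorously that $i(\omega)$ is instance-independent, i.e.\ that one tape yields the same prefix decisions on all $n+1$ instances. This requires the oblivious-adversary convention that the tape is fixed before the input, and online causality, meaning each decision depends only on the revealed prefix and the tape. Once this is stated carefully, the remaining steps are the lemma applications and linearity of expectation.
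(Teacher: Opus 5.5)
Your proposal is correct and follows essentially the same route as the paper. You fix the tape, use the shared prefix and online causality to make the first-use stage of $s_0$ instance-independent, and apply Lemma~\ref{lem:linecost} pathwise (including the forced edge $a_n\to0$ when $s_0$ is never used in the prefix); you then average over the tape and verify equality from the telescoped costs of $H_i$ and $H_\infty$, and you are slightly more explicit than the paper in noting that the first-use events are the same sets of tapes across instances, not merely events of equal probability.
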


\begin{proof}
Fix a realisation of the common-prefix tape, and suppose first that $A$ consumes $s_0$ at stage $i$.
On every stopping instance $P_j$ with $j\ge i$, the history through stage $i$ agrees with that of
$B$, so the completed matching contains $b_i\to0$ and Lemma~\ref{lem:linecost} gives cost at least
$a_j-x+2b_i$; for $j<i$, the bound $\OPT(P_j)=a_j-x$ suffices, while on $B$ the same edge gives
cost at least $2b_i-x$. If $A$ never consumes $s_0$ in the prefix, its $n$ prefix requests consume
the servers $s_1,\dots,s_n$, forcing the final request to $s_0$, and the resulting matching contains
$a_n\to0$ and costs at least $2a_n-x$, while each $P_j$ still costs at least $a_j-x$. Since $P_j$
and $B$ share their first $j$ requests, the event ``$s_0$ is first consumed at stage $i$'' has the
same probability on both inputs whenever $i\le j$, which makes the conditioning consistent across
the family. Averaging over the tape gives the two displayed inequalities, and the cost formulas for
$H_i$ and $H_\infty$ give equality for the mixture.
\end{proof}

If $A$ is $(1+\eps)$-consistent, the inequalities for $P_j$ in Theorem~\ref{thm:canon} become the
prefix constraints $2\sum_{i\le j}q_ib_i\le\eps(a_j-x)$, and the cost of $A$ on $B$ is at least
$K(q):=\sum_i q_i(2b_i-x)+q_\infty(2a_n-x)$. Define the incremental capacities
\[
c_1=\frac{\eps(a_1-x)}{2b_1}=\frac{\eps(1-x)}{2x},
\qquad
c_i=\frac{\eps(a_i-a_{i-1})}{2b_i}=\frac{\eps(a_i-a_{i-1})}{2a_{i-1}}\ \ (i\ge2).
\]

\begin{lemma}[consistency constraints for the mixture]\label{lem:polytope}
The mixture is $(1+\eps)$-consistent on every $P_j$ if and only if $q_i\ge0$,
$\sum_iq_i\le1$, and $2\sum_{i\le j}q_ib_i\le\eps(a_j-x)$ for all $j$.
\end{lemma}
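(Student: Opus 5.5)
The statement concerns the mixture that runs $H_i$ with probability $q_i$ and $H_\infty$ with probability $q_\infty=1-\sum_iq_i$. Since each $P_j$ is a perfect instance with $\OPT(P_j)=a_j-x$ (Lemma~\ref{lem:uncross}), $(1+\eps)$-consistency on $P_j$ means exactly
\[
\mathbb E[\cost(\text{mixture},P_j)]\le(1+\eps)(a_j-x).
\]
So the plan is to compute the left side in closed form from the cost formulas for $H_i$ and $H_\infty$ displayed before Lemma~\ref{lem:linecost}, and compare.

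First, the conditions $q_i\ge0$ and $\sum_iq_i\le1$ are precisely the requirement that $(q_1,\dots,q_n,q_\infty)$ with $q_\infty=1-\sum_iq_i\ge0$ is a probability vector. Thus they are necessary and sufficient for the mixture to be well defined, and I will treat them as part of both directions.

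Next, fix $j$. By the telescoped cost formulas, $H_i$ pays $a_j-x$ on $P_j$ when $j<i$ (including $i=\infty$), and pays $a_j-x+2b_i$ when $i\le j$. Taking expectations gives
\[
\mathbb E[\cost(\text{mixture},P_j)]=(a_j-x)\Bigl(q_\infty+\sum_iq_i\Bigr)+2\sum_{i\le j}q_ib_i=a_j-x+2\sum_{i\le j}q_ib_i.
\]
This is the equality case already recorded in Theorem~\ref{thm:canon}. Subtracting $\OPT(P_j)=a_j-x$, the consistency inequality on $P_j$ becomes exactly
\[
2\sum_{i\le j}q_ib_i\le\eps(a_j-x).
\]
Since this holds for each $j$ independently, consistency on all of $P_1,\dots,P_n$ is equivalent to the full family of prefix constraints, which gives both directions at once.

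The only point needing care is that the cost formulas for $H_i$ are exact values, not lower bounds. This is supplied by the attainment clause of Lemma~\ref{lem:linecost}: $H_i$ has the single leftward edge $b_i\to0$, and no leftward edge before stage $i$. I would also check explicitly that $\OPT(P_j)>0$ (since $a_j>x$), so that the ratio formulation is not degenerate. I expect no substantive obstacle; the proof is bookkeeping once the exact costs are in hand.
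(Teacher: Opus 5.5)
Your proposal is correct and follows essentially the same route as the paper: compute the mixture's exact expected cost $a_j-x+2\sum_{i\le j}q_ib_i$ on $P_j$, which is the equality case of Theorem~\ref{thm:canon}, then subtract $\OPT(P_j)=a_j-x$ to obtain the prefix constraints. Like the paper, you read $q_i\ge0$ and $\sum_iq_i\le1$ as the requirement that $q_\infty=1-\sum_iq_i$ makes the mixture a well-defined probability distribution.
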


\begin{proof}
By Theorem~\ref{thm:canon} the mixture attains cost exactly $\OPT(P_j)+2\sum_{i\le j}q_ib_i$ on
$P_j$, with $\OPT(P_j)=a_j-x$; the displayed inequalities are therefore precisely
$(1+\eps)$-consistency. Nonnegativity and total mass at most one are exactly the requirement that
$q_\infty=1-\sum_iq_i$ be a well-defined probability.
\end{proof}

\begin{theorem}[greedy minimisation]\label{thm:greedy}
Among all $q$ satisfying Lemma~\ref{lem:polytope}, the objective $K(q)$ is minimised by the
greedy vector $q^G$ obtained by scanning $i=1,2,\dots$ with remaining mass $z$ (initially $1$) and
setting $q_i^G=\min\{z,c_i\}$, $z\leftarrow z-q_i^G$.
\end{theorem}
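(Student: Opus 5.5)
The plan is to rewrite $K(q)$ as a linear objective and run an exchange argument. Since $q_\infty=1-\sum_iq_i$,
\[
K(q)=(2a_n-x)-\sum_i q_i\,w_i,\qquad w_i:=2(a_n-b_i).
\]
Because $b_1<b_2<\dots<b_n$, the weights $w_i$ are nonnegative and strictly decreasing in $i$. Minimising $K$ is therefore the same as maximising $\sum_iq_iw_i$ over the polytope of Lemma~\ref{lem:polytope}.

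That polytope has a convenient form. The prefix constraint at $j$ reads $\sum_{i\le j}q_ib_i\le\sum_{i\le j}c_ib_i$, because $\eps(a_j-x)/2=\sum_{i\le j}c_ib_i$ by telescoping the definition of the $c_i$. So a mass placed at stage $i$ costs $b_i$ units of a cumulative budget, and the budget released at stage $i$ is $c_ib_i$. The goal is to put as much mass as possible on early indices, where the weight is largest.

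The main step is an exchange argument. Take any feasible $q$ and let $i$ be the first index with $q_i\ne q_i^G$.

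First, $q_i>q_i^G$ is impossible. If $q_i^G=z$, this would break the mass bound. If $q_i^G=c_i$, then $q$ agrees with $q^G$ before $i$, and the greedy vector has $q_l^G=c_l$ for all $l<i$ (the remaining mass was still positive). The constraint at $i$ then forces $q_i\le c_i$.

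So $q_i<q_i^G$. In this case there is slack both in total mass and in every prefix constraint from $j=i$ up to the first later index $m>i$ with $q_m>0$. If no such $m$ exists, simply increase $q_i$; this lowers $K$ because $w_i\ge0$. Otherwise move mass from $m$ to $i$: raise $q_i$ by $\delta b_m/b_i$ and lower $q_m$ by $\delta$. This keeps every prefix sum $\sum_{l\le j}q_lb_l$ unchanged for $j\ge m$, and for $i\le j<m$ it increases by $\delta b_m$, which the slack absorbs for small $\delta$. Total mass changes by $\delta(b_m/b_i-1)\ge0$, so the step needs care. It is safer to move mass one-for-one instead: $q_i\mathrel{+}=\delta$, $q_m\mathrel{-}=\delta$. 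Then total mass is preserved, prefix sums for $j\ge m$ change by $\delta(b_i-b_m)\le0$, and prefix sums for $i\le j<m$ rise by $\delta b_i$, which fits in the slack. The objective gains $\delta(w_i-w_m)>0$, so $K$ strictly decreases.

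To avoid an infinite iteration, it is cleanest to finish with compactness. The feasible set is a compact polytope, so a minimiser exists. Among minimisers, pick the one that agrees with $q^G$ on the longest prefix. The exchange above then contradicts either optimality or that maximality whenever $q\ne q^G$.

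The main obstacle is the slack claim for $i\le j<m$ in the case $q_i<q_i^G=c_i$. One must check that every constraint between $i$ and $m$ is strictly slack. It is: for $j\ge i$ the budget accumulates $c_jb_j$ while $q$ adds nothing strictly between $i$ and $m$. With the case $q_i^G=z$ handled by the mass slack, this closes the argument.
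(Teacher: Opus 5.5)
Your proposal is correct and follows essentially the paper's route: you linearise $K$ via $q_\infty=1-\sum_i q_i$ with decreasing weights $2(a_n-b_i)$, rule out excess mass at the first index of disagreement using the tight earlier budgets, and repair deficient mass by a one-for-one transfer from a later coordinate (or from $q_\infty$), just as the paper does. The only difference is the closing step. You use compactness and strict improvement (from $b_i<b_m$, and from $w_i>0$ rather than merely $w_i\ge0$), which gives uniqueness of the minimiser and makes feasibility of $q^G$ automatic, whereas the paper iterates weakly improving exchanges and checks feasibility of $q^G$ separately by telescoping.
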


\begin{proof}
\emph{Feasibility.} For every prefix $j$, before the mass is exhausted,
$2\sum_{i\le j}q_i^Gb_i\le2\sum_{i\le j}c_ib_i=\eps\bigl[(a_1-x)+\sum_{i=2}^j(a_i-a_{i-1})\bigr]
=\eps(a_j-x)$; after the mass is exhausted the later coordinates vanish and the same prefix
inequality persists.

\emph{Optimality.} Write $q_\infty=1-\sum_iq_i$, so that
$K(q)=2a_n-x-\sum_i q_i\cdot2(a_n-b_i)$. Minimising $K$ is therefore equivalent to maximising the
linear objective $\sum_i q_i\,2(a_n-b_i)$, whose coefficients $2(a_n-b_i)$ are nonincreasing in $i$
because $b_1\le b_2\le\dots\le b_n<a_n$. Let $q$ be feasible, and suppose some positive mass sits at
a coordinate $j$ while an earlier prefix budget is not tight, so that for some $i<j$ and some
$\delta>0$ moving $\delta$ mass from $j$ to $i$ preserves every prefix constraint. This move changes
the objective by $2\delta\bigl[(a_n-b_i)-(a_n-b_j)\bigr]=2\delta(b_j-b_i)\ge0$, strictly when
$b_j>b_i$. Formally, take the first coordinate $i$ at which an optimum differs from the greedy
vector; all earlier greedy-saturated prefix constraints then agree. If the optimum has less mass
at $i$, either mass remains at infinity or some later coordinate $j$ has excess mass, so move the
largest admissible $\delta$ from that later coordinate (or infinity) to $i$. The earlier constraints
are unchanged, feasibility between $i$ and $j-1$ follows from the first slack prefix, and from
$j$ onward the weighted load weakly decreases because $b_i\le b_j$, so the objective weakly
improves. Repetition makes coordinate $i$ greedy, and induction fixes all coordinates. When all
earlier prefix budgets are tight, subtracting the $(j-1)$-st budget from the $j$-th shows that the
additional admissible mass is exactly $c_j$ (and $c_1$ at $j=1$). Hence the optimum is the greedy
vector; with strictly increasing scales from stage $2$ on, its value of $K$ is unique.
\end{proof}

\begin{theorem}[geometric lower bound]\label{thm:sharpexp}
Fix $x=\tfrac12$ and $\delta>0$, take $a_i=(1+\delta)^{i-1}$, and set
$L=\bigl\lfloor 2(1-\eps/2)/(\eps\delta)\bigr\rfloor$ and $n=L+2$. Then every
$(1+\eps)$-consistent mixture of the algorithms $H_i,H_\infty$ has cost on $B$ satisfying
\[
K(q)\ge K(q^G)\ge\eps\bigl[(1+\delta)^L-1\bigr]-\tfrac12.
\]
\end{theorem}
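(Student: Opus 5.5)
The first inequality $K(q)\ge K(q^G)$ is exactly Theorem~\ref{thm:greedy}, since by Lemma~\ref{lem:polytope} every $(1+\eps)$-consistent mixture is a feasible $q$. So the work lies in evaluating or lower-bounding $K(q^G)$ for the geometric scales. I will first compute the capacities. With $x=\tfrac12$ and $a_1=1$, $b_1=\tfrac12$, we get $c_1=\eps(1-\tfrac12)/(2\cdot\tfrac12)=\eps/2$. For $i\ge2$, $b_i=a_{i-1}$, so $c_i=\eps(a_i-a_{i-1})/(2a_{i-1})=\eps\delta/2$, constant.

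Next I determine how far the greedy scan gets. The cumulative capacity through coordinate $i$ is $\eps/2+(i-1)\eps\delta/2$. It stays at most $1$ as long as $(i-1)\le 2(1-\eps/2)/(\eps\delta)$, so coordinates $1,\dots,L+1$ are fully filled, i.e.\ $q_i^G=c_i$. The choice $n=L+2$ guarantees these coordinates exist, with at least one further coordinate to absorb the leftover mass or leave it at $\infty$.

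Then I bound $K(q^G)$ from below. Write $K(q)=\sum_i q_i(2b_i-x)+q_\infty(2a_n-x)$. The mass not placed in the first $L+1$ coordinates sits at coordinates $\ge L+2$ or at $\infty$, and each such unit costs at least $2b_{L+2}-x=2a_{L+1}-\tfrac12$ (using $a_n\ge a_{L+1}$). The filled coordinates contribute $\sum_{i\le L+1}c_i(2b_i-\tfrac12)$. To obtain the stated clean bound, I plan to drop the leftover mass term and keep only the filled coordinates $2\le i\le L+1$, all terms being nonnegative. This gives
\[
\sum_{i=2}^{L+1}\frac{\eps\delta}{2}\bigl(2(1+\delta)^{i-2}-\tfrac12\bigr)\;\ge\;\eps\delta\sum_{m=0}^{L-1}(1+\delta)^m-\frac{\eps\delta L}{4}=\eps\bigl[(1+\delta)^L-1\bigr]-\frac{\eps\delta L}{4}.
\]
Since $\eps\delta L\le 2(1-\eps/2)\le2$, the subtracted term is at most $\tfrac12$, and this yields the displayed inequality.

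The main obstacle is only bookkeeping: the indexing of $b_i=a_{i-1}$ against $a_i=(1+\delta)^{i-1}$, and confirming that the floor in $L$ keeps all $L+1$ coordinates saturated, so that $q_i^G=c_i$ exactly. Also, the $-x$ parts must be absorbed by the $-\tfrac12$ rather than by the leftover mass. If the $-\eps\delta L/4$ estimate turns out too crude, I would instead retain the leftover-mass term at cost $2a_{L+1}-\tfrac12\ge0$ as slack.
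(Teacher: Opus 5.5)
Your proposal is correct and follows the paper's proof essentially line by line. You get $K(q)\ge K(q^G)$ from Theorem~\ref{thm:greedy} via Lemma~\ref{lem:polytope}, compute the capacities $c_1=\eps/2$ and $c_i=\eps\delta/2$, note that coordinates $1,\dots,L+1$ are saturated, and drop the nonnegative coordinate-$1$ and leftover-mass terms to reach $\eps[(1+\delta)^L-1]-\eps\delta L/4\ge\eps[(1+\delta)^L-1]-\tfrac12$. The fallback you mention is not needed, because $\eps\delta L\le2$ already bounds the subtracted term by $\tfrac12$.
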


\begin{proof}
With $x=\tfrac12$ and $a_1=1$ the capacities are $c_1=\eps(1-x)/(2x)=\eps/2$ and
$c_i=\eps(a_i-a_{i-1})/(2a_{i-1})=\eps\delta/2$ for $i\ge2$. The greedy rule spends $c_1=\eps/2$ at
coordinate $1$ and then fills coordinates $2,\dots,L+1$ completely, since
$c_1+L(\eps\delta/2)\le1$ by the choice of $L$. By Theorem~\ref{thm:greedy}, dropping the
nonnegative coordinate-$1$ and any final-partial contributions,
\[
K(q^G)\ \ge\ \sum_{i=2}^{L+1}\frac{\eps\delta}{2}\,(2a_{i-1}-x)
\ =\ \eps\delta\sum_{t=0}^{L-1}(1+\delta)^t-\frac{\eps\delta x}{2}L
\ \ge\ \eps\bigl[(1+\delta)^L-1\bigr]-x,
\]
where $\eps\delta\sum_{t=0}^{L-1}(1+\delta)^t=\eps\bigl[(1+\delta)^L-1\bigr]$ and
$(\eps\delta/2)L\le1$ gives $(\eps\delta x/2)L\le x=\tfrac12$.
\end{proof}

\begin{proof}[Proof of Theorem~\ref{thm:randuniformlb}]
Given $\eta>0$, choose $\delta=\delta(\eta)>0$ small enough that
$2\log(1+\delta)/\delta\ge2-\eta/3$. Instantiate the construction of Theorem~\ref{thm:sharpexp}. By
Theorem~\ref{thm:canon}, every randomised $(1+\eps)$-consistent algorithm on this family is
coordinatewise dominated by a $(1+\eps)$-consistent mixture of the algorithms $H_i,H_\infty$; by
Theorems~\ref{thm:greedy}--\ref{thm:sharpexp} the mixture's cost on $B$ is at least
$\eps[(1+\delta)^L-1]-\tfrac12$. Dividing by $\OPT(B)=x=\tfrac12$, the robustness ratio is at least
$2\eps(1+\delta)^L-2$. Now
\[
L\log(1+\delta)\ \ge\ \Bigl[\tfrac{2(1-\eps/2)}{\eps\delta}-1\Bigr]\log(1+\delta)
\ \ge\ \frac{2-\eta/3}{\eps}-O_\eta(1),
\]
so $2\eps(1+\delta)^L\ge\exp\!\bigl(\tfrac{2-\eta/3}{\eps}+\log(2\eps)-O_\eta(1)\bigr)$. Since
$1/\eps$ dominates $|\log\eps|$, for all sufficiently small $\eps$ the additive $-2$ and the
prefactor loss $\log(2\eps)-O_\eta(1)$ together cost at most $2\eta/(3\eps)$ in the exponent, so the
ratio is at least $\exp((2-\eta)/\eps)$. The instance uses $k_0\le C_\eta/\eps$ servers.
For $k>k_0$, add $k-k_0$ labelled servers at new pairwise distinct locations, also distinct from
all locations in the construction, and prepend their truthfully predicted exact-hit requests in a
fixed order. Compare with the fixed completion that requests every still-unrequested server at
its own location. This input has
$\OPT=0$, so finite expected robustness forces zero total cost almost surely; location uniqueness
therefore forces the intended padding label at every padding step. Prefix indistinguishability
transfers these actions to every constructed suffix. The prepended zero edges preserve $\OPT$ and extend
every perfect stopping instance to a perfect $k$-server instance. All inputs are fixed independently
of the random tape, so the padding is oblivious.
\end{proof}

\section{Full proof of the comparison theorem}\label{app:comparison}

This appendix proves Theorem~\ref{thm:comparison} using only Definition~\ref{def:switching},
without using a matching-specific property. Let $X,Y$ be a deterministic pair satisfying that definition,
with per-request costs $x_t,y_t$ and cumulative costs
$X_t,Y_t$.  Recall
\[
F(z)=\min\{1,\log(1+z)/L\},\qquad Z=e^L-1,
\]
and
\[
a(z)=1-F(z)+(1+z)F'(z),\qquad
b(z)=F(z)+z(1+z)F'(z).
\]
On the axes the formulas use their continuous extensions. At a zero-cost prefix the algorithm
retains its current mode while both new reference costs are zero; on the first request with
positive reference mass, it observes $(x,y)$, switches for free because $D_0=0$ to the Bernoulli
marginal $F(x/y)$ with $F(\infty)=1$, and serves once in that mode. Thereafter it uses the ordinary
rule with a switch before and after service, while for $x=y=0$ reference tracking and
nonnegativity of $\Psi$ force zero actual cost.

\begin{lemma}[interpolation]\label{lem:interp}
The switching assumptions implement the interpolated mode process as a feasible randomised
online algorithm. Its expected actual cost is at most the sum over requests of
\eqref{eq:interpolation}, and this line integral is at most
$\int[a(z)\,dX+b(z)\,dY]$.
\end{lemma}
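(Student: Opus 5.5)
We need to prove three things: the interpolated mode process is a feasible online algorithm, its expected cost on each request is bounded by the line integral \eqref{eq:interpolation}, and that integral is at most $\int[a\,dX+b\,dY]$. The plan is to build one request at a time on the uniform implementation of Definition~\ref{def:switching}, maintaining the invariant that before request $t$ the mode is $Y$ with probability exactly $q_{t-1}:=F(X_{t-1}/Y_{t-1})$. The boundary cases follow the conventions stated just before the lemma.

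\textbf{Feasibility and cost.} On request $t$ the algorithm sees $(x,y)$, because both references are simulated before it acts. It computes $\bar q=\int_0^1 q(s)\,ds$ and, by a maximal Bernoulli coupling, moves the mode marginal from $q(0)=q_{t-1}$ to $\bar q$. The mode flips with probability $|\bar q-q(0)|$, and by (T) each flip raises $\Psi$ by at most $D_{t-1}\le X_{t-1}+Y_{t-1}$. It then serves in the resulting mode. By (S), the expected actual cost plus the change in $\Psi$ is at most $(1-\bar q)x+\bar q y$, which equals $\int_0^1[(1-q(s))x+q(s)y]\,ds=\int[(1-F)\,dX+F\,dY]$ along the virtual path. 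Next it switches from $\bar q$ to $q(1)$. This costs at most $|q(1)-\bar q|\,(X_t+Y_t)$ in expectation, since the prefix is now $t$.

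The two switching charges must be bounded by $\int (X+Y)|dF|$. Because $A(s)/B(s)$ is monotone in $s$, $q(s)$ is monotone on $[0,1]$. Hence $|\bar q-q(0)|+|q(1)-\bar q|=|q(1)-q(0)|=\int|dq|$, since $\bar q$ lies between the endpoints. The weights are the issue. The pre-service switch is charged at $X_{t-1}+Y_{t-1}=A(0)+B(0)\le A(s)+B(s)$. The post-service switch is charged at $A(1)+B(1)$, which is larger, so a direct comparison fails. To fix this, write $\bar q-q(0)=\int_0^1 s\,dq(s)$ and $q(1)-\bar q=\int_0^1(1-s)\,dq(s)$ (integration by parts). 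The total charge is then at most
\[
\int_0^1\bigl[s(A(0)+B(0))+(1-s)(A(1)+B(1))\bigr]|dq(s)|.
\]
This is not obviously at most $\int (A(s)+B(s))|dq|$, because the weights are reversed. I expect this to be the main obstacle.

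The fix I would try first is to check whether Definition~\ref{def:switching} allows the post-service switch to be charged at the pre-service prefix, or to reverse the order of the two switches. If neither works, I would rely on the intended reading that the reference costs accrue inside the interpolation, with (T) charged at $D\le X+Y$ evaluated at the interpolated prefix. Taking expectations and telescoping, using $\Psi\ge0$ and $\Psi_0=0$, then gives total expected cost at most $\sum_t\eqref{eq:interpolation}$.

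\textbf{Integrand bound.} Along the path set $z=A/B$, so $dz=(B\,dA-A\,dB)/B^2$. Then
\[
(X+Y)|dz|\le \frac{A+B}{B^2}\,(B\,dA+A\,dB)=(1+z)\,dX+z(1+z)\,dY,
\]
using $dA,dB\ge0$. Since $|dF|=F'(z)|dz|$ with $F'\ge0$ (and $F'=0$ above the cap $Z$), the integrand is at most $(1-F+(1+z)F')\,dX+(F+z(1+z)F')\,dY=a\,dX+b\,dY$.

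Axis cases use the continuous extensions, and the first request with positive mass pays no switching cost because $D_0=0$.
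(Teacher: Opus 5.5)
\textbf{There is a genuine gap, at exactly the step you flag. The obstacle comes from an integration-by-parts error, not from a real difficulty.} You have the weights swapped. Integrating by parts gives $\bar q=\int_0^1 q(s)\,ds=q(1)-\int_0^1 s\,dq(s)$, so
\[
q(1)-\bar q=\int_0^1 s\,dq(s),\qquad \bar q-q(0)=\int_0^1(1-s)\,dq(s).
\]
Equivalently, by Fubini, $\bar q-q(0)=\int_0^1\int_0^s dq(u)\,ds=\int_0^1(1-u)\,dq(u)$. A sanity check: if $q$ rises from $0$ to $1$ in a short interval around $s_0$, then $\bar q\approx1-s_0$. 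The pre-service switch therefore has probability about $1-s_0$, not $s_0$.

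Since $q$ is monotone on the request, $dq$ has constant sign. Hence $|\bar q-q(0)|=\int_0^1(1-s)|dq(s)|$ and $|q(1)-\bar q|=\int_0^1 s|dq(s)|$. The two (T)-charges are then
\[
(A+B)\int_0^1(1-s)|dq(s)|+(A+B+x+y)\int_0^1 s|dq(s)|=\int_0^1\bigl(A+B+s(x+y)\bigr)|dq(s)|=\int_0^1\bigl(A(s)+B(s)\bigr)|dq(s)|.
\]
This holds with equality because $A(s)+B(s)$ is affine in $s$. Early changes of $q$ are paid mostly by the pre-service switch at the old prefix. Late changes are paid mostly by the post-service switch at the new prefix. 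This is exactly why the construction uses one switch before service and one after. This is the paper's argument.

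As written, your proposal does not close the bound. You leave the reversed-weight expression unresolved, and the fallbacks you suggest are not available. Property (T) is stated only for a change of tracked reference after a prefix $t$, charged at $D_t\le X_t+Y_t$. It says nothing about an ``interpolated prefix''. The post-service switch occurs after request $t$ has been served, so nothing in Definition~\ref{def:switching} lets you charge it at $X_{t-1}+Y_{t-1}$.

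With the corrected identities, the rest of your argument matches the paper's proof. That includes the invariant on the marginal $q_{t-1}$, the (S) service bound $(1-\bar q)x+\bar q y$, telescoping with $\Psi_0=0$ and $\Psi\ge0$, and the integrand estimate $(X+Y)|dz|\le(1+z)\,dX+z(1+z)\,dY$.
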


\begin{proof}
For request $t$, write $A=X_{t-1}$, $B=Y_{t-1}$, $x=x_t$, $y=y_t$, and let
$A(s)=A+sx$, $B(s)=B+sy$, $q(s)=F(A(s)/B(s))$ for $s\in[0,1]$, with
$q_0=q(0)$, $q_1=q(1)$ and $\bar q=\int_0^1q(s)\,ds$. The ratio $A(s)/B(s)$ is monotone because
its derivative has the constant sign of $xB-yA$.

The current mode has marginal $q_0$ of tracking $Y$. Before service, maximally couple it to a
Bernoulli variable of marginal $\bar q$, and after service couple again to marginal $q_1$, giving
switch probabilities $|\bar q-q_0|$ and $|q_1-\bar q|$. Property (S) of
Definition~\ref{def:switching} bounds expected service cost plus the change in tracking
potential by $(1-\bar q)x+\bar qy$, while property (T) bounds the two expected switch increases by
$(A+B)|\bar q-q_0|$ and $(A+B+x+y)|q_1-\bar q|$.

Since $q$ is monotone, Fubini gives
\[
|\bar q-q_0|=\int_0^1(1-s)|dq(s)|,
\qquad
|q_1-\bar q|=\int_0^1s|dq(s)|.
\]
Consequently the expected amortised cost on this request is at most
\[
\int_0^1\!\bigl[(1-q(s))x+q(s)y+(A(s)+B(s))|dq(s)|\bigr],
\]
which is exactly the contribution of \eqref{eq:interpolation}.  This is an implementable discrete
algorithm: the interpolation is used only to choose the two Bernoulli marginals; the request itself
is served once, in one feasible mode.

For the second inequality, at differentiability points, with $z=X/Y$,
\[
(X+Y)|dz|=(1+z)|dX-z\,dY|
 \le (1+z)dX+z(1+z)dY,
\]
because $dX,dY\ge0$. Since $|dF|=F'(z)|dz|$, adding the service terms gives
$a(z)dX+b(z)dY$; continuity of $F$ means that the cap $z=Z$ carries no atom. The tracking
potential starts at zero and is nonnegative at termination, so dropping its terminal value
converts the amortised bound into a bound on actual expected cost.
\end{proof}

Fix $c\ge1$ and recall
$\kappa(c)=2\log c+(c+1)\log(1+1/c)$.

\begin{lemma}[potential for comparison with $X$]\label{lem:nearX}
Let
\[
\phi(z)=\frac{2z\log z-(z+1)\log(1+z)+\kappa(c)z}{L}
\quad(0\le z\le Z),
\]
with $0\log0=0$, and extend it for $z\ge Z$ as
$\phi(z)=\alpha z-1$, where $\alpha=(\phi(Z)+1)/Z$. Put $V(X,Y)=Y\phi(X/Y)$.
Then
\[
d(\mathrm{meta\text{-}cost})+dV
 \le \left(1+\frac{2+\kappa(c)}L\right)dX,
\]
and $V\ge0$ at termination whenever $Y\le cX$.
\end{lemma}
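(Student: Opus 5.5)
The plan is to differentiate $V$ along the interpolated path and match it term by term against the bound $a(z)\,dX+b(z)\,dY$ from Lemma~\ref{lem:interp}. Along each request segment $z=X/Y$ is monotone, and $\phi$ is continuous and piecewise $C^1$ with a single kink at $Z$. So $V$ is absolutely continuous along the path, and the chain rule applies away from that one point. Writing $z=X/Y$,
\[
dV=\phi(z)\,dY+\phi'(z)(dX-z\,dY)=\phi'(z)\,dX+\bigl(\phi(z)-z\phi'(z)\bigr)dY .
\]
It therefore suffices to check two pointwise inequalities, using $dX,dY\ge0$:
\[
b(z)+\phi(z)-z\phi'(z)\le0,\qquad a(z)+\phi'(z)\le1+\frac{2+\kappa(c)}L .
\]

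\emph{Below the cap} ($0<z<Z$). Here $F=\log(1+z)/L$ and $F'=1/(L(1+z))$, so
\[
a=1-\frac{\log(1+z)}L+\frac1L,\qquad b=\frac{\log(1+z)+z}L .
\]
Differentiating gives $\phi'(z)=\bigl[2\log z-\log(1+z)+1+\kappa\bigr]/L$. A short computation then gives $\phi-z\phi'=-\bigl[\log(1+z)+z\bigr]/L$, so the $dY$ coefficient cancels exactly. For the $dX$ coefficient,
\[
a+\phi'=1+\frac{2+\kappa+2\log\bigl(z/(1+z)\bigr)}L\le1+\frac{2+\kappa}L .
\]

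\emph{Above the cap} ($z>Z$). Here $F=1$, so $a=0$ and $b=1$. Since $\phi=\alpha z-1$, we get $\phi-z\phi'=-1$, and again the $dY$ coefficient vanishes. It remains to show $\alpha\le1+(2+\kappa)/L$. Using $\log(1+Z)=L$,
\[
\phi(Z)+1=\frac{Z\bigl(2\log Z-L+\kappa\bigr)}L,\qquad\text{so}\qquad \alpha=\frac{2\log Z-\log(1+Z)+\kappa}L .
\]
The required inequality reduces to $2\log\bigl(Z/(1+Z)\bigr)\le2$, which holds. Because $F$ is continuous, the point $z=Z$ carries no atom, and the zero-cost and axis conventions follow by continuous extension of $\phi$ (with $Y\phi(X/Y)\to\alpha X$ as $Y\to0$).

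\emph{Nonnegativity.} If $Y\le cX$ and $Y>0$, then the terminal ratio satisfies $z\ge1/c$. Write
\[
\phi(z)=\frac zL\bigl(g(z)+\kappa\bigr),\qquad g(z)=2\log z-(1+1/z)\log(1+z).
\]
Then $g'(z)=1/z+\log(1+z)/z^2>0$, so $g$ is increasing. The definition of $\kappa(c)$ is exactly $g(1/c)+\kappa(c)=0$, which gives $\phi\ge0$ on $[1/c,Z]$. We also need $Z\ge1/c$; this holds because $L\ge2$ gives $Z\ge e^2-1>1\ge1/c$. Above $Z$, $\phi$ is linear with slope $\alpha=(\phi(Z)+1)/Z>0$ and starts from $\phi(Z)\ge0$, so it stays nonnegative. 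In the case $Y=0$, the limiting value $V=\alpha X\ge0$. The only delicate step is the cap: checking that the linear extension keeps both the exact $dY$ cancellation and the slope bound. By the computation of $\alpha$ above, this reduces to $Z<1+Z$.
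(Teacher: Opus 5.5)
Your proposal is correct and follows the paper's proof essentially step for step. You use the same decomposition $dV=\phi'(z)\,dX+\bigl(\phi(z)-z\phi'(z)\bigr)dY$, the same exact cancellation $\phi-z\phi'=-b$ both below and above the cap, and the same bound on the slope $\alpha$ via $\log Z<\log(1+Z)=L$. For terminal nonnegativity you use the same increasing function $h$, with $\kappa(c)=-h(1/c)$. Your explicit check that $Z\ge1/c$ is a small point the paper leaves implicit: it guarantees the relevant terminal ratios lie where $\phi\ge0$.
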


\begin{proof}
For $0<z<Z$,
\[
F(z)=\frac{\log(1+z)}L,\qquad F'(z)=\frac1{L(1+z)},
\]
so
\[
a(z)=1-\frac{\log(1+z)}L+\frac1L,
\qquad
b(z)=\frac{\log(1+z)+z}{L}.
\]
Differentiating $\phi$ gives
\[
\phi'(z)=\frac{2\log z+1-\log(1+z)+\kappa(c)}L,
\qquad
\phi(z)-z\phi'(z)=-b(z),
\]
and
\[
a(z)+\phi'(z)
=1+\frac{2+\kappa(c)+2\log(z/(1+z))}{L}
\le1+\frac{2+\kappa(c)}L.
\]
For $V=Y\phi(X/Y)$,
$dV=\phi'(z)dX+[\phi(z)-z\phi'(z)]dY$.  Lemma~\ref{lem:interp} therefore cancels the $dY$
coefficient below the cap.

Above the cap, $a=0$, $b=1$, and the linear extension has
$\phi-z\phi'=-1=-b$.  Its slope satisfies
\[
\alpha=\frac{h(Z)+\kappa(c)}L+\frac1Z
\le1+\frac{\kappa(c)}L
\le1+\frac{2+\kappa(c)}L,
\]
where $h(z)=2\log z-(1+1/z)\log(1+z)$ and we used
$h(Z)+L/Z=2\log Z-L\le L$.

For the terminal sign, $h'(z)=[z+\log(1+z)]/z^2>0$ and $\kappa(c)=-h(1/c)$, so below the cap
$\phi(z)/z=[h(z)-h(1/c)]/L\ge0$ whenever $z\ge1/c$; the linear extension remains nonnegative as
well. Hence $Y\le cX$ implies $V\ge0$ at termination, completing the differential inequality in
both regions.
\end{proof}

\begin{lemma}[potential for comparison with $Y$]\label{lem:nearY}
Let $\chi(z)=\int_z^Za(s)\,ds$ for $0\le z\le Z$, and $\chi(z)=0$ for $z\ge Z$.
Put $W(X,Y)=Y\chi(X/Y)$. Then
\[
d(\mathrm{meta\text{-}cost})+dW\le M\,dY,
\qquad M=1+\frac{2(e^L-1)}L,
\]
and $W$ is nonnegative and initially zero.
\end{lemma}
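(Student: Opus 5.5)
We argue directly from the homogeneous form $W=Y\chi(X/Y)$ and the bound of Lemma~\ref{lem:interp}. This parallels Lemma~\ref{lem:nearX}, except that $\chi$ is now chosen to cancel the $dX$ coefficient instead of the $dY$ coefficient.

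\emph{Nonnegativity and initial value.} Below the cap, $a(s)=1-\log(1+s)/L+1/L\ge 1/L>0$ for $0\le s\le Z$, since $\log(1+s)\le L$ there. Hence $\chi(z)=\int_z^Z a\ge0$ on $[0,Z]$, and $\chi=0$ above the cap. Thus $W\ge0$. Initially both cumulative costs vanish, and we adopt the convention $W=0$ at the start. During a zero-cost prefix $W$ stays $0$. On the first request with positive reference mass the algorithm jumps to the marginal $F(x/y)$ at no switching cost, because $D_0=0$. The potential then starts at $y\chi(x/y)$, so this first request must be checked separately.

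\emph{Differential inequality.} At differentiability points,
\[
dW=\chi'(z)\,dX+\bigl[\chi(z)-z\chi'(z)\bigr]\,dY,
\]
where $\chi'=-a$ below the cap and $\chi'=0$ above it. Below the cap the $dX$ coefficient in $a\,dX+b\,dY+dW$ vanishes. The $dY$ coefficient is
\[
b(z)+\chi(z)+z\,a(z).
\]
Using $b(z)=(\log(1+z)+z)/L$ and $a(z)=1-(\log(1+z)-1)/L$, we have $z\,a+b=z+(z+\log(1+z))/L-(z\log(1+z)-z)/L$. Integrating $a$ gives
\[
\chi(z)=(Z-z)\Bigl(1+\tfrac1L\Bigr)-\tfrac1L\bigl[(1+s)\log(1+s)-s\bigr]_z^Z .
\]
Adding these, the expression should telescope to a function whose derivative in $z$ is $z\,a'(z)+b'(z)=-z/(L(1+z))+(1+1/(1+z))/L=(1+1/(1+z)-z/(1+z))/L$. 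Rather than chase closed forms, we will verify that $\frac{d}{dz}[b+\chi+za]=b'+za'=2/(L(1+z))\ge0$. Hence the coefficient is maximised at $z=Z$, where $\chi=0$, $a(Z)=1/L$, and $b(Z)=1+Z/L$. Its value there is $1+Z/L+Z/L=1+2(e^L-1)/L=M$.

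Above the cap we have $a=0$, $b=1$ and $W\equiv0$, so the coefficient is $1\le M$. Since $\chi$ is continuous at $Z$ and $F$ has no atom there, there is no boundary term, so the line integral of Lemma~\ref{lem:interp} plus $dW$ is at most $M\,dY$ along every monotone segment.

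\emph{Main obstacle: the first positive request and the axes.} When $Y=0<X$ we have $z=\infty$, so $W=0$ and $F=1$; the cost is then bounded by $y\le M\,y$. When $X=0<Y$ the initial jump gives $W=y\chi(0)$. We must show that this jump together with the first service cost is at most $My$. For this we use $\chi(0)+b(0)=\chi(0)\le\int_0^Z a$ and compare with the monotone bound $b+\chi+za\le M$ evaluated at $z=0$, which gives exactly $\chi(0)\le M$. We treat this case, and the homogeneity of $W$ along a zero-cost prefix, as the delicate bookkeeping. The monotonicity computation $b'+za'\ge0$ is the key calculus step.
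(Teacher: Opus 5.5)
Your proof is correct and follows the paper's argument. The choice $\chi'=-a$ cancels the $dX$ term, and the remaining coefficient $g=b+\chi+za$ satisfies $g'=b'+za'=2/(L(1+z))>0$, so $g\le g(Z)=1+2Z/L=M$; above the cap the coefficient is $1$, and $a\ge 1/L>0$ gives $W\ge0$.

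Your extra bookkeeping for the first request with positive reference cost is sound. The paper handles this in the preamble to Appendix~\ref{app:comparison} rather than in the lemma itself. The case $x,y>0$, which you leave implicit, follows in the same way, since $(1-F(z))z+F(z)+\chi(z)=g(z)-2z(1+z)F'(z)\le M$.
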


\begin{proof}
Below the cap, $\chi'=-a$, so the $dX$ coefficient cancels. The remaining $dY$ coefficient is
\[
g(z)=b(z)+\chi(z)+z a(z).
\]
Using $b'(z)=(2+z)/[L(1+z)]$ and $a'(z)=-1/[L(1+z)]$ gives
$g'(z)=2/[L(1+z)]>0$, so $g$ is maximised at $Z$, where
\[
g(Z)=b(Z)+Za(Z)=1+\frac{2Z}{L}=M.
\]
Above the cap the coefficient is $1\le M$, while $a\ge0$ below the cap implies $\chi\ge0$ and
$W\ge0$.
\end{proof}

\begin{proof}[Proof of Theorem~\ref{thm:comparison}]
Lemma~\ref{lem:nearY} gives $\mathbb E[\cost]\le M\cost(Y)$ on every input. If
$\cost(Y)\le c\cost(X)$, Lemma~\ref{lem:nearX} and
$L=(2+\kappa(c))/\eps$ give
\[
\mathbb E[\cost]\le
\left(1+\frac{2+\kappa(c)}L\right)\cost(X)
=(1+\eps)\cost(X).
\]
Lemma~\ref{lem:interp} supplies the feasible randomised online implementation, with the input fixed
independently of its random bits as required by the oblivious adversary model.
\end{proof}

\section{Analysis of the truncated logarithm}\label{app:truncated}

\begin{proof}[Proof of Theorem~\ref{thm:truncated-comparison}]
Let
\[
F_r(z)=
\begin{cases}
0,&0\le z\le r,\\[1mm]
\log(z/r)/L,&r<z<Z,\\[1mm]
1,&z\ge Z.
\end{cases}
\]
Use the maximal Bernoulli couplings from Lemma~\ref{lem:interp}. Write $f=F_r'$ where the
derivative exists, and put
\[
a(z)=1-F_r(z)+(1+z)f(z),
\qquad
b(z)=F_r(z)+z(1+z)f(z).
\]
Lemma~\ref{lem:interp} gives
\begin{equation}\label{eq:truncatedab}
d\mathbb E[\cost]\le a(z)\,dX+b(z)\,dY,
\qquad z=\frac XY.
\end{equation}
The coefficient functions are
\[
(a(z),b(z))=
\begin{cases}
(1,0),&0\le z<r,\\[1mm]
\left(1-\dfrac{\log(z/r)}L+\dfrac{1+1/z}L,
\dfrac{\log(z/r)+1+z}L\right),&r<z<Z,\\[4mm]
(0,1),&z>Z.
\end{cases}
\]
Continuity of $F_r$ means that crossing an endpoint creates no jump charge.

For the comparison with $X$, define
\[
\phi(z)=z\int_0^z\frac{b(s)}{s^2}\,ds
\qquad(0<z\le Z).
\]
For $V(X,Y)=Y\phi(X/Y)$, the identity $\phi-z\phi'=-b$ gives
\begin{equation}\label{eq:truncatedXpotential}
d\mathbb E[\cost]+dV
\le\bigl(a(z)+\phi'(z)\bigr)dX.
\end{equation}
Below $r$, this coefficient is $1$, while on $(r,Z)$,
\[
\frac{d}{dz}\bigl(a(z)+\phi'(z)\bigr)
=a'(z)+\frac{b'(z)}z
=2(1+z)\left(f'(z)+\frac{f(z)}z\right)=0,
\]
because $f(z)=1/(Lz)$, and its right limit at $r$ is
\[
1+\frac{2(1+1/r)}L=1+\eps.
\]
Above $Z$, extend $\phi$ linearly as $\phi(z)=mz-1$, where
$m=(\phi(Z)+1)/Z$. Direct integration gives
\[
m=1+\frac2L\left(\frac1r-\frac1Z\right)<1+\eps.
\]
Since the potential is nonnegative and initially zero, integrating
\eqref{eq:truncatedXpotential} proves
$\mathbb E[\cost]\le(1+\eps)X$.

For the comparison with $Y$, define
\[
\chi(z)=\int_z^Z a(s)\,ds
\quad(0\le z\le Z),
\qquad
\chi(z)=0
\quad(z\ge Z),
\]
and set $W(X,Y)=Y\chi(X/Y)$. Since $a\ge0$, the potential is nonnegative; below the cap,
$\chi'=-a$, so the coefficient of $dX$ in $d\mathbb E[\cost]+dW$ vanishes, and the remaining
coefficient is
\[
g(z)=b(z)+\chi(z)+za(z).
\]
On each differentiability interval,
\[
g'(z)=b'(z)+za'(z)
=2(1+z)\frac{d}{dz}\bigl(zf(z)\bigr).
\]
Thus $g$ is constant below $r$ and on $(r,Z)$, and on the middle interval its value is
\[
g(Z^-)=b(Z^-)+Za(Z^-)
=1+\frac{2(1+Z)}L
=M_r(\eps).
\]
The coefficient is smaller below $r$ and equals $1$ above $Z$, with the same bounds on the axes.
Since the potential starts at zero, integration proves
$\mathbb E[\cost]\le M_r(\eps)Y$. The interpolation is online by Lemma~\ref{lem:interp}.
\end{proof}

\section{Analysis of the shifted logarithm}\label{app:shifted}

\begin{proof}[Proof of Theorem~\ref{thm:shifted-comparison}]
Use Lemma~\ref{lem:interp} with
$F(z)=\min\{1,\log(1+\eps z)/L\}$ and cap $Z=(e^L-1)/\eps$; the explicit zero-prefix rule above
handles the origin.  Below the cap put
\[
A(z)=1-F(z)+(1+z)F'(z),\qquad
B(z)=F(z)+z(1+z)F'(z).
\]
Lemma~\ref{lem:interp} bounds expected amortised cost by $A(z)dX+B(z)dY$.

For the comparison to $X$, let $z_0=1/c$ and define
\[
 \phi(z)=z\int_{z_0}^{z}\frac{B(s)}{s^2}\,ds\qquad(0<z\le Z).
\]
Thus $\phi-z\phi'=-B$, so $V(X,Y)=Y\phi(X/Y)$ cancels the $dY$ coefficient.  Direct
differentiation gives
\[
A(z)+\phi'(z)=1+\frac{G(z)}L,
\]
where, with
\[
 I(z)=-\frac{\log(1+\eps z)}z+2\eps\log z+(1-2\eps)\log(1+\eps z),
\]
\[
G(z)=2\eps\log z-2\eps\log(1+\eps z)
 +\frac{2\eps(1+z)}{1+\eps z}-I(z_0).
\]
Also,
\[
G'(z)=\frac{2\eps(1+z)}{z(1+\eps z)^2}>0,
\qquad \lim_{z\to\infty}G(z)=K.
\]
Consequently $A+\phi'\le1+K/L=1+\eps$ below the cap. Above it, extend
$\phi(z)=mz-1$, where $m=(\phi(Z)+1)/Z$; substitution of
$\eps Z=e^L-1$ shows $m\le1+K/L$ (equivalently,
$\eps\log(1-e^{-L})\le1$), while $\phi-z\phi'=-1=-B$. Also,
\[
K=2+2\eps\log(1+c/\eps)+(c-1)\log(1+\eps/c)>2,
\]
so $Z\ge z_0$. Since $B\ge0$, the integral definition gives $\phi(z)\ge0$ for
$z\in[z_0,Z]$, and the continuous linear extension remains nonnegative; hence whenever
$Y\le cX$, integration from the zero initial potential gives
$\mathbb E[\cost]\le(1+\eps)X$.

For the comparison to $Y$, define $\chi(z)=\int_z^Z A(s)ds$ below the cap and $0$ above it,
and set $W(X,Y)=Y\chi(X/Y)$. The $dX$ coefficient cancels, and the remaining coefficient
$g(z)=B(z)+\chi(z)+zA(z)$ satisfies
\[
g'(z)=\frac{2\eps(1+z)}{L(1+\eps z)^2}>0,
\qquad
g(Z)=1+\frac{2\eps Z(1+Z)}{Le^L}=\widehat M(c,\eps).
\]
Above the cap it is $1$, so the fact that $W$ starts at zero and ends nonnegative proves the
unconditional comparison with $Y$.

For fixed $c$, $K=2+O_c(\eps\log(1/\eps))$, hence
$L=2/\eps+O_c(\log(1/\eps))$.  The displayed formula for $\widehat M$ then gives
$\log\widehat M=L-\log\eps-\log L+O(1)=2/\eps+O_c(\log(1/\eps))$.
\end{proof}

Define $M^\star(c,\eps)$ as the infimum of the factors $M$ for which a randomised online algorithm
can satisfy $\mathbb E[\cost]\le M\cost(Y)$ for every deterministic pair obeying
Definition~\ref{def:switching}, together with the $(1+\eps)$ comparison to $X$ on every
input with $Y\le cX$.

\begin{corollary}[asymptotic value of $M^\star$]\label{cor:optimal-rate}
For every fixed $c>1$,
\[
\lim_{\eps\downarrow0}\eps\log M^\star(c,\eps)=2.
\]
\end{corollary}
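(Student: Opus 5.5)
Two inequalities are needed. For the upper bound, I invoke Theorem~\ref{thm:shifted-comparison}. Since $M^\star(c,\eps)\le\widehat M(c,\eps)$ and $\log\widehat M(c,\eps)=2/\eps+O_c(\log(1/\eps))$, it follows that $\limsup_{\eps\downarrow0}\eps\log M^\star(c,\eps)\le2$. This holds for every $c\ge1$, so in particular for fixed $c>1$.

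For the lower bound, I need a family of deterministic pairs $(X,Y)$ satisfying Definition~\ref{def:switching} on which every randomised algorithm with the two comparisons pays $e^{(2-o(1))/\eps}\cost(Y)$. The natural source is the matching construction of Appendix~\ref{app:lb}, which Theorem~\ref{thm:randuniformlb} already turns into a bound $\exp((2-\eta)/\eps)$. By Appendix~\ref{app:matching-switching}, every pair of deterministic matching algorithms satisfies the switching assumptions. I take $X=H_\infty$, the prediction follower, which on each stopping instance $P_j$ costs exactly $\OPT(P_j)=a_j-x$. For $Y$ I take a deterministic algorithm that on the terminal instance $B$ costs $\OPT(B)=x$. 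One candidate is the realisation $H_1$, which matches $x$ to $s_0$ and then shifts left. It costs $x$ on $B$ and $a_j+x$ on $P_j$.

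The hypothesis $\cost(Y)\le c\,\cost(X)$ must hold on the $P_j$ where we need the consistency constraint. For $H_1$ this reads $a_j+x\le c(a_j-x)$, which holds once $a_j\ge x(c+1)/(c-1)$. This is exactly where $c>1$ is used. I would therefore shrink $x$ relative to $a_1$: choose $x$ small, say $x\le (c-1)/(c+1)$ with $a_1=1$, so that every $P_j$ qualifies. Then any admissible randomised algorithm is $(1+\eps)$-consistent relative to $X=\OPT$ on all $P_j$, and Theorem~\ref{thm:canon} applies. Its cost on $B$ is at least $K(q^G)$, while $\cost(Y)=x$ there.

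It remains to rerun Theorems~\ref{thm:greedy} and~\ref{thm:sharpexp} with this general $x$. The first capacity becomes $c_1=\eps(1-x)/(2x)$, which may exceed $\eps/2$. Since $x$ is a constant depending only on $c$, $c_1=O_c(\eps)$, and the greedy vector still fills about $2/(\eps\delta)-O_c(1/\delta)$ geometric coordinates of capacity $\eps\delta/2$. This gives $K(q^G)\ge\eps[(1+\delta)^{L'}-1]-x$ with $L'\log(1+\delta)\ge(2-\eta/3)/\eps-O_{c,\eta}(1)$. Dividing by $x$ yields $M^\star(c,\eps)\ge\exp((2-\eta)/\eps)$ for small $\eps$, so $\liminf\eps\log M^\star\ge2-\eta$ for every $\eta>0$.

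The main obstacle is checking that $M^\star$ is defined over pairs and inputs exactly as used here. The requirement $\mathbb E[\cost]\le(1+\eps)\cost(X)$ is imposed only on inputs with $Y\le cX$, so I must confirm that every $P_j$ qualifies; otherwise the prefix constraints fail for small $j$. If $H_1$ violates $Y\le cX$ on early $P_j$, I would instead use padding, or rescale so that $a_1/x$ is large, and accept the resulting $O_c(1)$ additive loss in the exponent.
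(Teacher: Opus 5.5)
Your proposal is correct and follows essentially the paper's route: the upper limit comes from Theorem~\ref{thm:shifted-comparison}, and the lower limit comes from exhibiting a deterministic matching pair on the family of Appendix~\ref{app:lb} that satisfies $Y\le cX$ on every $P_j$, then applying Theorems~\ref{thm:canon} and~\ref{thm:greedy}. The only difference is the choice of $Y$. The paper keeps $x=\tfrac12$, lets $Y$ consume $s_0$ at stage two, and inflates $a_2=\lambda$. You take $Y=H_1$, which is even optimal on $B$, and shrink $x$ to $(c-1)/(c+1)$, giving $c_1=\eps/(c-1)$. Both choices cost only a leading capacity $O_c(\eps)$, hence an $O_c(1)$ loss in the exponent. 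Your hedging in the last paragraph is therefore unnecessary, but you should extend $H_1$ off the family (for example, by nearest free server) so that it is a genuine online algorithm.
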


\begin{proof}
Theorem~\ref{thm:shifted-comparison} gives the upper limit. We prove the lower limit directly
from the arbitrary-scale family of Appendix~\ref{app:lb}.

Fix $c>1$ and $\zeta\in(0,1)$.  Choose $\delta\in(0,1)$ so small that
$2\log(1+\delta)/\delta\ge2-\zeta/6$, and put
\[
 \lambda=\max\{1+\delta,\,2/(c-1)+2\}.
\]
For all sufficiently small $\eps>0$, so in particular $\eps\le1/2$ and
$2-\eps\lambda>0$, define
\[
 L'=\left\lfloor\frac{2-\eps\lambda}{\eps\delta}\right\rfloor,
 \qquad n=L'+3,
\]
and instantiate the family of Appendix~\ref{app:lb} with
\[
 x=\tfrac12,\qquad a_1=1,\qquad
 a_i=\lambda(1+\delta)^{i-2}\quad(2\le i\le n).
\]
Let $X=\mathrm{Pred}$. Define a deterministic algorithm $Y$ on all inputs as follows. On a prefix
of the constructed family, it follows the prediction at stage one, selects $s_0$ at stage two, and
selects $s_{t-1}$ at every later stage $t$. After the input first departs from this pattern, it
selects a nearest free server on every remaining request. Every pair of
deterministic matching algorithms satisfies Definition~\ref{def:switching} by
Lemmas~\ref{lem:matching-distance}--\ref{lem:tracking}.

For the perfect stopping instances $P_j$, the formulas in Appendix~\ref{app:lb} give
\[
 \cost(X,P_j)=\OPT(P_j)=a_j-x,
 \qquad
 \cost(Y,P_1)=a_1-x,
 \qquad
 \cost(Y,P_j)=a_j-x+2\quad(j\ge2).
\]
For $j\ge2$, $a_j\ge\lambda$, and therefore
\[
 (c-1)(a_j-x)\ge(c-1)\left(\frac2{c-1}+\frac32\right)>2.
\]
Thus $\cost(Y,P_j)\le c\cost(X,P_j)$ for every $j$. On the terminal instance $B$,
\[
 \cost(Y,B)=2a_1-x=\tfrac32,\qquad \OPT(B)=x=\tfrac12.
\]

Consider any algorithm included in the definition of $M^\star(c,\eps)$ with factor $M$. Since
$\cost(Y,P_j)\le c\cost(X,P_j)$, its comparison with $X$ makes the induced matching algorithm
$(1+\eps)$-consistent on all the instances $P_j$. By Theorems~\ref{thm:canon}
and~\ref{thm:greedy}, its expected cost on $B$ is at least the value of this mixture.
The incremental capacities are
\[
 c_1=\frac\eps2,\qquad c_2=\frac{\eps(\lambda-1)}2,\qquad
 c_i=\frac{\eps\delta}2\quad(i\ge3).
\]
The definition of $L'$ gives $c_1+c_2+L'\eps\delta/2\le1$, so greedy fills coordinates
$3,\ldots,L'+2$ completely.  Dropping all other nonnegative contributions yields
\begin{align*}
 \mathbb E[\cost(B)]
 &\ge \sum_{i=3}^{L'+2}\frac{\eps\delta}{2}(2a_{i-1}-x)\\
 &=\eps\lambda\bigl((1+\delta)^{L'}-1\bigr)
   -\frac{\eps\delta x}{2}L'\\
 &\ge\eps\bigl((1+\delta)^{L'}-1\bigr)-\frac12.
\end{align*}
After division by $\OPT(B)=1/2$, the ratio on $B$ is at least
$2\eps(1+\delta)^{L'}-2$. Also,
\[
 L'\log(1+\delta)
 \ge \frac{2-\zeta/6}{\eps}
     -\lambda\frac{\log(1+\delta)}\delta-\log2.
\]
Since $1/\eps$ dominates $|\log\eps|$ and the remaining terms depend only on $c,\zeta$, for all
sufficiently small $\eps$ this ratio is at least
$3\exp((2-\zeta)/\eps)$. The unconditional comparison with $Y$ bounds the same
ratio by $M\cost(Y,B)/\OPT(B)=3M$.  Hence
$M\ge\exp((2-\zeta)/\eps)$.  Letting $\zeta\downarrow0$ proves the lower limit.
\end{proof}

At the endpoint, if $c\le1+\eps$, always following $Y$ gives $M^\star(c,\eps)=1$; the lower
bound $M\ge1$ follows, for example, from a one-request problem with unavoidable positive cost and
$X=Y$. The transition between the two cases occurs when $c-1=\Theta(\eps)$.

\section{Verification of the switching assumptions for metric matching}\label{app:matching-switching}

We verify Definition~\ref{def:switching} for the matching application used in
Corollaries~\ref{cor:randfixedband} and~\ref{cor:symbolicrho}.  Server copies are labelled, so all
sets below are labelled multisets.

\begin{lemma}[prefix matching distance]\label{lem:matching-distance}
Let $X,Y$ be deterministic online matching algorithms on the same instance, and let
$F_t^X,F_t^Y$ be their residual free-server multisets after prefix $t$.  Their minimum matching
distance satisfies $D(F_t^X,F_t^Y)\le X_t+Y_t$.
\end{lemma}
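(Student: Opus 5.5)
The proof is a direct construction of a matching between $F_t^X$ and $F_t^Y$ whose cost is at most $X_t+Y_t$. Write $U$ for the common labelled server multiset. For $i\le t$, let $u_i$ be the server copy consumed by $X$ on request $i$ and $v_i$ the copy consumed by $Y$. Each algorithm consumes distinct copies, so the consumed sets are $C^X=\{u_1,\dots,u_t\}$ and $C^Y=\{v_1,\dots,v_t\}$. The residual sets are $F_t^X=U\setminus C^X$ and $F_t^Y=U\setminus C^Y$, both of size $|U|-t$. The pairing $u_i\leftrightarrow v_i$ through the common request $r_i$ has cost
\[
d(u_i,v_i)\le d(u_i,r_i)+d(r_i,v_i)=x_i+y_i .
\]
Summed over $i$, this gives total weight at most $X_t+Y_t$, and it remains to convert this pairing of the consumed sets into a perfect matching of the complements.

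For the conversion, form a bipartite multigraph $G$ with one copy of $U$ on the left (the $X$ side) and one on the right (the $Y$ side). For each $i$, add an edge from $u_i$ on the left to $v_i$ on the right with weight $d(u_i,v_i)$. For each label $s$, add a zero-weight identity edge from $s$ on the left to $s$ on the right. Then delete the identity edges at the consumed copies: those at $u_i$ on the left and at $v_i$ on the right. In the resulting graph, every vertex of $C^X$ on the left and every vertex of $C^Y$ on the right has degree $1$, carrying only its pairing edge. Every vertex of $F_t^X$ on the left and every vertex of $F_t^Y$ on the right also has degree $1$, now carrying only its identity edge. The exception is a copy consumed on one side and free on the other: it carries a pairing edge on the consumed side and an identity edge on the free side.

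The cleaner way to see the structure is to view the union of the pairing edges and the identity edges on the label set $U$ itself. Each label has degree at most $2$: at most one pairing edge in which it is an $X$-endpoint, at most one in which it is a $Y$-endpoint, and possibly its identity contribution. Every component is therefore a path or a cycle. An alternating walk starts at a label $s\in F_t^X\setminus F_t^Y$, so $s$ was consumed by $Y$, say $s=v_i$. The walk moves to $u_i$. If $u_i$ is free for $Y$, the walk stops; otherwise $u_i=v_j$ for some $j$, and the walk continues to $u_j$. Because the $u_i$ are distinct and so are the $v_i$, the walk never revisits a label and terminates at a label in $F_t^Y\setminus F_t^X$. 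This is the same alternating-path argument used for the free comparison server in Theorem~\ref{thm:robustness}.

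Distinct starting labels give vertex-disjoint walks, since each step is determined by an injective map. Hence the walks define a bijection between $F_t^X\setminus F_t^Y$ and $F_t^Y\setminus F_t^X$. By the triangle inequality, each matched pair costs at most the length of its walk. The walks use disjoint pairing edges, so their total length is at most $\sum_i d(u_i,v_i)\le X_t+Y_t$. Matching every common free copy to itself at cost $0$ completes a perfect matching of $F_t^X$ to $F_t^Y$ of cost at most $X_t+Y_t$, and hence $D(F_t^X,F_t^Y)\le X_t+Y_t$.

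The main obstacle is the bookkeeping that keeps the walks simple and disjoint when labelled copies share locations. Working with labels rather than locations handles this: injectivity holds at the level of labels, and coinciding locations only affect the distances, never the combinatorics.
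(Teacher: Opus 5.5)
Your proof is correct and takes essentially the paper's route: you pair the copies consumed by $X$ and $Y$ on each request, bound each pair by $x_i+y_i$ through the common request $r_i$, and then pass to the free sets by triangle-inequality shortcutting through labels common to both sides. The only difference is packaging. The paper proves the general identity $D(A,B)=D(S\setminus A,S\setminus B)$ by shortcutting an optimal matching until common labels are self-matched and observing that complementation swaps the symmetric differences, whereas you carry out the same shortcutting explicitly, as alternating chains from $F_t^X\setminus F_t^Y$ to $F_t^Y\setminus F_t^X$ along the specific pairing.
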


\begin{proof}
Pair the server consumed by $X$ on request $i$ with the server consumed by $Y$ on the same request.
The triangle inequality bounds this pair by $x_i+y_i$, so the consumed multisets have minimum
matching cost at most $X_t+Y_t$.

It remains to pass to complements inside the common initial labelled multiset $S$.  For equal-size
labelled submultisets $A,B\subseteq S$, minimum matching distance satisfies
$D(A,B)=D(S\setminus A,S\setminus B)$.  Indeed, common labels may be matched to themselves in an
optimum: if a common label is sent elsewhere and another edge enters it, shortcut the two edges by
the triangle inequality, and iterate.  The optimum then matches only the two symmetric differences;
complementation swaps those differences, without changing their minimum matching distance. Applying this
to the consumed multisets gives the claim for the free multisets.
\end{proof}

\begin{lemma}[reference tracking for matching]\label{lem:tracking}
A feasible matching algorithm can track either reference with a nonnegative tracking
potential. Serving the server paired with the reference server satisfies property (S), and composing the
bijection with a minimum matching between the two reference residual sets satisfies
property (T).
\end{lemma}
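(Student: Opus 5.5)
The proof constructs the tracking algorithm explicitly, with a bijection as its state. Let $F^A_t$ be the actual residual labelled multiset and $F^R_t$ that of the tracked reference $R\in\{X,Y\}$. We maintain a bijection $\pi_t\colon F^R_t\to F^A_t$ and set the tracking potential
\[
\Psi_t=\sum_{u\in F^R_t}d\bigl(u,\pi_t(u)\bigr)\ge0 .
\]
Initially both residual sets equal $S$, we take $\pi_0$ to be the identity, and so $\Psi_0=0$.

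For property (S), suppose the reference serves $r_t$ with a server $u\in F^R_{t-1}$ at cost $\rho_t=d(r_t,u)$. The actual algorithm serves $r_t$ with $\pi_{t-1}(u)$, which is free because $\pi_{t-1}$ maps into $F^A_{t-1}$. Feasibility therefore holds automatically. We then restrict $\pi$ to $F^R_{t-1}\setminus\{u\}$, which is again a bijection onto $F^A_{t-1}\setminus\{\pi_{t-1}(u)\}$. The actual cost is at most $d(r_t,u)+d(u,\pi_{t-1}(u))$ by the triangle inequality, and $\Psi$ drops by exactly $d(u,\pi_{t-1}(u))$. Hence actual cost plus $\Delta\Psi$ is at most $\rho_t$, which is (S). Labelled copies at coincident locations need no special treatment, since everything is stated on labels.

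For property (T), suppose we switch after prefix $t$ from tracking $X$ to tracking $Y$. Take a minimum-cost bijection $\mu\colon F^Y_t\to F^X_t$, whose cost is $D_t=D(F^X_t,F^Y_t)$, and define the new bijection $\pi'=\pi\circ\mu$. The online state $F^A_t$ is unchanged. By the triangle inequality, $d(v,\pi(\mu(v)))\le d(v,\mu(v))+d(\mu(v),\pi(\mu(v)))$. Summing over $v$ gives $\Psi'\le D_t+\Psi$, so the increase is at most $D_t$. Lemma~\ref{lem:matching-distance} then gives $D_t\le X_t+Y_t$. The same argument applies symmetrically when switching from $Y$ to $X$.

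Finally we check the requirements of Definition~\ref{def:switching}. The map $\mu$ is chosen deterministically by a fixed tie-breaking rule, so the implementation is uniform across inputs and mode schedules and is valid after any finite sequence of switches: each operation preserves the invariant that $\pi$ is a bijection from the tracked reference's residual set onto the actual residual set. Switches before and after service are both covered because the invariant holds at every prefix boundary. The main point needing care is this invariant under arbitrary interleavings of switches, together with the post-service switch using $D_t$ at the updated prefix. I would state the invariant explicitly and verify it by induction over operations.
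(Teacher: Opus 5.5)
Your proposal is correct and matches the paper's proof: a bijection between the actual and tracked residual multisets, with potential equal to the total displacement; serving via the paired server gives (S), and composing with a minimum matching together with Lemma~\ref{lem:matching-distance} gives (T). The reversed direction of $\pi$ makes no difference. Your explicit invariant, including the initial identity bijection with $\Psi_0=0$, spells out details the paper leaves implicit.
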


\begin{proof}
Maintain a bijection $\pi$ from the actual free multiset to the current reference free multiset,
with potential $\Psi=\sum_a d(a,\pi(a))$.  If the reference serves request $r$ using $s$, the
combined algorithm serves it using $a=\pi^{-1}(s)$ and deletes the pair, after which
\[
d(r,a)\le d(r,s)+d(s,a),
\]
while the potential decreases by $d(s,a)$, proving (S). To change references, compose $\pi$ with a
minimum matching between their residual sets. This composition changes no previous assignment,
and the triangle inequality increases $\Psi$ by at most the matching distance, which is at most
$X_t+Y_t$ by Lemma~\ref{lem:matching-distance}; hence it proves (T).
\end{proof}

{\small
\bibliographystyle{plainurl}
\bibliography{refs}
}

\end{document}